\newtheorem{Theorem}{Theorem}[section]
\newtheorem{Lemma}{Lemma}[section]
\newtheorem{Definition}{Definition}[section]
\newtheorem{Example}{Example}[section]
\makeatletter \@addtoreset{equation}{section} \makeatother
\begin{document}
	
	\title{The dual codes of two classes of LCD BCH codes \let\thefootnote\relax\footnotetext{E-Mail addresses: yuqingfu@mails.ccnu.edu.cn (Y. Fu), hwliu@ccnu.edu.cn (H. Liu).}}
	\author{ Yuqing Fu,~ Hongwei Liu}
	\date{\small School of Mathematics and Statistics, Central China Normal University, Wuhan, 430079, China}
	\maketitle
	{\noindent\small{\bf Abstract:} Cyclic BCH codes and negacyclic BCH codes form important subclasses of cyclic codes and negacyclic codes, respectively, and can produce optimal linear codes in many cases. To the best of our knowledge, there are few results on the dual codes of cyclic and negacyclic BCH codes. In this paper, we study the dual codes of narrow-sense cyclic BCH codes of length $q^m+1$ over a finite field $\mathbb{F}_q$, where $q$ is an odd prime power, and the dual codes of narrow-sense negacyclic BCH codes of length $\frac{q^{m}+1}{2}$ over $\mathbb{F}_q$, where $q$ is an odd prime power satisfying $q\equiv 3~({\rm mod}~4)$. Some lower bounds on the minimum distances of the dual codes are established, which are very close to the true minimum distances of the dual codes in many cases. Sufficient and necessary conditions  for the even-like subcodes of narrow-sense cyclic BCH codes of length $q^{m}+1$ being cyclic dually-BCH codes are given in terms of designed distances, where $q$ is odd and $m$ is odd or $m\equiv 2~({\rm mod~}4)$. The concept of negacyclic dually-BCH codes is proposed, and sufficient and necessary conditions in terms of designed distances are presented to ensure that narrow-sense negacyclic BCH codes of length $\frac{q^{m}+1}{2}$ are dually-BCH codes, where $q\equiv 3~({\rm mod}~4)$.}
	
	\vspace{1ex}
	{\noindent\small{\bf Keywords:}
		BCH code; dually-BCH code; cyclic code; negacyclic code; minimum distance.}
	
	2020 \emph{Mathematics Subject Classification}:  94B05, 94B60
	\section{Introduction}
	Let $q$ be a prime power and $\mathbb{F}_{q}$ be the finite field with $q$ elements. An $[n,k,d]$ linear code 
	$\mathcal{C}$ over $\mathbb{F}_{q}$ is a $k$-dimensional subspace of $\mathbb{F}_{q}^{n}$ with minimum distance $d$. The dual code of $\mathcal{C}$, denoted by $\mathcal{C}^{\perp}$, is defined as 
	$$\mathcal{C}^{\perp}=\big\{{\bf x}\in \mathbb{F}_{q}^{n}~\big|~{\bf x}\cdot {\bf c}=0~{\rm for~all}~{\bf c}\in \mathcal{C}\big\},$$
	where ${\bf x}\cdot {\bf c}$ is the standard inner product of the two vectors ${\bf x}$ and ${\bf c}$. Denote by $\mathbb{F}_q^{*}$ the multiplicative group of $\mathbb{F}_q$ and take $\lambda\in \mathbb{F}_q^{*}$. A linear code $\mathcal{C}$ is said to be $\lambda$-constacyclic if $(c_{0},c_{1},\cdots,c_{n-1})\in \mathcal{C}$ implies $(\lambda c_{n-1},c_{0},\cdots,c_{n-2})\in \mathcal{C}$. Such a code is called cyclic if $\lambda=1$, and negacyclic if $\lambda=-1$. As usual, $\mathbb{F}_{q}[x]$ denotes the polynomial ring over $\mathbb{F}_{q}$ and $\mathcal{R}_{n,\lambda}:=\mathbb{F}_{q}[x]/\langle x^{n}-\lambda\rangle$ denotes the quotient ring of $\mathbb{F}_{q}[x]$ with respect to the ideal $\langle x^{n}-\lambda\rangle$. By identifying each vector $(a_{0},a_{1},\cdots,a_{n-1})\in \mathbb{F}_{q}^{n}$ with a polynomial $$a_{0}+a_{1}x+\cdots+a_{n-1}x^{n-1}\in \mathcal{R}_{n,\lambda},$$ a linear code $\mathcal{C}$ of length $n$ over $\mathbb{F}_{q}$ corresponds to an $\mathbb{F}_{q}$-subspace of the ring $\mathcal{R}_{n,\lambda}$. Moreover, $\mathcal{C}$ is $\lambda$-constacyclic if and only if the corresponding subspace is an ideal of $\mathcal{R}_{n,\lambda}$. Let $\mathcal{C}$ be a $\lambda$-constacyclic code of length $n$ over $\mathbb{F}_{q}$, then $\mathcal{C}=\langle g(x)\rangle$, where $g(x)$ is a monic divisor of $x^{n}-\lambda$ with minimum degree among all the generators of $\mathcal{C}$. Such polynomial $g(x)$ is unique and is called the generator polynomial of $\mathcal{C}$. 
	
	Let $n$ be a positive integer coprime to $q$. Denote by $r$ the order of $\lambda$ in $\mathbb{F}_q^{*}$. There exists a primitive $rn$-th root of unity $\beta$ in some extension field $\mathbb{F}$ of $\mathbb{F}_q$ such that $\beta^n=\lambda$. Let $\mathbb{Z}_n$ be the residue ring of the integer ring $\mathbb{Z}$ modulo $n$. The set of roots of $x^n-\lambda$ in $\mathbb{F}$ corresponds to a subset $1+r\mathbb{Z}_{rn}$ of the residue ring $\mathbb{Z}_{rn}$, which is defined as follows:
	$$1+r\mathbb{Z}_{rn}=\{1+ri\mid 0\leq i\leq n-1\}.$$
	And,
	$$x^n-\lambda=\prod_{j\in 1+r\mathbb{Z}_{rn}}(x-\beta^j).$$
	The set $T=\{j\in 1+r\mathbb{Z}_{rn}~|~g(\beta^{j})=0\}$ is referred to as the defining set of $\mathcal{C}$ with respect to $\beta$. For integers $b=1+ri'$ and $\delta$ with $i'\in \mathbb{Z}$ and $2\leq \delta \leq n$, let $\mathcal{C}_{(q,n,\lambda, \delta,b)}$ denote the $\lambda$-constacyclic code of length $n$ over $\mathbb{F}_{q}$ with defining set 
	$$T=\cup_{i=0}^{\delta-2}C_{b+ri}^{(q,rn)},\eqno{(1)}$$
	where $C_{b+ri}^{(q,rn)}$ is the $q$-cyclotomic coset modulo $rn$ containing $b+ri$. The code $\mathcal{C}_{(q,n,\lambda,\delta,b)}$ is called a $\lambda$-constacyclic BCH code with designed distance $\delta$ with respect to $\beta$. If $b=1$, $\mathcal{C}_{(q,n,\lambda,\delta,b)}$ is called narrow-sense, and non-narrow-sense, otherwise. For narrow-sense cyclic BCH code $\mathcal{C}_{(q,n,1,\delta,1)}$, the code $\mathcal{C}_{(q,n,1,\delta+1,0)}$ is exactly the even-like subcode of $\mathcal{C}_{(q,n,1,\delta,1)}$.
	
	
	The most extensively studied types of constacyclic BCH codes are cyclic BCH codes, which form a special class of cyclic codes and are widely used in compact discs, digital audio tapes and other data storage systems to improve data reliability. In the past several decades, the parameters of cyclic BCH codes including the dimensions, the minimum distances and the weight distributions have been the hot issues of research. Until now, we have very limited knowledge about the duals of cyclic BCH codes over $\mathbb{F}_q$. In \cite{gdl} and \cite{wwlw}, the authors showed several lower bounds on the minimum distances of the duals of narrow-sense cyclic BCH codes of lengths $q^m-1$, $\frac{q^m-1}{q-1}$ and $\frac{q^m-1}{q+1}$, respectively. Wang et al. \cite{wxz} generalized the results in \cite{gdl}. They developed lower bounds on the minimum distances of the duals of narrow-sense cyclic BCH codes of length $\frac{q^m-1}{N}$, where $N$ is a positive integer satisfying $N\mid (q-1)$ or $N=q^s-1$ with $s\mid m$. The question whether the dual code of a cyclic BCH code is BCH is in general very hard to answer. In order to further study the properties of the duals of cyclic BCH codes, the authors in \cite{gdl} proposed the concept of cyclic dually-BCH code: A cyclic BCH code is called a dually-BCH code if both the cyclic BCH code and its dual are BCH codes with respect to an $n$-th primitive root of unity $\beta$. In \cite{gdl,wwlw,wxz}, the authors gave sufficient and necessary conditions for narrow-sense cyclic BCH codes being cyclic dually-BCH codes, where the lengths of the considered codes are $q^m-1$, $\frac{q^m-1}{q-1}$, $\frac{q^m-1}{q+1}$ and $\frac{q^m-1}{N}$, where $N$ is as mentioned above.

	Negacyclic BCH codes are another interesting types of constacyclic BCH codes. In some cases, negacyclic BCH codes have better parameters than cyclic BCH codes \cite{wsd}. A lot of quantum codes with good parameters have been constructed from negacyclic BCH codes \cite{gllw,kz,kzt,zsl}. Zhu et al. \cite{zsl} determined the dimensions of negacyclic BCH codes of length $\frac{q^{2m}-1}{q-1}$ with designed distance $\delta\leq q^m+2$. Pang et al. \cite{pzs} studied three classes of negacyclic BCH codes with designed distance in some ranges and found many optimal codes. Recently, Wang et al. \cite{wsd} analysed the parameters of negacyclic BCH codes of lengths $\frac{q^m-1}{2}$ and $\frac{q^m+1}{2}$ with small and large dimensions, and determined the minimum distances of negacyclic BCH codes of lengths $\frac{q^m-1}{2}$ and $\frac{q^m+1}{2}$ with generator polynomials being an irreducible polynomial or the product of two irreducible polynomials. 
	
	A linear code $\mathcal{C}$ over $\mathbb{F}_q$ is called an LCD (linear complementary dual) code provided $\mathcal{C}\cap \mathcal{C}^{\perp}={\bf 0}$. LCD codes over finite fields have important application in cryptography to resist side-channel attacks and fault non-invasive attacks \cite{bccgh, cg}. \cite{ldl1} and \cite{pzs} proved that cyclic codes of length $q^m+1$ over $\mathbb{F}_q$ and negacyclic codes of length $\frac{q^m+1}{2}$ over $\mathbb{F}_q$ are LCD codes, respectively.

	Inspired and motivated by the above works, in this paper we study the duals of narrow-sense cyclic BCH codes of length $q^m+1$ and the duals of narrow-sense negacyclic BCH codes of length $\frac{q^m+1}{2}$ over $\mathbb{F}_q$. The rest of this paper is organized as follows. Section 2 presents some notions and results that will be used later. Section 3 develops lower bounds on the minimum distances of the dual codes of narrow-sense cyclic BCH codes of length $q^m+1$ over $\mathbb{F}_q$, where $q$ is an odd prime power, and gives sufficient and necessary conditions in terms of the designed distances for the even-like subcodes of narrow-sense cyclic BCH codes of length $q^m+1$ over $\mathbb{F}_q$ being cyclic dually-BCH codes, where $q$ is an odd prime power and $m$ is odd or $m\equiv 2~({\rm mod}~4)$. Section 4 establishes lower bounds on the minimum distances of the dual codes of narrow-sense negacyclic BCH codes of length $\frac{q^m+1}{2}$ over $\mathbb{F}_q$, and present sufficient and necessary conditions in terms of the designed distances to ensure that narrow-sense negacyclic BCH codes of length $\frac{q^m+1}{2}$ over $\mathbb{F}_q$ are negacyclic dually-BCH codes, where $q$ is an odd prime power satisfying $q\equiv 3~({\rm mod}~4)$. Section 5 concludes the paper.	
	

	\section{Preliminaries}
	In this section, we present some basic concepts and results that will be used later. Starting from now on, we adopt the following notions unless otherwise stated:
	
	\begin{itemize}
		\item $\mathbb{F}_q$ is the finite field with $q$ elements.
		
		\item $n$ is a positive integer coprime to $q$. $\mathbb{Z}_n^{*}$ denotes the group of units in $\mathbb{Z}_n$.
		
		\item $r$ is the order of $\lambda$ in $\mathbb{F}_q^{*}$ and $l={\rm ord}_{rn}(q)$ is the order of $q$ in  $\mathbb{Z}_{rn}^{*}$. $\beta$ is a $rn$-th primitive root of unity in $\mathbb{F}_{q^l}$ such that $\beta^n=\lambda$.
		
		\item For integer $i$ with $0\leq i\leq n-1$, the $q$-cyclotomic coset of $1+ri$ modulo $rn$ is defined by
		$$C_{1+ri}^{(q,rn)}=\big\{1+ri, (1+ri)q, (1+ri)q^2, \cdots, (1+ri)q^{k_i-1}\big\}~{\rm mod}~rn\subseteq 1+r\mathbb{Z}_{rn},$$
		where $k_i$ is the least positive integer such that $(1+ri)q^{k_i}\equiv 1+ri~({\rm mod}~rn)$. The smallest integer in $C_{1+ri}^{(q,rn)}$ is called the coset leader of $C_{1+ri}^{(q,rn)}$.
		
		\item For integer $\delta$ with $2\leq \delta\leq n-1$, $\mathcal{C}_{(q,n,\lambda,\delta,1)}$ denotes the narrow-sense $\lambda$-constacyclic BCH code of length $n$ over $\mathbb{F}_q$ with defining set $T=\cup_{i=0}^{\delta-2}C_{1+ri}^{(q,rn)}$ with respect to $\beta$.
		
		\item $T^{\perp}$ is the defining set of the dual code $\mathcal{C}_{(q,n,\lambda,\delta,1)}^{\perp}$ with respect to $\beta$.
		
		\item For integer $a\in 1+r\mathbb{Z}_{rn}$ and integer $t$ with $0\leq t< {\rm ord}_{rn}(q)$, $[aq^t]_{rn}$ denotes the integer in $1+r\mathbb{Z}_{rn}$ which is congruent to $aq^t$ modulo $rn$.
		
		\item $[u,v]:=\{u,u+1,\cdots,v\}$, where $u,v$ are nonnegative integers with $u\leq v$.
		
		\item $\delta_i$ is the $i$-th largest coset leader modulo $q^m+1$, and $\phi_i$ is the $i$-th largest odd coset leader modulo $q^m+1$.
	\end{itemize} 
	
	Constacyclic codes over finite fields have the following BCH bound.
	\begin{Lemma}{\rm \cite{ks}} {\rm \bf(BCH bound for constacyclic codes)}\label{l2.1}
		Let $\mathcal{C}$ be a $\lambda$-constacyclic code of length $n$ over $\mathbb{F}_{q}$ with defining set $T$. If $\{1+ri\mid h\leq i\leq h+\delta-2\}\subseteq T$, where $h\in \mathbb{Z}$, then $\mathcal{C}$ has minimum distance $d\geq \delta$. 
	\end{Lemma}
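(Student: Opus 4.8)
The plan is to adapt the classical Vandermonde-determinant argument behind the ordinary BCH bound to the constacyclic setting, exploiting the fact that $\beta$ is a primitive $rn$-th root of unity with $\beta^n=\lambda$. First I would argue by contradiction. Suppose $\mathcal{C}$ contains a nonzero codeword $c(x)=\sum_{s=1}^{w}c_{i_s}x^{i_s}$ of Hamming weight $w$, where $c_{i_s}\neq 0$ and $0\leq i_1<i_2<\cdots<i_w\leq n-1$, and assume for contradiction that $w\leq \delta-1$. Membership $c(x)\in\mathcal{C}$ is equivalent to $c(\beta^{j})=0$ for every $j\in T$, and in particular for the $\delta-1$ consecutive defining-set elements $j=1+r(h+t)$ with $t=0,1,\ldots,\delta-2$.

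Next I would select the first $w$ of these parity-check equations and rewrite each by factoring the exponent. Since $\beta^{(1+r(h+t))i_s}=\beta^{(1+rh)i_s}\bigl(\beta^{ri_s}\bigr)^{t}$, the substitution $a_s=c_{i_s}\beta^{(1+rh)i_s}$ and $y_s=\beta^{ri_s}$ transforms the system into $\sum_{s=1}^{w}a_s y_s^{\,t}=0$ for $t=0,1,\ldots,w-1$. This is a homogeneous linear system in $(a_1,\ldots,a_w)$ whose coefficient matrix is the $w\times w$ Vandermonde matrix $\bigl(y_s^{\,t}\bigr)$, with determinant $\prod_{1\leq s<s'\leq w}(y_{s'}-y_s)$.

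The crux, and the one place where the constacyclic structure is genuinely used, is to verify that the nodes $y_s=\beta^{ri_s}$ are pairwise distinct, so that this Vandermonde determinant is nonzero. Because $\beta$ has order $rn$, the element $\beta^{r}$ has order exactly $rn/r=n$, i.e.\ it is a primitive $n$-th root of unity; hence $y_s=y_{s'}$ would force $i_s\equiv i_{s'}\pmod{n}$, which is impossible for distinct $i_s,i_{s'}\in[0,n-1]$. With the determinant nonzero, the system admits only the trivial solution $a_1=\cdots=a_w=0$. But each $a_s=c_{i_s}\beta^{(1+rh)i_s}$ is the product of the nonzero scalar $c_{i_s}$ and a power of the nonzero element $\beta$, so $a_s\neq 0$, a contradiction. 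Hence no nonzero codeword of weight at most $\delta-1$ can exist, and therefore $d\geq\delta$. I expect the only delicate step to be the bookkeeping that $\beta^{r}$ is a primitive $n$-th root of unity—precisely the point at which $\beta^{n}=\lambda$ and $\mathrm{ord}_{\mathbb{F}_q^{*}}(\lambda)=r$ enter the argument—while the remainder is the standard Vandermonde mechanism.
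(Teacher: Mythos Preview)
Your argument is correct and is the standard Vandermonde proof of the BCH bound, properly adapted to the constacyclic case. The paper itself does not prove this lemma: it is quoted from \cite{ks} without proof, so there is no ``paper's own proof'' to compare against. One small remark: the fact that $\beta^{r}$ is a primitive $n$-th root of unity follows already from $\beta$ having order $rn$ (so $\beta^{r}$ has order $rn/\gcd(r,rn)=n$); the conditions $\beta^{n}=\lambda$ and $\mathrm{ord}(\lambda)=r$ are what guarantee such a $\beta$ exists and that the defining set lives in $1+r\mathbb{Z}_{rn}$, but you do not need to invoke them again at the distinctness step.
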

	
	It is a very hard question when is the dual code of a negacyclic BCH code still a BCH code, although it may be answered for some special negacyclic BCH codes. In order to further investigate the duals of negacyclic BCH codes, we now extend the definition of cyclic dually-BCH codes proposed in \cite{gdl} to negacyclic dually-BCH codes.
	
	\begin{Definition}{\rm \bf(Negacyclic dually-BCH code)}
		If the dual of a negacyclic BCH code $\mathcal{C}$ over $\mathbb{F}_q$ with respect to a $2n$-th primitive root of unity $\beta$ is still a BCH code with respect to $\beta$, then $\mathcal{C}$ is called a dually-BCH code.
	\end{Definition}
	
	The following two lemmas on coset leaders modulo $q^m+1$ will play an important role in the sequel.
		\begin{Lemma}{\rm \cite{ylly,zswh}}\label{l4.0}
		Suppose $q$ is an odd prime power. Let $\delta_1$ and $\delta_2$ be the first two largest coset leaders modulo $q^m+1$, respectively. Then $\delta_{1}=\frac{q^m+1}{2}$ with $C_{\delta_1}=\{\delta_1\}$ and
		
		{\rm (1)} $\delta_2=\frac{(q-1)(q^m+1)}{2(q+1)}$ if $m$ is odd;
		
		{\rm (2)} $\delta_2=\frac{(q-1)^2(q^m+1)}{2(q^2+1)}$ if $m\equiv 2~({\rm mod}~4)$.
	\end{Lemma}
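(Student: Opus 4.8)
The plan is to work throughout modulo $n=q^m+1$ and to exploit the single structural fact that $q^m\equiv-1\pmod n$, so that multiplication by $q^m$ is the involution $a\mapsto n-a$. Consequently every $q$-cyclotomic coset modulo $n$ is stable under $a\mapsto n-a$, so the leader of any nonzero coset is at most $n/2$, with equality precisely for the singleton coset $\{n/2\}$ (note $q\cdot\frac n2\equiv\frac n2$ since $q$ is odd). This immediately gives $\delta_1=\frac{q^m+1}{2}$ with $C_{\delta_1}=\{\delta_1\}$. The symmetry also yields the reformulation I will use repeatedly: for $0<a<n/2$, the integer $a$ is a coset leader if and only if the whole coset lies in $[a,n-a]$, i.e. $[aq^{j}]_{n}\in[a,n-a]$ for every $j\ge 0$.

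First I would check that the claimed $\delta_2$ is genuinely a coset leader in each case. For $m$ odd, $(q+1)\delta_2=\frac{q-1}{2}\,n\equiv0\pmod n$, hence $q\delta_2\equiv-\delta_2\equiv n-\delta_2$; thus $C_{\delta_2}=\{\delta_2,\,n-\delta_2\}$ and, since $\delta_2<n/2$, its leader is $\delta_2$. For $m\equiv2\pmod4$ one has $q^2+1\mid q^m+1$, and $(q^2+1)\delta_2=\frac{(q-1)^2}{2}\,n\equiv0\pmod n$, so $q^2\delta_2\equiv-\delta_2$; the coset is then $\{\delta_2,[\delta_2 q]_n,n-\delta_2,n-[\delta_2 q]_n\}$, and a short comparison shows $\delta_2$ is its least element, so again $\delta_2$ is a coset leader.

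The heart of the matter is the upper bound, for which I would isolate the following elementary \emph{interval lemma}: for an odd prime power base $p$ coprime to $n$ and $0<a<n/2$, one has $[ap]_{n}\in[a,n-a]$ if and only if $\tfrac{kn}{p-1}\le a\le\tfrac{(k+1)n}{p+1}$ with $k=\lfloor pa/n\rfloor$, and non-emptiness of this band forces $k\le\frac{p-1}{2}$. Since the bands for $k=0,\dots,\frac{p-3}{2}$ have right endpoints at most $\frac{(p-1)n}{2(p+1)}$ while the band $k=\frac{p-1}{2}$ degenerates to the single point $n/2$, any $a<n/2$ with $[ap]_n\in[a,n-a]$ satisfies $a\le\frac{(p-1)n}{2(p+1)}$. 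For $m$ odd I then take $j=1$ (i.e. $p=q$): a coset leader $a<n/2$ must satisfy $[aq]_n\in[a,n-a]$, hence $a\le\frac{(q-1)n}{2(q+1)}=\delta_2$; combined with the previous paragraph this proves part (1).

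For $m\equiv2\pmod4$ the $j=1$ bound only gives $a\le\frac{(q-1)n}{2(q+1)}$, which is strictly larger than the target $\frac{(q-1)^2(q^m+1)}{2(q^2+1)}$, so I must bring in $j=2$ as well. Applying the interval lemma with $p=q^2$ (again an odd prime power, and consistent with $q^2\delta_2\equiv-\delta_2$) gives a second family of admissible bands, and a coset leader must lie in the intersection of the $q$-bands and the $q^2$-bands; the claim becomes that this intersection below $n/2$ has maximum exactly $\delta_2$. I expect this intersection bookkeeping to be the main obstacle: the topmost $q^2$-band alone reaches up to $\frac{(q^2-1)n}{2(q^2+1)}>\delta_2$, so one really needs the $q$-constraint to trim it down to $\delta_2$. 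Equivalently, writing $Q=q^2$ and $m=2M$ with $M$ odd so that $n=Q^{M}+1$, each $q$-coset is a union $C_a^{(Q)}\cup C_{[aq]_n}^{(Q)}$ of two $Q$-cosets, and the task is to show that every $Q$-coset leader lying in $(\delta_2,n/2)$ --- in particular the largest one, $\frac{(q^2-1)n}{2(q^2+1)}$, which gets merged with the $Q$-coset of $\delta_2$ --- is absorbed into a $q$-coset whose leader does not exceed $\delta_2$. I anticipate that a base-$q$ digit analysis of the candidate values, tracking how the twisted shift $a\mapsto[aq]_n$ acts on these narrow windows, is what makes this absorption step go through.
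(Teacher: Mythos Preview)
This lemma is quoted from \cite{ylly,zswh} and the paper supplies no proof of its own, so there is no in-paper argument to compare against. On its merits your proposal is sound: the treatment of $\delta_1$ and of part~(1) is complete and correct, and the interval lemma (that $[ap]_n\in[a,n-a]$ forces $a$ into one of the bands $[\tfrac{kn}{p-1},\tfrac{(k+1)n}{p+1}]$ with $k\le\tfrac{p-1}{2}$) is exactly the right tool.

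For part~(2) your strategy is right but you stop short and overestimate the remaining work. No base-$q$ digit analysis is needed: the intersection of the $q$-bands with the $q^2$-bands already pins down the answer. Indeed, the $q^2$-band with index $j=\tfrac{(q-1)^2}{2}-1$ has right endpoint exactly $\delta_2=\tfrac{(q-1)^2n}{2(q^2+1)}$, while the next $q^2$-band, with index $j=\tfrac{(q-1)^2}{2}$, has left endpoint
\[
\frac{(q-1)^2\,n}{2(q^2-1)}=\frac{(q-1)n}{2(q+1)}=R_q,
\]
which coincides with the right endpoint of the topmost $q$-band. Thus the open interval $(\delta_2,R_q)$ meets no $q^2$-band whatsoever, and since $R_q$ is not an integer when $m$ is even (because $q^m+1\equiv 2\pmod{q+1}$, so $R_q$ has fractional part $\tfrac{q-1}{q+1}$), no integer in $(\delta_2,n/2)$ can lie in both a $q$-band and a $q^2$-band. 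That already gives the upper bound $a\le\delta_2$ for any coset leader $a<n/2$. Your ``short comparison'' that $\delta_2$ itself is a coset leader is also routine once made explicit: from $q^2\delta_2\equiv-\delta_2$ the only nontrivial check is $[q\delta_2]_n\in[\delta_2,n-\delta_2]$, and one computes $[q\delta_2]_n=\tfrac{(q^2+3)n}{2(q^2+1)}$, which lies between $\tfrac{(q-1)^2n}{2(q^2+1)}$ and $\tfrac{(q+1)^2n}{2(q^2+1)}$ as required.
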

	
	\begin{Lemma}{\rm \cite{wsd}}
		Suppose $m\geq 2$ and $q$ is an odd prime power satisfying $q\equiv 3~({\rm mod}~4)$. Let $\phi_1$, $\phi_2$ and $\phi_3$ be the first three largest odd coset leaders modulo $q^m+1$, respectively. Then
		\begin{align*}
			\phi_1=\begin{cases}
				\frac{q^m+1}{2}, & {\rm if~m~is~even};\\
				\frac{(q-1)(q^m+1)}{2(q+1)}, & {\rm if~m~is~odd},
			\end{cases}
			~~~
			\phi_2=\begin{cases}
				\frac{q^m-1}{2}-q^{m-1}, & {\rm if~m~is~even};\\
				\frac{(q-1)(q^m-2q^{m-2}-1)}{2(q+1)}, & {\rm if~m~is~odd},
			\end{cases}
		\end{align*}
		and if $q^m\geq 25$, then
		\begin{align*}
			\phi_3=\begin{cases}
				\frac{q^m-1}{2}-q^{m-1}-q+1, & {\rm if~m~is~even};\\
				\frac{(q-1)(q^m-2q^{m-2}-1)}{2(q+1)}-(q-1)^2, & {\rm if~m~is~odd}~{\rm and}~m\geq 5;\\
				\frac{(q-1)(q^3-2q-1)}{2(q+1)}-q-1, & {\rm if}~m=3.
			\end{cases}
		\end{align*}
		Moreover, $C_{\phi_1}=\{\phi_1\}$ if $m$ is even and $C_{\phi_1}=\{\phi_1, 2n-\phi_1\}$ if $m$ is odd.
	\end{Lemma}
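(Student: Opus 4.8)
The plan is to work directly with the $q$-cyclotomic cosets modulo $n=q^m+1$, exploiting that $q^m\equiv -1\pmod n$. This gives the involution $a\mapsto n-a$ fixing each coset setwise, so the coset of $a$ is $\{\,[\pm aq^i]_n : 0\le i\le m-1\,\}$ and every coset leader lies in $[1,(q^m+1)/2]$. Since only the odd coset leaders are wanted, I would first settle parities: using $q\equiv 3\pmod 4$ one checks $q^m\equiv 1\pmod 4$ when $m$ is even and $q^m\equiv 3\pmod 4$ when $m$ is odd, so $(q^m+1)/2$ is odd exactly when $m$ is even. Combined with Lemma~\ref{l4.0} this immediately yields $\phi_1$: for even $m$ the top leader $\delta_1=(q^m+1)/2$ is already odd, while for odd $m$ it is even and the next leader $\delta_2=\frac{(q-1)(q^m+1)}{2(q+1)}$ must be tested. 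Writing $\delta_2$ as $\frac{q-1}{2}$ (odd) times $\frac{q^m+1}{q+1}=q^{m-1}-q^{m-2}+\cdots+1$ (a sum of $m$ odd terms, hence odd) shows $\delta_2$ is odd, so $\phi_1=\delta_2$. The description of $C_{\phi_1}$ then follows from the involution together with a short order computation showing the coset of $\delta_2$ has size $2$.

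For $\phi_2$ and $\phi_3$ the argument splits into two parts: (i) each claimed value is an odd coset leader, and (ii) no odd coset leader lies strictly between two consecutive claimed values. For (i) I would pass to the base-$q$ expansion $a=\sum_{i=0}^{m-1}a_iq^i$ and use the explicit action of multiplication by $q$: since $aq\equiv\big(\sum_{i=1}^{m-1}a_{i-1}q^i\big)-a_{m-1}\pmod n$, multiplication by $q$ is a left cyclic shift of the digit string twisted by a sign coming from $q^m\equiv -1$. Setting $c=\frac{q-1}{2}$, the candidates have transparent digit patterns — for even $m$, $\phi_2=(c{-}1,c,\dots,c)_q$ and $\phi_3=(c{-}1,c,\dots,c,c{-}1,c{+}1)_q$ — and one verifies that applying the twisted shift any number of times never produces a smaller integer, i.e. $a\le[aq^t]_n$ for all $t$, so $a$ is the leader of its coset. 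The parity claims for $\phi_2,\phi_3$ are again direct digit computations.

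Part (ii) is the heart of the matter and where the real work lies. Here I would show that every odd integer $a$ in the open interval $(\phi_2,\phi_1)$, and then in $(\phi_3,\phi_2)$, fails to be a coset leader by exhibiting an explicit $t$ with $[aq^t]_n<a$. Concretely, one controls the top few base-$q$ digits of such an $a$ (they are forced to be close to $c$ because $a$ is near $(q^m+1)/2$) and shows that one twisted shift already decreases the value; the odd-parity restriction, via $q\equiv 3\pmod 4$, removes the few borderline digit patterns that would otherwise survive. I expect this exhaustive elimination to be the main obstacle, both because it requires a careful case analysis on the leading digits and on whether borrows propagate in the twisted shift, and because the odd and even cases of $m$ behave differently (for odd $m$ the factor $\frac{q^m+1}{q+1}$ replaces the all-$c$ string, so the patterns and intervals must be recomputed). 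Finally I would dispose of the small and boundary cases separately: the hypothesis $q^m\ge 25$ is exactly what guarantees enough distinct cosets for $\phi_3$ to have the stated generic form, while the exceptional value for $m=3$ would be handled by a direct check, completing the determination of the first three largest odd coset leaders.
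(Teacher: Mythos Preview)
The paper does not prove this lemma at all: it is quoted verbatim from \cite{wsd} and used as a black box, so there is no ``paper's own proof'' to compare against. Your outline is therefore not in competition with anything here; if you want to supply a self-contained argument you are free to do so, but for the purposes of this paper a citation suffices.

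That said, your sketch is a reasonable plan and matches the standard technique for such coset-leader computations. The reduction via $q^m\equiv -1\pmod n$ to the signed-shift action on base-$q$ digit strings is exactly the right framework, and your parity analysis for $\phi_1$ using Lemma~\ref{l4.0} is clean. The part you correctly flag as the ``heart of the matter'' --- showing that no odd coset leader lies strictly between consecutive $\phi_i$ --- is indeed where all the labour goes, and your description of it as a case analysis on leading digits with borrow propagation is accurate but understates how many cases arise, especially for odd $m$ where the reference pattern is the alternating string coming from $\frac{q^m+1}{q+1}$ rather than the constant-$c$ string. If you actually carry this out you will find the odd-$m$ analysis for $\phi_3$ (and the $m=3$ exception) requires tracking two or three leading digits simultaneously, not just one; this is doable but tedious, which is presumably why the present paper simply cites \cite{wsd}.
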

	
	\section{The duals of cyclic BCH codes of length $q^m+1$}
	
	In this section, we assume that $n=q^m+1$ is the length of the considered codes, where $q\geq 3$ is an odd prime power. For $2 \leq \delta \leq \delta_1$, let $\mathcal{C}_{(q,n,1,\delta,1)}$ be the narrow-sense cyclic BCH code of length $n$ over $\mathbb{F}_q$ with designed distance $\delta$ with respect to a $n$-th primitive root of unity $\beta$. By definition, the defining set of  $\mathcal{C}_{(q,n,1,\delta,1)}$ with respect to $\beta$ is $T=C_1^{(q,n)}\cup C_2^{(q,n)}\cup \cdots \cup C_{\delta-1}^{(q,n)}$. Let $T^{\perp}$ be the defining set of the dual code $\mathcal{C}_{(q,n,1,\delta,1)}^{\perp}$ with respect to $\beta$. It is easy to check that $T^{\perp}=\mathbb{Z}_{n}\setminus T$ and $\delta_1\in T^{\perp}$. In the following, we will develop a lower bound on the minimum distance of the dual code $\mathcal{C}_{(q,n,1,\delta,1)}^{\perp}$ and give a sufficient and necessary condition for the even-like subcode $\mathcal{C}_{(q,n,1,\delta+1,0)}$ of $\mathcal{C}_{(q,n,1,\delta,1)}$ being a cyclic dually-BCH code. To this end, we need the lemma below.
	
	\begin{Lemma}\label{l4.1}
	For $2\leq \delta\leq \delta_1$, let $I(\delta)$ be the integer satisfying $1\leq I(\delta)<\delta_1$ such that $I(\delta) \notin T^{\perp}$ and $\{I(\delta)+1, I(\delta)+2,\cdots, \delta_1\}\subseteq T^{\perp}$. Then
		\begin{small}
		\begin{align*}
			I(\delta)=\begin{cases}
				\frac{q^m-q^{m-\ell_0+1}}{2}+(\ell_1+1)q^{m-\ell_0}, & {\rm if}~\delta=\frac{q^{\ell_0}-q}{2}+\ell_1+2~(1\leq \ell_0\leq m-1,0\leq \ell_1\leq \frac{q-3}{2});\\
				\frac{q^m-q^{m-\ell_0}}{2}+\ell_1+1, & {\rm if}~\ell_1q^{\ell_0}+\frac{q^{\ell_0}+3}{2}\leq \delta <(\ell_1+1)q^{\ell_0}+\frac{q^{\ell_0}+3}{2}\\
				& (1\leq \ell_0\leq m-1,0\leq \ell_1<\frac{q-3}{2});\\
				\frac{q^m-q^{m-\ell_0}+q-1}{2}, & {\rm if}~\frac{q^{\ell_0+1}-2q^{\ell_0}+3}{2}\leq \delta<\frac{q^{\ell_0+1}-q}{2}+2~(1\leq \ell_0\leq m-2);\\
				\frac{q^m-1}{2}, & {\rm if}~\frac{q^m-2q^{m-1}+3}{2}\leq \delta\leq \delta_1.
			\end{cases}
		\end{align*}
		\end{small}
	\end{Lemma}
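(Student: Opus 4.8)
The plan is to translate the statement about the dual defining set $T^{\perp}$ into one about $T$ itself, and then to carry out a digit-level analysis of the $q$-cyclotomic cosets modulo $n=q^m+1$. First I would record the reductions that make the problem elementary in principle. Writing $\mathrm{cl}(a)$ for the coset leader of $C_a^{(q,n)}$, the description $T=\bigcup_{i=1}^{\delta-1}C_i^{(q,n)}$ gives $a\in T$ if and only if $\mathrm{cl}(a)\le \delta-1$, because a coset meets $\{1,\dots,\delta-1\}$ exactly when its leader does. Combining this with the identity $T^{\perp}=\mathbb{Z}_n\setminus T$ noted just before the lemma, the conditions defining $I(\delta)$ become $I(\delta)\in T$ together with $I(\delta)+1,\dots,\delta_1\notin T$. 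Hence
\[
I(\delta)=\max\{a : 1\le a<\delta_1,\ \mathrm{cl}(a)\le \delta-1\},
\]
and in particular $I(\delta)$ is non-decreasing in $\delta$. That the interval really ends at $\delta_1$ (i.e.\ $\delta_1\notin T$) follows from $\delta\le\delta_1$ and $C_{\delta_1}=\{\delta_1\}$ in Lemma~\ref{l4.0}.

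The tool for computing the relevant coset leaders is the $q$-ary expansion together with the action of multiplication by $q$. Since $q^m\equiv -1\pmod n$, if $a=\sum_{i=0}^{m-1}a_iq^i$ with $0\le a_i\le q-1$ then $[qa]_n$ is obtained from $(a_0,\dots,a_{m-1})$ by a \emph{negacyclic} shift: the overflow digit $a_{m-1}$ re-enters in position $0$ with a sign change, after which one normalises the digits back into $[0,q-1]$ by borrowing. Negation behaves the same way because $-1\equiv q^m$ is itself a power of $q$, which is also the reason $-T=T$ and hence $T^{\perp}=\mathbb{Z}_n\setminus T$. I would use this rule both to produce, for each candidate value of $I(\delta)$, an explicit small member of its coset, and to bound below the leaders of the integers that must be excluded.

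With these reductions the proof splits into the two inclusions for each of the four $\delta$-ranges. For the lower bound $I(\delta)\ge$ (claimed value) I would exhibit, for the integer $a$ on the right-hand side of each case, a power $q^{t}$ (possibly composed with negation) for which $[aq^{t}]_n\le \delta-1$; this shows $\mathrm{cl}(a)\le\delta-1$, hence $a\in T$, and a short computation confirms $a<\delta_1$. The substantive direction is the matching upper bound: I must show that \emph{every} integer $a$ with $I(\delta)<a<\delta_1$ satisfies $\mathrm{cl}(a)\ge\delta$, i.e.\ that a whole block of integers just below $\delta_1=\tfrac{q^m+1}{2}$ is disjoint from $T$. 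For this I would parametrise such $a$ by $\ell_0$, the length of the leading run of $q$-ary digits equal to $\tfrac{q-1}{2}$ (so that $\tfrac{q^m-1}{2}=\sum_{i=0}^{m-1}\tfrac{q-1}{2}q^i$ is the extreme, all-digit-$\tfrac{q-1}{2}$ case), and show via the negacyclic-shift rule that no sequence of multiplications by $q$, nor negation, can drive any such $a$ below $\delta$. Running this over the digit patterns produces exactly the four regimes, and one verifies along the way that the listed $\delta$-ranges partition $[2,\delta_1]$ so that each $\delta$ falls under precisely one case.

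The main obstacle is this upper-bound step: controlling the minimum over an entire coset for every integer in the excluded block simultaneously, rather than for a single well-chosen representative. The difficulty is purely combinatorial bookkeeping on the $q$-ary digits—tracking how the borrowing after each negacyclic shift interacts with the threshold $\delta$—and the boundary values $\ell_1\in\{0,\tfrac{q-3}{2}\}$ and $\ell_0\in\{1,m-2,m-1\}$ are where the argument must be checked most carefully, since these are exactly the transitions between the four cases.
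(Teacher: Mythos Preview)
Your proposal is correct and takes essentially the same approach as the paper: both reduce to showing (i) the claimed value lies in $T$ by exhibiting the small coset member $[I(\delta)q^{t}]_n\le\delta-1$ for a suitable $t$, and (ii) every $a$ strictly between the claimed value and $\delta_1$ satisfies $[aq^t]_n\ge\delta$ for all $t$. The only organisational difference is that the paper fixes $\delta$ (hence $\ell_0,\ell_1$), writes $a=I(\delta)+u$ with $u$ expanded in base~$q$, and runs a six-way case split on the shift amount $t\in\{0\}\cup[1,\ell_0-1]\cup[\ell_0,m-1]\cup\{m\}\cup[m+1,m+\ell_0-1]\cup[m+\ell_0,2m-1]$, whereas you propose to parametrise $a$ first by the length of its leading $\tfrac{q-1}{2}$-run and then read off the coset leader via the negacyclic shift; these are two slicings of the same digit computation.
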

	
	\begin{proof}
		We prove the desired result only for the case $\delta=\frac{q^{\ell_0}-q}{2}+\ell_1+2$, where $1\leq \ell_0\leq m-1$ and $0\leq \ell_1\leq \frac{q-3}{2}$, as the proof when $\delta$ is in other cases follows the similar way. It is clear that $\frac{q^m-q^{m-\ell_0+1}}{2}+(\ell_1+1)q^{m-\ell_0}=\big(\frac{q^{\ell_0}-q}{2}+\ell_1+1\big)q^{m-\ell_0}\in C_{\frac{q^{\ell_0}-q}{2}+\ell_1+1}^{(q,n)}\nsubseteq T^{\perp}=\mathbb{Z}_n\setminus \big(C_{1}^{(q,n)}\cup C_{2}^{(q,n)}\cup\cdots \cup C_{\delta-1}^{(q,n)}\big)$. 
		
		We next prove that
		\begin{small}
		$$\Gamma:=\Big\{\frac{q^m\!\!-\!q^{m-\ell_0+1}}{2}\!+\!(\ell_1\!+\!1)q^{m-\ell_0}\!+\!1,\frac{q^m\!\!-\!q^{m-\ell_0+1}}{2}\!+\!(\ell_1\!+\!1)q^{m-\ell_0}\!+\!2,\cdots, \delta_1\Big\}\subseteq T^{\perp}.$$
		\end{small}
		For $a\in \Gamma$, $a=\frac{q^m-q^{m-\ell_0+1}}{2}+(\ell_1+1)q^{m-\ell_0}+u$, where $1\leq u\leq \frac{q^{m-\ell_0+1}+1}{2}-(\ell_1+1)q^{m-\ell_0}=(\frac{q-3}{2}-\ell_1)q^{m-\ell_0}+\frac{q-1}{2}(q^{m-\ell_0-1}+q^{m-\ell_0-2}+\cdots+q+1)+1$. Suppose $u=\sum_{j=0}^{m-\ell_0}a_jq^j$, where $0\leq a_{m-\ell_0}\leq \frac{q-3}{2}-\ell_1$ and $0\leq a_j\leq q-1$ for $0\leq j\leq m-\ell_0-1$. To prove $a\in T^{\perp}$, we need to prove that $[aq^t]_n\geq \delta$ for every integer $t$ with $0\leq t\leq 2m-1$. The proof is carried out by distinguishing the following six cases according to the value of $t$.
		
		{\bf Case 1.} $t=0$. It is obvious that $[aq^t]_n=a>\delta$.
		
		{\bf Case 2.} $1\leq t\leq \ell_0-1$. In this case,
		\begin{small}
		\begin{align*}
			aq^t&=\Big(\frac{(q^t-1)q^{m-t}}{2}+\frac{q^{m-t}-q^{m-\ell_0+1}}{2}+(\ell_1+1)q^{m-\ell_0}+u\Big)q^t\\
			&\equiv \frac{q^m-q^{m-\ell_0+1+t}}{2}+(\ell_1+1)q^{m-\ell_0+t}+uq^t-\frac{q^t-1}{2}\\
			&=\frac{q^m-\big(q^{m-\ell_0+1}-(2\ell_1+2)q^{m-\ell_0}-2u+1\big)q^t+1}{2}\triangleq M_t~({\rm mod}~n).
		\end{align*}
		\end{small}
		
		\noindent Direct calculations show that $M_t\leq \frac{q^m+1}{2}<n$, and
		\begin{small}
		$$M_t\geq \frac{q^m-\big(q^{m-\ell_0+1}-(2\ell_1+2)q^{m-\ell_0}-1\big)q^{\ell_0-1}+1}{2}=(\ell_1+1)q^{m-1}+\frac{q^{\ell_0-1}+1}{2}>0.$$
		\end{small}
		Consequently, $[aq^t]_n=M_t>\delta$.
		
		{\bf Case 3.} $\ell_0\leq t\leq m-1$. We have
		\begin{small}
		\begin{align*}
			aq^t&=\Big(\frac{(q^t-q^{t-\ell_0+1})q^{m-t}}{2}+(\ell_1+1)q^{m-\ell_0}+\sum_{j=m-t}^{m-\ell_0}a_jq^j+\sum_{j=0}^{m-t-1}a_jq^j\Big)q^t\\
			&\equiv -\frac{q^t-q^{t-\ell_0+1}}{2}-(\ell_1+1)q^{t-\ell_0}-\sum_{j=0}^{t-\ell_0}a_{j+m-t}q^j+\sum_{j=t}^{m-1}a_{j-t}q^j\triangleq M_t~({\rm mod}~n).
		\end{align*}
		\end{small}
		
		If there is an integer $j$ with $t\leq j\leq m-1$ such that $a_{j-t}\neq 0$, then 
		\begin{small}
		\begin{align*}
			M_t&\geq q^t-\frac{q^t-q^{t-\ell_0+1}}{2}-(\ell_1+1)q^{t-\ell_0}-\big(\frac{q-3}{2}-\ell_1\big)q^{t-\ell_0}-\sum_{j=0}^{t-\ell_0-1}(q-1)q^j\\
			&=\frac{(q^{\ell_0}-1)q^{t-\ell_0}}{2}+1\geq \frac{q^{\ell_0}-1}{2}+1=\frac{q^{\ell_0}+1}{2}>0,
		\end{align*}
		\end{small}
		
		\noindent and
		\begin{small}
		\begin{align*}
			M_t&\leq \sum_{j=t}^{m-1}(q-1)q^j-\frac{q^t-q^{t-\ell_0+1}}{2}-(\ell_1+1)q^{t-\ell_0}=q^m-\big(\frac{3q^{\ell_0}+1}{2}+\ell_1\big)q^{t-\ell_0}\\
			&\leq q^m-\big(\frac{3q^{\ell_0}+1}{2}+\ell_1\big)=q^m-\frac{3q^{\ell_0}+1}{2}-\ell_1<n.
		\end{align*}
		\end{small}
		
		\noindent Therefore, $[aq^t]_n=M_t\geq \delta$.
		
		If $a_{j-t}=0$ for every integer $j$ with $t\leq j\leq m-1$, then $aq^t\equiv q^m+1+M_t~({\rm mod}~n)$. In addition,
		\begin{small}
		\begin{align*}
			q^m+1+M_t&\geq q^m+1-\frac{q^t\!\!-q^{t-\ell_0+1}}{2}-(\ell_1\!+\!1)q^{t-\ell_0}-(\frac{q\!-\!3}{2}-\ell_1)q^{t-\ell_0}-\sum_{j=0}^{t-\ell_0-1}(q\!-\!1)q^j\\
			&=q^m-\frac{q^t+q^{t-\ell_0}}{2}+2\geq q^m-\frac{q^{m-1}+q^{m-\ell_0-1}}{2}+2>0,
		\end{align*}
		\end{small}
		
		\noindent and
		\begin{small}
		\begin{align*}
			q^m+1+M_t&\leq q^m+1-\frac{q^t-q^{t-\ell_0+1}}{2}-(\ell_1+1)q^{t-\ell_0}-1\\
			&=q^m-\big(\frac{q^{\ell_0}-q}{2}+\ell_1+1\big)q^{t-\ell_0}\leq q^m-\frac{q^{\ell_0}-q}{2}-\ell_1-1<n.
		\end{align*}
		\end{small}
		
		\noindent Thus $[aq^t]_n=q^m+1+M_t>\delta$.
		
		{\bf Case 4.} $t=m$. It is easy to see that $[aq^t]_n=n-a\geq q^m+1-\frac{q^m+1}{2}=\frac{q^m+1}{2}>\delta$.
		
		{\bf Case 5.} $m+1\leq t\leq m+\ell_0-1$. Let $t'=t-m$, then $1\leq t'\leq \ell_0-1$ and $[aq^t]_n=n-[aq^{t'}]_n$. We see from Case 2 that $[aq^t]_n\geq q^m+1-\frac{q^m+1}{2}=\frac{q^m+1}{2}>\delta$.
		
		{\bf Case 6.} $m+\ell_0\leq t\leq 2m-1$. Let $t'=t-m$, then $\ell_0\leq t'\leq m-1$ and $[aq^t]_n=n-[aq^{t'}]_n$. It follows from Case 3 that if there is an integer $j$ with $t'\leq j\leq m-1$ such that $a_{j-t'}\neq 0$, then $[aq^t]_n\geq q^m+1-\big(q^m-\frac{3q^{\ell_0}+1}{2}-\ell_1\big)=\frac{3q^{\ell_0}+3}{2}+\ell_1>\delta$, and that if $a_{j-t'}=0$ for every integer $j$ with $t'\leq j\leq m-1$, then $[aq^t]_n\geq q^m+1-\big(q^m-\frac{q^{\ell_0}-q}{2}-\ell_1-1\big)=\frac{q^{\ell_0}-q}{2}+\ell_1+2=\delta$.
		
		Summarizing all the conclusions above, we get the desired result.
	\end{proof}
	
	\begin{Theorem}\label{t4.1}
		Suppose $q\geq 3$ is an odd prime power and $n=q^m+1$. For $2\leq \delta\leq \delta_1$, we have
		\begin{small}
		\begin{align*}
			d(\mathcal{C}_{(q,n,1,\delta,1)}^{\perp})\geq\begin{cases}
				q^{m-\ell_0+1}\!-\!(2\ell_1\!+\!2)q^{m-\ell_0}+1, & {\rm if}~\delta=\frac{q^{\ell_0}-q}{2}+\ell_1+2\\
				&(1\leq \ell_0\leq m-1,~0\leq \ell_1\leq \frac{q-3}{2});\\
				q^{m-\ell_0}-2\ell_1-1, & {\rm if}~\ell_1q^{\ell_0}+\frac{q^{\ell_0}+3}{2}\leq \delta <(\ell_1\!+\!1)q^{\ell_0}+\frac{q^{\ell_0}+3}{2}\\
				& (1\leq \ell_0\leq m-1,~0\leq \ell_1<\frac{q-3}{2});\\
				q^{m-\ell_0}-q+2, & {\rm if}~\frac{q^{\ell_0+1}-2q^{\ell_0}+3}{2}\leq \delta<\frac{q^{\ell_0+1}-q}{2}+2\\
				&(1\leq \ell_0\leq m-2);\\
				2, & {\rm if}~\frac{q^m-2q^{m-1}+3}{2}\leq \delta\leq \delta_1.
			\end{cases}
		\end{align*}
		\end{small}
	\end{Theorem}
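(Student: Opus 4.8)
The plan is to upgrade the one-sided run of consecutive elements of $T^{\perp}$ furnished by Lemma~\ref{l4.1} into a symmetric (two-sided) run, and then read off the minimum distance from the BCH bound.

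First I would record the structural feature special to length $n=q^m+1$: since $q^m\equiv -1\pmod n$, multiplication by $q^m$ acts on $\mathbb{Z}_n$ as $a\mapsto [-a]_n=n-a$, so every $q$-cyclotomic coset modulo $n$ is invariant under negation. Because $T^{\perp}=\mathbb{Z}_n\setminus T$ is again a union of $q$-cyclotomic cosets, it is itself closed under the map $a\mapsto n-a$. This symmetry is the whole point of the argument.

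Next, Lemma~\ref{l4.1} gives $\{I(\delta)+1,I(\delta)+2,\dots,\delta_1\}\subseteq T^{\perp}$. Applying the negation symmetry and using that $\delta_1=\frac{q^m+1}{2}$ is the fixed point, i.e. $n-\delta_1=\delta_1$, I obtain $\{\delta_1,\delta_1+1,\dots,n-I(\delta)-1\}\subseteq T^{\perp}$. The two runs overlap precisely at $\delta_1$, so they glue into a single block of consecutive integers $\{I(\delta)+1,I(\delta)+2,\dots,n-I(\delta)-1\}\subseteq T^{\perp}$. This block consists of $n-2I(\delta)-1$ consecutive elements, and it lies in $[1,n-1]$ since $1\le I(\delta)<\delta_1$, so no wraparound modulo $n$ occurs.

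Finally, applying the BCH bound for constacyclic codes (Lemma~\ref{l2.1}) to these $n-2I(\delta)-1$ consecutive elements of $T^{\perp}$ yields $d(\mathcal{C}_{(q,n,1,\delta,1)}^{\perp})\ge n-2I(\delta)=q^m+1-2I(\delta)$. It then remains only to substitute the four formulas for $I(\delta)$ from Lemma~\ref{l4.1} and simplify; each substitution produces exactly one of the four claimed bounds (for instance, $I(\delta)=\frac{q^m-q^{m-\ell_0+1}}{2}+(\ell_1+1)q^{m-\ell_0}$ gives $q^{m-\ell_0+1}-(2\ell_1+2)q^{m-\ell_0}+1$, and $I(\delta)=\frac{q^m-1}{2}$ gives $2$). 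The substantive computation is already carried out in Lemma~\ref{l4.1}; the only genuine idea in the theorem is the negation symmetry, which doubles the guaranteed run and explains why the final bound is essentially twice the length of the one-sided run. Accordingly, I anticipate no real obstacle beyond the routine bookkeeping of the four arithmetic substitutions.
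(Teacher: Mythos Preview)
Your proposal is correct and follows essentially the same approach as the paper: the paper's proof also extends the run $\{I(\delta)+1,\dots,\delta_1\}$ from Lemma~\ref{l4.1} to the symmetric block $\{I(\delta)+1,\dots,n-I(\delta)-1\}\subseteq T^{\perp}$, counts $n-2I(\delta)-1$ consecutive elements, and applies the BCH bound, citing Lemma~\ref{l4.1} for the explicit values. Your exposition simply makes the negation symmetry (via $q^m\equiv -1\pmod n$) explicit, whereas the paper leaves this step implicit.
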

	
	\begin{proof}
		For $2\leq \delta\leq \delta_1$, the set $\{I(\delta)+1, I(\delta)+2,\cdots, \delta_1,\delta_1+1,\cdots,n-I(\delta)-1\}$ is a subset of $T^{\perp}$, which contains $n-2I(\delta)-1$ consecutive elements; thus $d(\mathcal{C}_{(q,n,1,\delta,1)}^{\perp})\geq n-2I(\delta)$ according to the BCH bound for cyclic codes. The desired result follows directly from Lemma \ref{l4.1}.
	\end{proof}
	
	\begin{Example}{\rm 
			Take $q=3$ and $m=2$. By Theorem \ref{t4.1}, the lower bounds on the minimum distances of $\mathcal{C}_{(q,n,1,2,1)}^{\perp}$ and $\mathcal{C}_{(q,n,1,3,1)}^{\perp}$ are $4$ and $2$, respectively. By Magma, the true minimum distances of $\mathcal{C}_{(q,n,1,2,1)}^{\perp}$ and $\mathcal{C}_{(q,n,1,3,1)}^{\perp}$ are  $4$ and $2$, respectively.}
	\end{Example}
	
	\begin{Example}{\rm 
			Take $q=5$ and $m=2$. By Theorem \ref{t4.1}, the lower bounds on the minimum distances of $\mathcal{C}_{(q,n,1,2,1)}^{\perp}$ and $\mathcal{C}_{(q,n,1,8,1)}^{\perp}$ are $16$ and $4$, respectively. By Magma, the true minimum distances of $\mathcal{C}_{(q,n,1,2,1)}^{\perp}$ and $\mathcal{C}_{(q,n,1,8,1)}^{\perp}$ are  $16$ and $4$, respectively.}
	\end{Example}
	
		Let $\widetilde{T}$ and $\widetilde{T}^{\perp}$ be the defining sets of the even-like subcode $\mathcal{C}_{(q,n,1,\delta+1,0)}$ and its dual code $\mathcal{C}_{(q,n,1,\delta+1,0)}^{\perp}$ with respect to $\beta$, respectively. By definition, $\widetilde{T}=C_0^{(q,n)}\cup C_1^{(q,n)}\cup C_2^{(q,n)} \cup \cdots \cup C_{\delta-1}^{(q,n)}$, and it is easy to see that $\widetilde{T}^{\perp}=\mathbb{Z}_n\setminus \widetilde{T}$. For $2\leq \delta\leq \delta_1$, $\delta_1\in \widetilde{T}^{\perp}$. Note that $C_{\delta_1}^{(q,n)}=\{\delta_1\}$ and that $C_j^{(q,n)}=C_{n-j}^{(q,n)}$ for $1\leq j<n$, thus $\mathcal{C}_{(q,n,1,\delta+1,0)}$ is a cyclic dually-BCH code if and only if there is an integer $J$ with $1<J<\delta_1$ such that $\widetilde{T}^{\perp}=C_J^{(q,n)}\cup C_{J+1}^{(q,n)}\cup \cdots\cup C_{\delta_1}^{(q,n)}$. We next present a sufficient and necessary condition in terms of $\delta$ for $\mathcal{C}_{(q,n,1,\delta+1,0)}$ being a cyclic dually-BCH code.
		
	\begin{Lemma}\label{l4.2}
		Suppose $q>3$ is an odd prime power and $m=2$. For $3\leq \delta\leq \delta_2-1$, let $M(\delta)=\widetilde{T}^{\perp}\setminus (C_{I(\delta)+1}^{(q,n)}\cup C_{I(\delta)+2}^{(q,n)}\cup \cdots \cup C_{\delta_1}^{(q,n)})$. Then $q+2\in M(\delta)$ if $3\leq \delta\leq q+2$ and $\delta_2-1\in M(\delta)$ if $q+2<\delta\leq \delta_2-1$.
	\end{Lemma}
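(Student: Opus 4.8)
The plan is to reduce the whole statement to two explicit cyclotomic‑coset computations, exploiting that for $m=2$ we have $n=q^2+1$ and $q^2\equiv -1\pmod n$, so ${\rm ord}_n(q)=4$ and the coset of any $a$ with $1\le a<\delta_1$ has the shape $C_a^{(q,n)}=\{a,\,[aq]_n,\,n-a,\,n-[aq]_n\}$. First I would record the reduction implicit in the set‑up: since $\widetilde{T}=\{0\}\cup T$ we have $\widetilde{T}^{\perp}=T^{\perp}\setminus\{0\}$, so for a nonzero $x$ one has $x\in M(\delta)$ if and only if $x\in T^{\perp}$ (equivalently, the coset leader of $x$ is $\ge\delta$) and no conjugate $[xq^t]_n$ lies in $[I(\delta)+1,\delta_1]$. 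I would also use the characterization coming from Lemma~\ref{l4.1}, namely that $I(\delta)$ is the largest element of $T\cap[1,\delta_1)$, together with the explicit values of $I(\delta)$ obtained by putting $m=2$ (hence $\ell_0=1$) in that lemma.

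For the first assertion I take $x=q+2$. Using $q^2\equiv -1$ I compute $C_{q+2}^{(q,n)}=\{q+2,\,2q-1,\,q^2-q-1,\,q^2-2q+2\}$, whose leader is $q+2$; a short inequality check (for instance $q^2-q-1>\delta_1\iff(q-3)(q+1)>0$) shows that exactly $q+2$ and $2q-1$ of these four lie below $\delta_1$. Since the leader $q+2\ge\delta$ for $\delta\le q+2$, we get $q+2\in T^{\perp}$. To control $I(\delta)$ I observe that for $\delta\ge 3$ the coset $C_2^{(q,n)}\subseteq T$ contains $2q$, and $2q<\delta_1$, so $I(\delta)\ge 2q$; as $2q\ge q+2$ and $2q>2q-1$, the two conjugates of $q+2$ below $\delta_1$ are $\le I(\delta)$ and hence fall outside $[I(\delta)+1,\delta_1]$. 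This yields $q+2\in M(\delta)$.

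For the second assertion I take $x=\delta_2-1=\frac{q^2-2q-1}{2}$, using $\delta_2=\frac{(q-1)^2}{2}$ from Lemma~\ref{l4.0}. The same reduction gives $C_{\delta_2-1}^{(q,n)}=\{\frac{q^2-2q-1}{2},\frac{q^2-2q+3}{2},\frac{q^2+2q-1}{2},\frac{q^2+2q+3}{2}\}$, with leader $\delta_2-1$ and with exactly $\delta_2-1$ and $\frac{q^2-2q+3}{2}$ lying below $\delta_1$ (the other two exceed $\delta_1$). Since the leader $\delta_2-1\ge\delta$ for $\delta\le\delta_2-1$, we obtain $\delta_2-1\in T^{\perp}$. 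I would then verify that the range $q+2<\delta\le\delta_2-1$ falls entirely inside the second branch of Lemma~\ref{l4.1} with $\ell_0=1$, so that $I(\delta)=\frac{q^2-q}{2}+\ell_1+1\ge\frac{q^2-q}{2}+1$; the elementary inequality $\frac{q^2-q}{2}+1>\frac{q^2-2q+3}{2}$ (which reduces to $q>1$) shows that both relevant conjugates are $\le I(\delta)$, hence outside $[I(\delta)+1,\delta_1]$, and therefore $\delta_2-1\in M(\delta)$.

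The difficulties here are bookkeeping rather than conceptual. I expect the main obstacle to be organizing the inequality checks cleanly: deciding precisely which of the four conjugates in each coset sit below $\delta_1$ (these reduce to sign conditions such as $(q-1)(q-3)>0$ and $(q-3)(q+1)>0$, valid since $q\ge 5$), and certifying that on each of the two $\delta$‑ranges the governing value of $I(\delta)$ is the one used above. One should also note the degenerate boundary: when $q=5$ one has $\delta_2-1=q+2$, so the second range is empty and only the first assertion carries content, consistent with the statement.
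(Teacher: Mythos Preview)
Your proposal is correct and follows essentially the same approach as the paper: in each range you compute the four-element coset $C_{q+2}^{(q,n)}$ or $C_{\delta_2-1}^{(q,n)}$, identify which conjugates lie below $\delta_1$, and compare them against the minimum of $I(\delta)$ on that range to conclude that no conjugate lands in $[I(\delta)+1,\delta_1]$. The only cosmetic difference is that in the first case you obtain $I(\delta)\ge 2q$ directly from $2q\in C_2^{(q,n)}\subseteq T$, whereas the paper reads off $\min\{I(\delta)\mid 3\le\delta\le q+2\}=2q$ from Lemma~\ref{l4.1}; both yield the same bound and the same verification.
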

	\begin{proof}
		When $3\leq \delta\leq q+2$, it is obvious that $q+2\in \widetilde{T}^{\perp}$. We see from Lemma \ref{l4.1} that ${\rm min}\{I(\delta)\mid 3\leq \delta\leq q+2\}=2q$. As $C_{q+2}^{(q,n)}=\{q+2,2q-1,q^2-q-1,q^2-2q+2\}$, we have $C_{q+2}^{(q,n)}\cap \big(C_{2q+1}^{(q,n)}\cup C_{2q+2}^{(q,n)}\cup \cdots \cup C_{\delta_1}^{(q,n)}\big)=\emptyset$, which implies that $q+2\in M(\delta)$.
		
		When $q+2<\delta\leq \delta_2-1$, we first have $\delta_2-1\in \widetilde{T}^{\perp}$. It follows from Lemma \ref{l4.1} that ${\rm min}\{I(\delta)\mid q+2<\delta\leq \delta_2-1\}=\frac{q^2-q}{2}+1$. Note that $C_{\delta_2-1}^{(q,n)}=C_{\frac{q^2-2q-1}{2}}^{(q,n)}=\big\{\frac{q^2-2q-1}{2}, \frac{q^2-2q+3}{2},\frac{q^2+2q+3}{2},\frac{q^2+2q-1}{2}\big\}$. One can check that $C_{\delta_2-1}^{(q,n)}\cap \big(C_{\frac{q^2-q}{2}+2}^{(q,n)}\cup C_{\frac{q^2-q}{2}+3}^{(q,n)}\cup \cdots \cup C_{\delta_1}^{(q,n)}\big)=\emptyset$, and hence $\delta_2-1\in M(\delta)$.
	\end{proof}
	
	\begin{Theorem}\label{t4.2}
		Suppose $q\geq 3$ is an odd prime power and $m=2$. For $2\leq \delta\leq \delta_1$, $\mathcal{C}_{(q,n,1,\delta+1,0)}$ is a cyclic dually-BCH code if and only if
		$$\delta=2~{\rm or}~\delta_2\leq \delta\leq \delta_1.$$
	\end{Theorem}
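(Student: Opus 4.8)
The plan is to reduce the dually-BCH property to the single equality $M(\delta)=\emptyset$ and then to test that equality on each subinterval of $[2,\delta_1]$, where $\delta_1=\frac{q^2+1}{2}$ and, by Lemma~\ref{l4.0}(2) with $m=2$, $\delta_2=\frac{(q-1)^2}{2}$. Recall from the paragraph just before Lemma~\ref{l4.2} that $\mathcal{C}_{(q,n,1,\delta+1,0)}$ is dually-BCH iff $\widetilde{T}^{\perp}=\bigcup_{j=J}^{\delta_1}C_j^{(q,n)}$ for some $J$. By Lemma~\ref{l4.1} the consecutive integers $I(\delta)+1,\ldots,\delta_1$ all lie in $T^{\perp}$, hence in $\widetilde{T}^{\perp}=T^{\perp}\setminus\{0\}$, so $\bigcup_{j=I(\delta)+1}^{\delta_1}C_j^{(q,n)}\subseteq\widetilde{T}^{\perp}$; moreover $I(\delta)\notin T^{\perp}$, hence $I(\delta)\notin\widetilde{T}^{\perp}$. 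If $\widetilde{T}^{\perp}=\bigcup_{j=J}^{\delta_1}C_j^{(q,n)}$, then $I(\delta)\notin\widetilde{T}^{\perp}$ forces $J\ge I(\delta)+1$, whence the sandwich $\bigcup_{j=J}^{\delta_1}C_j^{(q,n)}\subseteq\bigcup_{j=I(\delta)+1}^{\delta_1}C_j^{(q,n)}\subseteq\widetilde{T}^{\perp}=\bigcup_{j=J}^{\delta_1}C_j^{(q,n)}$ gives $\widetilde{T}^{\perp}=\bigcup_{j=I(\delta)+1}^{\delta_1}C_j^{(q,n)}$. Therefore $\mathcal{C}_{(q,n,1,\delta+1,0)}$ is dually-BCH iff $M(\delta)=\emptyset$.

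With this reduction, necessity is immediate: for $3\le\delta\le\delta_2-1$ (a range nonempty only for $q\ge5$), Lemma~\ref{l4.2} exhibits an explicit element of $M(\delta)$, namely $q+2$ if $3\le\delta\le q+2$ and $\delta_2-1$ if $q+2<\delta\le\delta_2-1$, so $M(\delta)\ne\emptyset$ and the code is not dually-BCH. This disposes of the entire forbidden interval, so it remains to establish $M(\delta)=\emptyset$ for $\delta=2$ and for $\delta_2\le\delta\le\delta_1$. Throughout I would use that $\widetilde{T}$ is the union of all cyclotomic cosets with leader $\le\delta-1$, so that $\widetilde{T}^{\perp}$ is the union of all cosets with leader $\ge\delta$.

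For $\delta=2$, Lemma~\ref{l4.1} gives $I(2)=q$, and I would check $\bigcup_{j=q+1}^{\delta_1}C_j^{(q,n)}=\mathbb{Z}_n\setminus(C_0^{(q,n)}\cup C_1^{(q,n)})=\widetilde{T}^{\perp}$ directly, using $C_1^{(q,n)}=\{1,q,q^2,q^2-q+1\}$: the inclusion $\subseteq$ holds because none of $0,1,q,q^2,q^2-q+1$ lies in a coset meeting the window $[q+1,\delta_1]$, and $\supseteq$ holds because any coset with leader $\ell\in\{2,\ldots,q-1\}$ already meets this window, via $\ell q$ when $2\le\ell\le\frac{q-1}{2}$ and via $n-\ell q$ when $\frac{q+1}{2}\le\ell\le q-1$, while cosets with leader in $[q+1,\delta_1]$ meet it trivially. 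For $\delta_2\le\delta\le\delta_1$, I would invoke Lemma~\ref{l4.0}(2): there is no coset leader strictly between $\delta_2$ and $\delta_1$, so $\widetilde{T}^{\perp}=\{\delta_1\}=C_{\delta_1}^{(q,n)}$ when $\delta_2<\delta\le\delta_1$ (a single coset, trivially a BCH defining set of designed distance $2$), and $\widetilde{T}^{\perp}=C_{\delta_2}^{(q,n)}\cup C_{\delta_1}^{(q,n)}$ when $\delta=\delta_2$; in the latter case I would compute $I(\delta_2)=\frac{q^2-3}{2}$ and verify $\frac{q^2-1}{2}\in C_{\delta_2}^{(q,n)}$, so that $\bigcup_{j=(q^2-1)/2}^{\delta_1}C_j^{(q,n)}=C_{\delta_2}^{(q,n)}\cup C_{\delta_1}^{(q,n)}$ and $M(\delta_2)=\emptyset$.

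Two points deserve care. First, the case $\widetilde{T}^{\perp}=\{\delta_1\}$ (for $\delta_2<\delta\le\delta_1$) corresponds to the block with $J=\delta_1$, which lies outside the strict range $1<J<\delta_1$ of the preliminary description yet is a legitimate single-coset BCH defining set; this degenerate situation must be acknowledged separately rather than through the main block argument. Second, and where the genuine computation lies, the $\supseteq$ inclusion for $\delta=2$ requires showing that every coset other than $C_0^{(q,n)}$ and $C_1^{(q,n)}$ reaches the window $[q+1,\delta_1]$, and here the split at $\ell=\frac{q\pm1}{2}$ between using $\ell q$ and $n-\ell q$ is essential; the boundary bookkeeping at $\delta=\delta_2$ is then routine once the gap between $\delta_2$ and $\delta_1$ from Lemma~\ref{l4.0} is exploited.
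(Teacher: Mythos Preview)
Your proposal is correct and follows essentially the same route as the paper's proof: both use Lemma~\ref{l4.2} to rule out $3\le\delta\le\delta_2-1$, then verify $\delta=2$, $\delta=\delta_2$, and $\delta_2<\delta\le\delta_1$ directly by computing $\widetilde{T}^{\perp}$ and matching it to a run of consecutive cosets. Your explicit reduction ``dually-BCH $\Leftrightarrow M(\delta)=\emptyset$'' is something the paper uses only implicitly, and for $\delta=2$ you split the leaders $\ell\in\{2,\dots,q-1\}$ at $\tfrac{q\pm1}{2}$ and hit the half-window $[q+1,\delta_1]$ via $\ell q$ or $n-\ell q$, whereas the paper uses the full symmetric window $[q+1,q^2-q]$ and the single formula $n-aq$; both arguments are valid and of comparable length.
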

	
	\begin{proof}
		When $\delta=2$, $\widetilde{T}^{\perp}=\mathbb{Z}_n\setminus \big(C_0^{(q,n)}\cup C_1^{(q,n)}\big)$. It follows from Lemma \ref{l4.1} that $I(\delta)=q$. Thus $C_{q+1}^{(q,n)}\cup C_{q+2}^{(q,n)}\cup \cdots\cup C_{\delta_1}^{(q,n)}\subseteq \widetilde{T}^{\perp}$. We claim that $\widetilde{T}^{\perp}=C_{q+1}^{(q,n)}\cup C_{q+2}^{(q,n)}\cup \cdots\cup C_{\delta_1}^{(q,n)}(=C_{q+1}^{(q,n)}\cup C_{q+2}^{(q,n)}\cup \cdots\cup C_{\delta_1}^{(q,n)}\cup C_{\delta_1+1}^{(q,n)}\cup \cdots\cup C_{q^2-q}^{(q,n)})$, and hence $\mathcal{C}_{(q,n,1,\delta+1,0)}^{\perp}=\mathcal{C}_{(q,n,1,q^2-2q+1,q+1)}$ is a BCH code with designed distance $q^2-2q+1$ with respect to $\beta$. To prove this, we need to prove that if $a\notin C_0^{(q,n)}\cup C_1^{(q,n)}=\{0, 1,q,q^2,q^2-q+1\}$, then $a\in C_{q+1}^{(q,n)}\cup C_{q+2}^{(q,n)}\cup \cdots\cup C_{\delta_1}^{(q,n)}$. Since $C_j=C_{n-j}$ for $1\leq j<n$, we only need consider integer $a$ satisfying $2\leq a\leq q-1$ or $q+1\leq a\leq \frac{q^2+1}{2}$. If $a$ is in the latter case, the result is obviously true. If $2\leq a\leq q-1$, then $q+1\leq n-aq\leq q^2-2q+1$; the result also holds as $a\in C_{n-aq}$.
		
		When $3\leq \delta\leq \delta_2-1$, we know from Lemma \ref{l4.2} that $C_{I(\delta)+1}^{(q,n)}\cup C_{I(\delta)+2}^{(q,n)}\cup \cdots \cup C_{\delta_1}^{(q,n)}\subsetneq \widetilde{T}^{\perp}$, which implies that there is no integer $J$ with $1<J<\delta_1$ such that $\widetilde{T}^{\perp}=C_J^{(q,n)}\cup C_{J+1}^{(q,n)}\cup \cdots\cup C_{\delta_1}^{(q,n)}$, i.e., $\mathcal{C}_{(q,n,1,\delta+1,0)}$ is not a dually-BCH code.
		
		When $\delta=\delta_2$, $\widetilde{T}^{\perp}=C_{\delta_2}^{(q,n)}\cup C_{\delta_1}^{(q,n)}=C_{\frac{(q-1)^2}{2}}^{(q,n)}\cup C_{\frac{q^2+1}{2}}^{(q,n)}=C_{\frac{q^2-1}{2}}^{(q,n)}\cup C_{\frac{q^2+1}{2}}^{(q,n)}\cup C_{\frac{q^2+3}{2}}^{(q,n)}$, and thus $\mathcal{C}_{(q,n,1,\delta+1,0)}^{\perp}=\mathcal{C}_{(q,n,1,4,\frac{q^2-1}{2})}$ is a BCH code with designed distance $4$ with respect to $\beta$.
		
		When $\delta_2< \delta\leq \delta_1$, $\widetilde{T}^{\perp}=C_{\delta_1}^{(q,n)}$ and $\mathcal{C}_{(q,n,1,\delta+1,0)}^{\perp}=\mathcal{C}_{(q,n,1,2,\delta_1)}$ is a BCH code with designed distance $2$ with respect to $\beta$. 
	\end{proof}
	
		\begin{Example}{\rm 
			Take $q=5$ and $m=2$. By Magma, for $2\leq \delta\leq \delta_1=13$,  $\mathcal{C}_{(q,n,1,\delta+1,0)}$ is a cyclic dually-BCH code if and only if $\delta=2$ or $8 \leq \delta \leq 13$, which coincides with Theorem \ref{t4.2}.}
	\end{Example}
	
	\begin{Lemma}\label{l4.3}
		Suppose $q\geq 3$ is an odd prime power and $m$ is odd, or $m\geq 6$ and $m\equiv 2~({\rm mod}~4)$. For $2\leq \delta\leq \delta_2$, let $M(\delta)=\widetilde{T}^{\perp}\setminus (C_{I(\delta)+1}^{(q,n)}\cup C_{I(\delta)+2}^{(q,n)}\cup \cdots \cup C_{\delta_1}^{(q,n)})$. Then $q^2-q+1\in M(\delta)$ if $2\leq \delta\leq q^2-q+1$ and $\delta_2\in M(\delta)$ if $q^2-q+1<\delta\leq \delta_2$.
	\end{Lemma}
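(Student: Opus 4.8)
The plan is to prove $M(\delta)\neq\emptyset$ in each of the two $\delta$-ranges by exhibiting a single explicit witness, exactly in the spirit of Lemma \ref{l4.2}: the element $q^2-q+1$ when $2\le\delta\le q^2-q+1$, and the element $\delta_2$ when $q^2-q+1<\delta\le\delta_2$. For a candidate $w$, the membership $w\in M(\delta)$ splits into two claims: (a) $w\in\widetilde{T}^{\perp}$, and (b) $w\notin C_{I(\delta)+1}^{(q,n)}\cup\cdots\cup C_{\delta_1}^{(q,n)}$. Claim (a) is the easy half: in both situations $w$ is a coset leader with $w\ge\delta$ (since $\delta\le q^2-q+1$ in the first range and $\delta\le\delta_2$ in the second), so $w$ lies in none of $C_0^{(q,n)},\dots,C_{\delta-1}^{(q,n)}$, i.e. $w\notin\widetilde{T}$. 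For claim (b) I would first note that, $w$ being a coset leader, $w$ belongs to $C_{I(\delta)+1}^{(q,n)}\cup\cdots\cup C_{\delta_1}^{(q,n)}$ if and only if some element of the full $q$-cyclotomic coset $C_w^{(q,n)}$ meets the interval $[I(\delta)+1,\delta_1]$. Because $I(\delta)$ can be checked to be non-decreasing in $\delta$ directly from the piecewise formula of Lemma \ref{l4.1}, it suffices to rule this out for the smallest value $I^{*}:=\min\{I(\delta)\}$ over the relevant range, i.e. to show $C_w^{(q,n)}\cap[I^{*}+1,\delta_1]=\emptyset$.

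For the first range the witness is $w=q^2-q+1$. Using $q^{m}\equiv-1\pmod n$, the orbit of $w$ under multiplication by $q$ is $[wq^{k}]_n=q^{k}(q^2-q+1)$ for $0\le k\le m-2$, then $[wq^{m-1}]_n=q^{m-1}-q+1$, and the remaining half is obtained from the symmetry $x\mapsto n-x$ (valid precisely because $q^{m}\equiv-1$). A short comparison shows $w=q^2-q+1$ is the smallest element, so it is a coset leader, and that every element of $C_w^{(q,n)}$ not exceeding $\delta_1$ is in fact at most $q^{m-1}-q+1$, the largest such being $[wq^{m-1}]_n=q^{m-1}-q+1$ itself. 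On the other side, Lemma \ref{l4.1} gives $\min\{I(\delta):2\le\delta\le q^2-q+1\}=q^{m-1}$, attained at $\delta=2$. Since $q^{m-1}-q+1<q^{m-1}$, the coset misses $[q^{m-1}+1,\delta_1]$, which is (b).

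For the second range the witness is $w=\delta_2$, and here I would split according to the congruence condition on $m$. When $m$ is odd, writing $s=\frac{q^m+1}{q+1}$ one has $qs\equiv-s$ and hence $q\delta_2\equiv-\delta_2\equiv n-\delta_2\pmod n$, so $C_{\delta_2}^{(q,n)}=\{\delta_2,\,n-\delta_2\}$ and only $\delta_2$ is $\le\delta_1$. When $m\equiv 2\pmod 4$, writing $s=\frac{q^m+1}{q^2+1}$ one gets $q^2 s\equiv-s$, hence $q^2\delta_2\equiv-\delta_2\pmod n$, and $C_{\delta_2}^{(q,n)}=\{\delta_2,\,[q\delta_2]_n,\,n-\delta_2,\,n-[q\delta_2]_n\}$; here a direct estimate pins down which of these four elements fall below $\delta_1$ and bounds the largest of them. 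In either case one then verifies, using Lemma \ref{l4.1} together with the monotonicity of $I(\delta)$, that $I^{*}=\min\{I(\delta):q^2-q+1<\delta\le\delta_2\}=I(q^2-q+2)$ dominates that largest coset element, so $C_{\delta_2}^{(q,n)}\cap[I^{*}+1,\delta_1]=\emptyset$, giving (b).

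The main obstacle is the second range with $m\equiv 2\pmod 4$: there $\delta_2=\frac{(q-1)^2(q^m+1)}{2(q^2+1)}$ has a four-element coset, and whether $q\delta_2$ already exceeds $n$ (forcing a reduction) depends on $q$; consequently, determining which coset elements lie below $\delta_1$ and comparing the largest of them with $I(q^2-q+2)$ requires a careful case split on $q$. The evaluation of $I(q^2-q+2)$ is itself delicate, since the branch of Lemma \ref{l4.1} into which $q^2-q+2$ falls changes with $q$, so locating it correctly is the main bookkeeping step; the remaining inequalities are routine magnitude estimates.
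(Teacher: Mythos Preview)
Your plan matches the paper's proof essentially line for line: the same two witnesses $q^2-q+1$ and $\delta_2$, the same reduction to showing $C_w^{(q,n)}\cap[I^{*}+1,\delta_1]=\emptyset$ where $I^{*}=\min_\delta I(\delta)$ over the relevant range, and the same use of Lemma~\ref{l4.1} to compute $I^{*}$ (namely $q^{m-1}$ and $\frac{q^m-q^{m-2}}{2}+1$, respectively). Your orbit computation for $q^2-q+1$ agrees with the six-case check in the paper.

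Your final paragraph overstates the difficulty of the second range. No case split on $q$ is needed for $m\equiv 2\pmod 4$: writing $s=\frac{q^m+1}{q^2+1}$ one finds uniformly that
\[
C_{\delta_2}^{(q,n)}=\Bigl\{\tfrac{(q-1)^2}{2}s,\ \tfrac{q^2+3}{2}s,\ \tfrac{q^2-1}{2}s,\ \tfrac{(q+1)^2}{2}s\Bigr\},
\]
and comparing coefficients with $\frac{q^2+1}{2}$ shows that exactly $\tfrac{(q-1)^2}{2}s$ and $\tfrac{q^2-1}{2}s$ lie below $\delta_1$; the larger one satisfies $\tfrac{q^2-1}{2}s\le \frac{q^m-q^{m-2}}{2}+1=I^{*}$ by a one-line estimate. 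Likewise, the branch of Lemma~\ref{l4.1} that $\delta=q^2-q+2$ lands in does differ between $q=3$ (third branch, $\ell_0=2$) and $q>3$ (second branch, $\ell_0=2,\ell_1=0$), but both give the same value $I^{*}=\frac{q^m-q^{m-2}}{2}+1$, so there is no genuine bookkeeping issue.
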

	\begin{proof}
		When $2\leq \delta\leq q^2-q+1$, $q^2-q+1\in \widetilde{T}^{\perp}$. In addition, we see from Lemma \ref{l4.1} that ${\rm min}\{I(\delta)\mid 2\leq \delta\leq q^2-q+1\}=q^{m-1}$. To prove $q^2-q+1\in M(\delta)$, we need to prove that either $[(q^2-q+1)q^t]_n\leq q^{m-1}$ or $[(q^2-q+1)q^t]_n>\delta_1=\frac{q^m+1}{2}$ for every integer $t$ with $0\leq t\leq 2m-1$. The proof is divided into the following six cases.
		
		{\bf Case 1.} $0\leq t\leq m-3$. $[(q^2-q+1)q^t]_n=(q^2-q+1)q^t\leq (q^2-q+1)q^{m-3}=q^{m-1}-q^{m-2}+q^{m-3}<q^{m-1}$.
		
		{\bf Case 2.} $t=m-2$. $[(q^2-q+1)q^t]_n=(q^2-q+1)q^t=q^m-q^{m-1}+q^{m-2}>\delta_1$.
		
		{\bf Case 3.} $t=m-1$. In this case, $(q^2-q+1)q^t=(q-1)q^m+q^{m-1}\equiv q^{m-1}-q+1~({\rm mod}~n)$. Then $[(q^2-q+1)q^t]_n=q^{m-1}-q+1<q^{m-1}$.
		
		{\bf Case 4.} $m\leq t\leq 2m-3$. Let $t'=t-m$, then $0\leq t'\leq m-3$ and $[(q^2-q+1)q^t]_n=n-[(q^2-q+1)q^{t'}]_n$. It follows from Case 1 that $[(q^2-q+1)q^t]_n\geq q^m-q^{m-1}+q^{m-2}-q^{m-3}+1>\delta_1$.
		
		{\bf Case 5.} $t=2m-2$. We see from Case 2 that $[(q^2-q+1)q^t]_n=n-[(q^2-q+1)q^{m-2}]_n=q^{m-1}-q^{m-2}+1<q^{m-1}$.
		
		{\bf Case 6.} $t=2m-1$. From Case 3 we have $[(q^2-q+1)q^t]_n=n-[(q^2-q+1)q^{m-1}]_n=q^{m}-q^{m-1}+q>\delta_1$.
		
		When $q^2-q+1<\delta\leq \delta_2$, it is obvious that $\delta_2\in \widetilde{T}^{\perp}$. It follows from Lemma \ref{l4.1} that ${\rm min}\{I(\delta)\mid q^2-q+1<\delta\leq \delta_2\}=\frac{q^m-q^{m-2}}{2}+1$. If $m$ is odd, then  $\delta_2=\frac{(q-1)(q^m+1)}{2(q+1)}$ according to Lemma \ref{l4.0}. Note that $C_{\delta_2}^{(q,n)}=\big\{\frac{(q-1)(q^m+1)}{2(q+1)}, \frac{(q+3)(q^m+1)}{2(q+1)}\big\}$. Then it is easy to see that $\delta_2\in M(\delta)$. If $m\geq 6$ and $m\equiv 2~({\rm mod}~4)$, Lemma \ref{l4.0} tells us that $\delta_2=\frac{(q-1)^2(q^m+1)}{2(q^2+1)}$. In this case, one can check that $C_{\delta_2}^{(q,n)}=\big\{\frac{(q-1)^2(q^m+1)}{2(q^2+1)}, \frac{(q^2+3)(q^m+1)}{2(q^2+1)}, \frac{(q^2-1)(q^m+1)}{2(q^2+1)},\frac{(q+1)^2(q^m+1)}{2(q^2+1)}\big\}\subseteq M(\delta)$. This completes the proof.
	\end{proof}

	\begin{Theorem}\label{t4.3}
		Suppose $q\geq 3$ is an odd prime power and $m$ is odd, or $m\geq 6$ and $m\equiv 2~({\rm mod}~4)$. For $2\leq \delta\leq \delta_1$, $\mathcal{C}_{(q,n,1,\delta+1,0)}$ is a cyclic dually-BCH code if and only if
		$$\delta_2+1\leq \delta\leq \delta_1.$$
	\end{Theorem}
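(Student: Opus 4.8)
The plan is to follow the same two-step template used for Theorem~\ref{t4.2}, relying on the criterion recorded just before Lemma~\ref{l4.2}: the code $\mathcal{C}_{(q,n,1,\delta+1,0)}$ is dually-BCH precisely when its dual defining set $\widetilde{T}^{\perp}=\mathbb{Z}_n\setminus \widetilde{T}$ can be written as a run of consecutive cyclotomic cosets $C_J^{(q,n)}\cup C_{J+1}^{(q,n)}\cup\cdots\cup C_{\delta_1}^{(q,n)}$ ending at $\delta_1$. The driving reformulation is that $a\in \widetilde{T}=C_0^{(q,n)}\cup\cdots\cup C_{\delta-1}^{(q,n)}$ exactly when the coset leader of $C_a^{(q,n)}$ is at most $\delta-1$, so $a\in \widetilde{T}^{\perp}$ exactly when that leader is at least $\delta$.

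For the sufficiency ($\delta_2+1\leq \delta\leq \delta_1$) I would show directly that $\widetilde{T}^{\perp}=C_{\delta_1}^{(q,n)}$. By the reformulation, every element of $\widetilde{T}^{\perp}$ lies in a coset whose leader is $\geq \delta\geq \delta_2+1$. By Lemma~\ref{l4.0}, $\delta_1$ and $\delta_2$ are the two largest coset leaders modulo $q^m+1$, so no coset leader lies strictly between $\delta_2$ and $\delta_1$; since $\delta\leq \delta_1$, the only coset leader that is $\geq \delta$ is $\delta_1$ itself. Hence $\widetilde{T}^{\perp}$ collapses to the single coset $C_{\delta_1}^{(q,n)}=\{\delta_1\}$, the dual is the BCH code $\mathcal{C}_{(q,n,1,2,\delta_1)}$ of designed distance $2$, and $\mathcal{C}_{(q,n,1,\delta+1,0)}$ is dually-BCH. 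This step is short and uses only Lemma~\ref{l4.0}.

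For the necessity ($2\leq \delta\leq \delta_2$) I would argue by contradiction. Suppose $\widetilde{T}^{\perp}=C_J^{(q,n)}\cup\cdots\cup C_{\delta_1}^{(q,n)}$ for some $J\leq \delta_1$. By Lemma~\ref{l4.1}, $1\leq I(\delta)<\delta_1$, $I(\delta)\notin T^{\perp}$ (hence $I(\delta)\notin \widetilde{T}^{\perp}$, as $I(\delta)\neq 0$), while $\{I(\delta)+1,\ldots,\delta_1\}\subseteq T^{\perp}$; since none of the cosets $C_{I(\delta)+1}^{(q,n)},\ldots,C_{\delta_1}^{(q,n)}$ contains $0$, they all sit inside $\widetilde{T}^{\perp}$. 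Because $\{J,J+1,\ldots,\delta_1\}\subseteq \widetilde{T}^{\perp}$ but $I(\delta)\notin \widetilde{T}^{\perp}$ with $I(\delta)<\delta_1$, we must have $J\geq I(\delta)+1$, which forces $\widetilde{T}^{\perp}=C_J^{(q,n)}\cup\cdots\cup C_{\delta_1}^{(q,n)}\subseteq C_{I(\delta)+1}^{(q,n)}\cup\cdots\cup C_{\delta_1}^{(q,n)}$, i.e. $M(\delta)=\emptyset$. This contradicts Lemma~\ref{l4.3}, which exhibits $q^2-q+1$ or $\delta_2$ as an element of $M(\delta)$ throughout $2\leq \delta\leq \delta_2$. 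Hence the code is not dually-BCH in this range, and combining the two directions yields the claimed equivalence.

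The genuine content lives in the preparatory lemmas rather than in this assembly: Lemma~\ref{l4.1} pins down the end of the longest consecutive run inside $T^{\perp}$, and Lemma~\ref{l4.3} certifies that $\widetilde{T}^{\perp}$ always overflows the block $C_{I(\delta)+1}^{(q,n)}\cup\cdots\cup C_{\delta_1}^{(q,n)}$ for $2\leq \delta\leq \delta_2$. The hypothesis that $m$ is odd or $m\geq 6$ with $m\equiv 2~({\rm mod}~4)$ is exactly what Lemma~\ref{l4.3} requires; the case $m=2$ is excluded because the extra coincidences at $\delta=2$ and $\delta=\delta_2$ handled in Theorem~\ref{t4.2} would otherwise spoil the clean cutoff at $\delta_2$. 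The only point demanding care in the write-up is the bookkeeping between $T^{\perp}$ and $\widetilde{T}^{\perp}$, which differ solely by the element $0=C_0^{(q,n)}$, so that deleting $0$ leaves all the cosets $C_{I(\delta)+1}^{(q,n)},\ldots,C_{\delta_1}^{(q,n)}$ untouched.
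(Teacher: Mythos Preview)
Your proposal is correct and follows essentially the same approach as the paper, which simply records that ``the desired result follows directly from Lemma~\ref{l4.3}.'' You have spelled out exactly the argument the authors intend: for $\delta_2+1\le\delta\le\delta_1$ the only surviving coset is $C_{\delta_1}^{(q,n)}$ by Lemma~\ref{l4.0}, while for $2\le\delta\le\delta_2$ the nonemptiness of $M(\delta)$ from Lemma~\ref{l4.3} blocks any representation $\widetilde{T}^{\perp}=C_J^{(q,n)}\cup\cdots\cup C_{\delta_1}^{(q,n)}$, mirroring the template of Theorem~\ref{t4.2}.
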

	\begin{proof}
		The desired result follows directly from Lemma \ref{l4.3}.
	\end{proof}
	
		\begin{Example}{\rm 
			Take $q=3$ and $m=3$. By Magma, for $2\leq \delta\leq \delta_1=14$,  $\mathcal{C}_{(q,n,1,\delta+1,0)}$ is a cyclic dually-BCH code if and only if $8 \leq \delta \leq 14$, which coincides with Theorem \ref{t4.3}.}
	\end{Example}
	
	\section{The duals of negacyclic BCH codes of length $\frac{q^m+1}{2}$}	
	
	Throughout this section, we always assume that $n=\frac{q^m+1}{2}$ is the length of the considered negacyclic codes, where $m\geq 2$ and $q$ is an odd integer satisfying $q\equiv 3~({\rm mod}~4)$. For $2\leq \delta<\frac{\phi_1+3}{2}$, let $\mathcal{C}_{(q,n,-1,\delta,1)}$ be the narrow-sense negacyclic BCH code of length $n$ over $\mathbb{F}_q$ with designed distance $\delta$ with respect to a $2n$-th primitive root of unity $\beta$. By definition, the defining set of  $\mathcal{C}_{(q,n,-1,\delta,1)}$ with respect to $\beta$ is $T=C_1^{(q,2n)}\cup C_3^{(q,2n)}\cup \cdots \cup C_{1+2(\delta-2)}^{(q,2n)}$. Let $T^{\perp}$ be the defining set of the dual code $\mathcal{C}_{(q,n,-1,\delta,1)}^{\perp}$ with respect to $\beta$. It is easy to check that $T^{\perp}=(1+2\mathbb{Z}_{2n})\setminus T$ and $\phi_1\in T^{\perp}$. In the following, we will give a lower bound on the minimum distance of the dual code $\mathcal{C}_{(q,n,-1,\delta,1)}^{\perp}$ and present a sufficient and necessary condition in terms of $\delta$ for $\mathcal{C}_{(q,n,-1,\delta,1)}$ being a negacyclic dually-BCH code. We treat the cases $q=3$ and $q>3$ seperately as the results of these two cases are distinct.
	
	\subsection*{A. The case $q=3$}
	In this subsection, we consider the duals of ternay narrow-sense negacyclic BCH codes of length $n=\frac{3^m+1}{2}$. Then two subcases arise: $m$ is odd and $m$ is even.
	\subsection*{A.1. The subcase $m$ is odd}
	
	Assume that $q=3$ and $m$ is odd.	To give a lower bound on the minimum distance of $\mathcal{C}_{(q,n,-1,\delta,1)}^{\perp}$ and present a characterization of $\mathcal{C}_{(q,n,-1,\delta,1)}$ being a negacyclic dually-BCH code, we will need the following two lemmas.
	\begin{Lemma}\label{l3.1}
		Suppose $q=3$ and $m\geq 3$ is odd. For $2\leq \delta< \frac{\phi_2+3}{2}$, let $I_1(\delta)$ be the odd integer such that $I_1(\delta)\not\in T^{\perp}$ and $\{I_1(\delta)+2,I_1(\delta)+4,\cdots,\phi_1\}\in T^{\perp}$. Then
		\begin{align*}
			I_1(\delta)=\begin{cases}
				\frac{3^m-5\cdot 3^{m-2\ell}}{4}, & {\rm if}~\frac{3^{2\ell}+7}{8}\leq \delta \leq \frac{3^{2\ell+1}-3}{8};\\
				\frac{3^m-7\cdot 3^{m-2\ell-1}}{4}, & {\rm if}~\delta = \frac{3^{2\ell+1}+5}{8};\\
				\frac{3^m-3^{m-2\ell}+4}{4}, & {\rm if}~\frac{3^{2\ell+1}+13}{8}\leq \delta \leq \frac{3^{2\ell+2}-1}{8};\\
				\frac{3^m-15}{4}, & {\rm if}~\frac{3^{m-1}+7}{8}\leq \delta \leq \frac{19\cdot 3^{m-3}+5}{8};\\
				\frac{3^m-7}{4}, & {\rm if}~\frac{19\cdot 3^{m-3}+13}{8}\leq \delta < \frac{\phi_2+3}{2},
			\end{cases}
			~~{\rm where}~1\leq \ell \leq \frac{m-3}{2}.
		\end{align*}
	\end{Lemma}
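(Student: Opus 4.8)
The plan is to follow the same conjugation-and-base-$3$-expansion strategy used in the proof of Lemma~\ref{l4.1}, now carried out inside the set of odd residues $1+2\mathbb{Z}_{2n}$ with $2n=3^m+1$. Since $q=3$ has order $r=2$ in $\mathbb{F}_q^{*}$ and $\mathrm{ord}_{2n}(3)=2m$, a given odd $a$ belongs to $T$ exactly when its coset leader is at most $2\delta-3$, i.e. when $[aq^{t}]_{2n}\le 2\delta-3$ for some $t$, and it belongs to $T^{\perp}$ exactly when $[aq^{t}]_{2n}\ge 2\delta-1$ for every $t$ with $0\le t\le 2m-1$. Thus in each of the five ranges of $\delta$ the characterization of $I_1(\delta)$ splits into two tasks: (i) show $I_1(\delta)\in T$, and (ii) show every odd integer in $\{I_1(\delta)+2,I_1(\delta)+4,\dots,\phi_1\}$ lies in $T^{\perp}$.

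For task (i) I would, in each range, exhibit one explicit power $q^{t}$ for which $[I_1(\delta)q^{t}]_{2n}\le 2\delta-3$. Each candidate $I_1(\delta)$ is a near-multiple of a power of $3$: for instance in the first range $\frac{3^m-5\cdot 3^{m-2\ell}}{4}=3^{m-2\ell}\cdot\frac{3^{2\ell}-5}{4}$, and multiplying by $q^{2\ell+m}$ and reducing modulo $2n$ via $q^m\equiv-1$ collapses it to the small odd conjugate $\frac{3^{2\ell}-5}{4}$, which is at most $2\delta-3$ by the left endpoint $\delta=\frac{3^{2\ell}+7}{8}$ of that interval. The remaining four ranges are handled by the same kind of single reduction.

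Task (ii) is the substantive part and proceeds exactly as in Lemma~\ref{l4.1}. I would fix an arbitrary odd $a$ with $I_1(\delta)<a\le\phi_1$, write its offset in base $3$, and verify $[aq^{t}]_{2n}\ge 2\delta-1$ for all $0\le t\le 2m-1$ by a case split on $t$. The key structural tool is $q^m\equiv-1~(\mathrm{mod}~2n)$, which gives $[aq^{t+m}]_{2n}=2n-[aq^{t}]_{2n}$ and lets me reduce the range $m\le t\le 2m-1$ to the range $0\le t\le m-1$, turning a lower bound on the second half into an upper bound on the first half. Within $0\le t\le m-1$ I expect to distinguish sub-cases according to how $t$ compares with the exponent $2\ell$ (respectively $2\ell+1$ or $m-1$) appearing in the formula for $I_1(\delta)$, tracking the carries produced by the reduction modulo $2n$ and bounding $[aq^{t}]_{2n}$ from below by $2\delta-1$ and from above by $2n-(2\delta-1)$.

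The main obstacle will be the bookkeeping at the boundaries between the five $\delta$-ranges and the accompanying base-$3$ digit constraints: near the endpoints (for example the isolated value $\delta=\frac{3^{2\ell+1}+5}{8}$, and the transitions at $\frac{3^{m-1}+7}{8}$ and $\frac{19\cdot 3^{m-3}+13}{8}$) the inequality $[aq^{t}]_{2n}\ge 2\delta-1$ becomes tight, so the digit analysis and the carry from the congruence $q^m\equiv-1$ must be controlled exactly rather than estimated crudely. Once one range is carried out in full, the others follow the identical pattern with only the endpoint arithmetic changed, so I would write out one representative range (say the first) completely and state that the remaining four are analogous, mirroring the presentation of Lemma~\ref{l4.1}.
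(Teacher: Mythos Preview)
Your proposal is correct and follows essentially the same approach as the paper's own proof. The paper likewise proves only the first range in detail (stating the others are similar), observes that $\frac{3^m-5\cdot 3^{m-2\ell}}{4}=\frac{3^{2\ell}-5}{4}\cdot 3^{m-2\ell}\in C_{\frac{3^{2\ell}-5}{4}}^{(q,2n)}$ for task~(i), and for task~(ii) writes $a=\frac{3^m-5\cdot 3^{m-2\ell}}{4}+u$ with $u$ expanded in base~$3$, then verifies $[a\cdot 3^t]_{2n}>\frac{3^{2\ell+1}-15}{4}$ via a six-case split on $t$ (namely $t=0$; $1\le t\le 2\ell-1$; $2\ell\le t\le m-1$; $t=m$; $m+1\le t\le m+2\ell-1$; $m+2\ell\le t\le 2m-1$), using $3^m\equiv -1$ to reduce the last three cases to the first three exactly as you describe.
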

	
	\begin{proof}
		We prove the desired conclusions only for the case $\frac{3^{2\ell}+7}{8}\leq \delta \leq \frac{3^{2\ell+1}-3}{8}$, as the other cases can be similarly proved. In this case, $\frac{3^{2\ell}-5}{4}\leq 1+2(\delta-2)\leq \frac{3^{2\ell+1}-15}{4}$. It is clear that $\frac{3^m-5\cdot 3^{m-2\ell}}{4}=\frac{3^{2\ell}-5}{4}\cdot 3^{m-2\ell}\in C_{\frac{3^{2\ell}-5}{4}}^{(q,2n)}\nsubseteq T^{\perp}=(1+2\mathbb{Z}_{2n})\setminus \big(C_1^{(q,2n)}\cup C_3^{(q,2n)}\cup \cdots \cup C_{1+2(\delta-2)}^{(q,2n)}\big)$. 
		
		Next we need to show that $\Gamma:=\big\{\frac{3^m-5\cdot 3^{m-2\ell}}{4}+2, \frac{3^m-5\cdot 3^{m-2\ell}}{4}+4,\cdots, \phi_1\big\}\subseteq T^{\perp}$. For every integer $a\in \Gamma$, we have $a=\frac{3^m-5\cdot 3^{m-2\ell}}{4}+u$, where $u$ is an even integer satisfying $2\leq u\leq \phi_1-\frac{3^m-5\cdot 3^{m-2\ell}}{4}=3^{m-2\ell}+\frac{3^{m-2\ell}+1}{4}$. Suppose $u=\sum_{j=0}^{m-2\ell}a_j\cdot 3^j$, where $a_{m-2\ell}\in \{0,1\}$ and $a_j\in \{0,1,2\}$ with $0\leq j\leq m-2\ell-1$. To prove $a\in T^{\perp}$, it suffices to prove that $[a\cdot 3^t]_{2n}>\frac{3^{2\ell+1}-15}{4}$ for every integer $t$ with $0\leq t\leq 2m-1$. We organize the proof into the following six cases according to the value of $t$.
		
		{\bf Case 1.} $t=0$. It is clear that $[a\cdot 3^t]_{2n}=a\geq \frac{3^m-5\cdot 3^{m-2\ell}}{4}+2>\frac{3^{2\ell+1}-15}{4}$.
		
		{\bf Case 2.} $1\leq t\leq 2\ell-1$. If $t$ is odd, then
		\begin{small}
			$$a\cdot3^t=\frac{3^{m+t}\!-\!5\!\cdot\! 3^{m-2\ell+t}}{4}+\!u\cdot 3^t\equiv \frac{3^{m+1}\!-\!5\!\cdot\! 3^{m-2\ell+t}\!+\!3^t\!-\!3}{4}+\!u\cdot 3^t\triangleq M_t~({\rm mod}~2n).$$
		\end{small}
		Note that $0<M_t\leq \frac{3^{m+1}-5\cdot 3^{m-2\ell+t}+3^t-3}{4}+\frac{5\cdot 3^{m-2\ell}+1}{4}\cdot 3^t=\frac{3^{m+1}+2\cdot 3^t-3}{4}<2n$. Thus
		\begin{small}
			\begin{align*}
				[a\cdot 3^t]_{2n}&=M_t\geq \frac{3^{m+1}-5\cdot 3^{m-2\ell+t}+3^t-3}{4}+2\cdot 3^t=\frac{3^{m+1}-3^t(5\cdot 3^{m-2\ell}-3^2)-3}{4}\\
				&\geq \frac{3^{m+1}-3^{2\ell-1}(5\cdot 3^{m-2\ell}-3^2)-3}{4}=\frac{3^{m}+3^{m-1}+3^{2\ell+1}-3}{4}>\frac{3^{2\ell+1}-15}{4}.
			\end{align*}
		\end{small}
		
		If $t$ is even, then 
		\begin{small}
			$$a\cdot 3^t=\frac{3^{m+t}\!-\!5\!\cdot\! 3^{m-2\ell+t}}{4}+u\cdot 3^t\equiv \frac{3^m\!-\!5\!\cdot\! 3^{m-2\ell+t}\!+\!3^t\!-\!1}{4}+u\cdot 3^t\triangleq M_t~({\rm mod}~2n).$$
		\end{small}
		As $0<M_t\leq \frac{3^m-5\cdot 3^{m-2\ell+t}+3^t-1}{4}+\frac{5\cdot 3^{m-2\ell}+1}{4}\cdot 3^t=\frac{3^m+2\cdot 3^t-1}{4}<2n$, we have
		\begin{small}
			\begin{align*}
				[a\cdot 3^t]_{2n}&=M_t\geq \frac{3^m-5\cdot 3^{m-2\ell+t}+3^t-1}{4}+2\cdot 3^t=\frac{3^m-3^t(5\cdot 3^{m-2\ell}-3^2)-1}{4}\\
				&\geq \frac{3^m-3^{2\ell-2}(5\cdot 3^{m-2\ell}-3^2)-1}{4}=\frac{3^{m-1}+3^{m-2}+3^{2\ell}-1}{4}>\frac{3^{2\ell+1}-15}{4}.
			\end{align*}
		\end{small}
		
		{\bf Case 3.} $2\ell\leq t\leq m-1$. In this case, 
		\begin{small}
			$$a\!\cdot\! 3^t\!=\!\frac{3^{m-2\ell+t}(3^{2\ell}\!-\!5)}{4}+\!\!\sum_{j=t}^{m-2\ell+t}\!\!a_{j-t}3^j\equiv\! \sum_{j=t}^{m-1}\!a_{j-t}3^j\!-\!\frac{3^t\!-\!5\!\cdot\! 3^{t-2\ell}}{4}\!-\!\sum_{j=0}^{t-2\ell}\!a_{j+m-t}3^j\triangleq M_t~({\rm mod}~2n).$$
		\end{small}
		If there is an integer $j$ with $t\leq j\leq m-1$ such that $a_{j-t}\neq 0$, then
		\begin{small}
			$$M_t\geq 3^t-\frac{3^t-5\cdot 3^{t-2\ell}}{4}-\sum_{j=0}^{t-2\ell}2\cdot 3^j=\frac{3^{t-2\ell}(3^{2\ell+1}-7)+4}{4}\geq \frac{3^{2\ell+1}-3}{4}>0,~{\rm and}$$
			$$M_t\leq \sum_{j=t}^{m-1}2\cdot 3^j-\frac{3^t-5\cdot 3^{t-2\ell}}{4}=3^m-3^t-\frac{3^t-5\cdot 3^{t-2\ell}}{4}<2n.$$
		\end{small}
		Thus $[a\cdot 3^t]_{2n}=M_t>\frac{3^{2\ell+1}-15}{4}$. If $a_{j-t}=0$ for every integer $j$ with $t\leq j\leq m-1$, then
		\begin{small}
			\begin{align*}
				[a\cdot 3^t]_{2n}&=2n-\frac{3^t-5\cdot 3^{t-2\ell}}{4}-\sum_{j=0}^{t-2\ell}a_{j+m-t}3^j\geq 3^m+1-\frac{3^t-5\cdot 3^{t-2\ell}}{4}-\sum_{j=0}^{t-2\ell}2\cdot 3^j\\
				&=3^m+2-\frac{3^{t-2\ell}(3^{2\ell}+17)}{4}\geq 3^m+2-\frac{3^{m-1-2\ell}(3^{2\ell}+17)}{4}\\
				&=\frac{3^{m+1}+2\cdot 3^{m-1}-2\cdot 3^{m-2\ell+1}+3^{m-2\ell-1}+8}{4}>\frac{3^{2\ell+1}-15}{4}.
			\end{align*}
		\end{small}
		
		{\bf Case 4.} $t=m$. Then $[a\cdot 3^t]_{2n}=2n-a\geq 3^m+1-\frac{3^m+1}{4}=\frac{3^{m+1}+3}{4}>\frac{3^{2\ell+1}-15}{4}$.
		
		{\bf Case 5.} $m+1\leq t\leq m+2\ell-1$. Let $t'=t-m$, then $1\leq t'\leq 2\ell-1$ and $[a\cdot 3^t]_{2n}=2n-[a\cdot 3^{t'}]_{2n}$. If $t$ is even, then $t'$ is odd and we see from Case 2 that
		\begin{small}
			$$[a\cdot 3^t]_{2n}\geq 2n-\frac{3^{m+1}+2\cdot 3^{t'}-3}{4}=\frac{3^m-2\cdot 3^{t'}+7}{4}\geq \frac{3^m-2\cdot 3^{2\ell-1}+7}{4}>\frac{3^{2\ell+1}-15}{4}.$$
		\end{small}
		If $t$ is odd, then $t'$ is even and from Case 2 we have
		\begin{small}
			$$[a\cdot 3^t]_{2n}\geq 2n-\frac{3^m+2\cdot 3^{t'}-1}{4}=\frac{3^{m+1}-2\cdot 3^{t'}+5}{4}\geq \frac{3^{m+1}-2\cdot 3^{2\ell-2}+5}{4}>\frac{3^{2\ell+1}-15}{4}.$$
		\end{small}
		
		{\bf Case 6.} $m+2\ell\leq t\leq 2m-1$. Let $t'=t-m$, then $2\ell\leq t'\leq m-1$ and $[a\cdot 3^t]_{2n}=2n-[a\cdot 3^{t'}]_{2n}$. According to Case 3, we know that if there is an integer $j$ with $t'\leq j\leq m-1$ such that $a_{j-t'}\neq 0$, then
		\begin{small}
			\begin{align*}
				[a\cdot 3^t]_{2n}&\geq 2n\!-3^m+3^{t'}+\frac{3^{t'}\!\!-\!5\cdot 3^{t'-2\ell}}{4}=\frac{5\cdot 3^{t'-2\ell}(3^{2\ell}\!-\!1)+4}{4}\geq \frac{5(3^{2\ell}\!-\!1)+4}{4}>\frac{3^{2\ell+1}\!-\!15}{4}.
			\end{align*}
		\end{small}
		
		\noindent If $a_{j-t'}=0$ for every integer $j$ with $t'\leq j\leq m-1$, according to Case 3 again we have
		\begin{small}
			$$[a\cdot 3^t]_{2n}=\frac{3^{t'}-5\cdot 3^{t'-2\ell}}{4}+\sum_{j=0}^{t'-2\ell}a_{j+m-t'}3^j.$$
		\end{small}
		In this subcase, we assert that $t'\geq 2\ell+1$; otherwise, $u=3^{m-2\ell}$, which is impossible as $u$ is an even integer. Moreover, when $t'=2\ell+1$, we have $u=a_{m-2\ell}3^{m-2\ell}+a_{m-2\ell-1}3^{m-2\ell-1}$, which forces $(a_{m-2\ell},a_{m-2\ell-1})=(0,2)$ given that $u$ is even and $2\leq u\leq 3^{m-2\ell}+\frac{3^{m-2\ell}+1}{4}$. It follows that
		\begin{small}
			$$[a\cdot 3^t]_{2n}\geq \frac{3^{2\ell+1}-5\cdot 3}{4}+a_{m-2\ell}\cdot 3+a_{m-2\ell-1}\geq \frac{3^{2\ell+1}-5\cdot 3}{4}+2=\frac{3^{2\ell+1}-7}{4}>\frac{3^{2\ell+1}-15}{4}.$$
		\end{small}
		
		The proof is then completed.
	\end{proof}
	
	\begin{Lemma}\label{l3.2}
		Suppose $q=3$ and $m\geq 3$ is odd. For $2\leq \delta< \frac{\phi_1+3}{2}$, let $I_2(\delta)$ be the odd integer such that $\{\phi_1,\phi_1+2,\cdots,I_2(\delta)-2\}\in T^{\perp}$ and $I_2(\delta)\not\in T^{\perp}$. Then
		\begin{small}
			$$I_{2}(\delta)=\frac{3^m+3^{m-2\ell+1}}{4}~{\rm if}~\frac{3^{2\ell-1}+13}{8}\leq \delta \leq \frac{3^{2\ell+1}+5}{8},~{\rm where}~1\leq \ell \leq \frac{m-1}{2}.$$
		\end{small}
	\end{Lemma}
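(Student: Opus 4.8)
The plan is to follow the same two-step strategy as in the proof of Lemma \ref{l3.1}, using throughout that $T=\bigcup_{i=0}^{\delta-2}C_{1+2i}^{(q,2n)}$: an odd integer $a$ lies in $T$ exactly when some conjugate $[a\cdot 3^t]_{2n}$ is at most $2\delta-3=1+2(\delta-2)$, and lies in $T^{\perp}$ exactly when $[a\cdot 3^t]_{2n}>2\delta-3$ for every $t$. Since $\phi_1=\frac{3^m+1}{4}$ for $m$ odd, writing $I_2:=\frac{3^m+3^{m-2\ell+1}}{4}$ it suffices to establish two things: (i) $I_2\in T$, so that $I_2\notin T^{\perp}$; and (ii) every element of $\{\phi_1,\phi_1+2,\dots,I_2-2\}$ lies in $T^{\perp}$.

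For (i) I would exhibit one small conjugate of $I_2$ explicitly. As $m$ is odd, $2\ell-1$ is odd and hence $4\mid 3^{2\ell-1}+1$; together with $3^{2m}\equiv 1\pmod{2n}$ this yields
\begin{align*}
I_2\cdot 3^{m+2\ell-1}=\frac{3^{2m}\bigl(3^{2\ell-1}+1\bigr)}{4}\equiv\frac{3^{2\ell-1}+1}{4}\pmod{2n}.
\end{align*}
The number $\frac{3^{2\ell-1}+1}{4}$ is an odd integer with $0<\frac{3^{2\ell-1}+1}{4}<2n$, so $[I_2\cdot 3^{m+2\ell-1}]_{2n}=\frac{3^{2\ell-1}+1}{4}$. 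The inequality $\frac{3^{2\ell-1}+1}{4}\le 2\delta-3$ is equivalent to $\delta\ge\frac{3^{2\ell-1}+13}{8}$, which is precisely the lower endpoint of the stated range; thus $C_{I_2}=C_{(3^{2\ell-1}+1)/4}$ is one of the defining cosets $C_{1+2i}$ and $I_2\in T$.

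For (ii) I would write a generic element as $a=I_2-v$ with $v$ even and $2\le v\le I_2-\phi_1=\frac{3^{m-2\ell+1}-1}{4}<3^{m-2\ell}$, expand $v$ in base $3$, and prove that $[a\cdot 3^t]_{2n}>\frac{3^{2\ell+1}-7}{4}$ for all $t$ with $0\le t\le 2m-1$. Since $\delta\le\frac{3^{2\ell+1}+5}{8}$ is equivalent to $2\delta-3\le\frac{3^{2\ell+1}-7}{4}$, this one threshold handles every admissible $\delta$ simultaneously and forces $a\in T^{\perp}$. As in Lemma \ref{l3.1}, the verification splits into six ranges of $t$, namely $t=0$, $1\le t\le 2\ell-1$, $2\ell\le t\le m-1$, $t=m$, $m+1\le t\le m+2\ell-1$, and $m+2\ell\le t\le 2m-1$, the last three being the negation-reflections (via $3^m\equiv-1\pmod{2n}$) of the first three; in each range one reduces $a\cdot 3^t$ to an explicit residue $M_t$ and bounds it from below and above.

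The main obstacle will be the range $2\ell\le t\le m-1$ and its reflection $m+2\ell\le t\le 2m-1$, where the residue $M_t$ branches according to whether the low base-$3$ digits of $v$ vanish, the vanishing subcase requiring an extra multiple of $2n$ before estimating. The tightest estimate sits at the top of the run, at $v=2$ and $t=m+2\ell-1$: here the computation behind (i) gives $[a\cdot 3^{2\ell-1}]_{2n}=2n-\frac{3^{2\ell+1}+1}{4}$, so that $[a\cdot 3^{m+2\ell-1}]_{2n}=\frac{3^{2\ell+1}+1}{4}$, which beats the threshold $\frac{3^{2\ell+1}-7}{4}$ by exactly $2$. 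That $v$ must be even, and hence $v\ge 2$ because $a$ is odd, is precisely what keeps this minimal conjugate above the threshold; this is the counterpart of the parity argument that closes Case 6 in the proof of Lemma \ref{l3.1}. Once all six ranges are checked, (i) and (ii) together give the claimed value of $I_2(\delta)$.
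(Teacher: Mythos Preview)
Your proposal is correct and follows the paper's two-step template: exhibit one small conjugate of $I_2$ to place it in $T$ (the paper observes $I_2=\frac{3^{2\ell-1}+1}{4}\cdot 3^{m-2\ell+1}\in C_{(3^{2\ell-1}+1)/4}^{(q,2n)}$, which is equivalent to your $t=m+2\ell-1$ computation), and then show that every odd $a$ with $\phi_1\le a\le I_2-2$ has all conjugates strictly above the single threshold $\frac{3^{2\ell+1}-7}{4}$.

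The only substantive difference is the parametrization of $\Gamma$. The paper writes $a=\phi_1+u$ with $u$ even and $0\le u\le\frac{3^{m-2\ell+1}-9}{4}$; because $\phi_1\cdot 3^t$ alternates between $\frac{3^m+1}{4}$ and $\frac{3^{m+1}+3}{4}$ according to the parity of $t$, and because the boundary shifts $t=m-1$ and $t=2m-1$ behave differently from the interior, the paper ends up with eight $t$-ranges rather than six. Your choice $a=I_2-v$ is structurally closer to the $a=I_1+u$ setup of Lemma~\ref{l3.1} (both anchors are a coset leader times a power of $3$), so the six-range template of that lemma is indeed the natural one for your parametrization. Your identification of the extremal conjugate $\frac{3^{2\ell+1}+1}{4}$ at $(v,t)=(2,\,m+2\ell-1)$, and the observation that the parity constraint forcing $v\ge 2$ is exactly what keeps this above the threshold by $2$, match the tightest step in the paper's argument.
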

	
	\begin{proof}
		When $\frac{3^{2\ell-1}+13}{8}\leq \delta \leq \frac{3^{2\ell+1}+5}{8}$, $\frac{3^{2\ell-1}+1}{4}\leq 1+2(\delta-2)\leq \frac{3^{2\ell+1}-7}{4}$. It is easy to see that $\frac{3^m+3^{m-2\ell+1}}{4}=\frac{3^{2\ell-1}+1}{4}\cdot 3^{m-2\ell+1}\in C_{\frac{3^{2\ell-1}+1}{4}}^{(q,2n)}\nsubseteq T^{\perp}=(1+2\mathbb{Z}_{2n})\setminus (C_1^{(q,2n)}\cup C_3^{(q,2n)}\cup\cdots\cup C_{1+2(\delta-2)}^{(q,2n)})$. We next prove that
		\begin{small}
			$$\Gamma:=\Big\{\frac{3^m+1}{4},\frac{3^m+1}{4}+2,\cdots,\frac{3^m+3^{m-2\ell+1}}{4}-2\Big\}\subseteq T^{\perp}.$$
		\end{small}
		It is obvious that $\phi_1=\frac{3^m+1}{4}\in T^{\perp}$. For $a\in\Gamma\setminus \{\phi_1\}$, $a=\frac{3^m+1}{4}+u$, where $u$ is an even integer satisfying $2\leq u\leq \frac{3^{m-2\ell+1}-9}{4}=2\cdot 3^{m-2\ell-1}+2\cdot 3^{m-2\ell-3}+\cdots+2\cdot 3^{4}+2\cdot 3^{2}$. Suppose $u=\sum_{j=0}^{m-2\ell-1}a_j\cdot 3^j$, where $a_{j}\in\{0,1,2\}$. We need to prove that $[a\cdot 3^t]_{2n}>\frac{3^{2\ell+1}-7}{4}$ for every integer $t$ with $0\leq t\leq 2m-1$.
		The proof is divided into eight cases.
		
		{\bf Case 1.} $t=0$. Then $[a\cdot 3^t]_{2n}=a>\frac{3^m+1}{4}> \frac{3^{2\ell+1}-7}{4}$.
		
		{\bf Case 2.} $1\leq t\leq 2\ell$. Suppose $t$ is odd, then
		\begin{small}
			$$a\cdot 3^t=\frac{3^m+1}{4}\cdot 3^t+u\cdot 3^t\equiv \frac{3^{m+1}+3}{4}+u\cdot 3^t \triangleq M_t~({\rm mod}~2n).$$
		\end{small}
		As $0<M_t\leq \frac{3^{m+1}+3}{4}+\frac{3^{m-2\ell+1}-9}{4}\cdot 3^t=3^m-\frac{3^{2\ell+1}-3}{4}<2n$, we have $[a\cdot 3^{t}]_{2n}=M_t> \frac{3^{m+1}+3}{4}>\frac{3^{2\ell+1}-7}{4}$.
		
		Suppose $t$ is even, then
		\begin{small}
			$$a\cdot 3^t=\frac{3^m+1}{4}\cdot 3^t+u\cdot 3^t\equiv \frac{3^{m}+1}{4}+u\cdot 3^t \triangleq M_t~({\rm mod}~2n).$$
		\end{small}
		As $0<M_t\leq \frac{3^{m}+1}{4}+\frac{3^{m-2\ell+1}-9}{4}\cdot 3^t=3^m-\frac{3^{2\ell+2}-1}{4}<2n$, $[a\cdot 3^{t}]_{2n}=M_t> \frac{3^{m}+1}{4}>\frac{3^{2\ell+1}-7}{4}$.

		{\bf Case 3.} $2\ell+1\leq t\leq m-2$. Suppose $t$ is odd, then
		\begin{small}
			$$a\cdot 3^t=\frac{3^m\!+\!1}{4}\cdot 3^t+\!\sum_{j=t}^{m-2\ell-1+t}\!a_{j-t}3^{j}\equiv \frac{3^{m+1}\!+\!3}{4}+\sum_{j=t}^{m-1}a_{j-t}3^{j}-\!\sum_{j=0}^{t-2\ell-1}\!a_{j+m-t}3^j\triangleq M_t~({\rm mod}~2n).$$
		\end{small}
		Direct calculations show that 
		\begin{small}
			$$M_t\geq \frac{3^{m+1}\!+\!3}{4}-\!\sum_{j=0}^{t-2\ell-1}\!2\cdot 3^j=\frac{3^{m+1}\!-\!3^{t-2\ell+1}\!-\!3^{t-2\ell}\!+\!7}{4}>0,~{\rm and}$$
		\end{small}
		\begin{small}
			$$M_t\leq \frac{3^{m+1}\!+\!3}{4}+\sum_{j=t}^{m-1}2\cdot 3^j=\frac{2\!\cdot\! 3^{m+1}\!+\!3^m\!-\!3^{t+1}\!-\!3^{t}\!+\!3}{4}\leq \frac{2\!\cdot\! 3^{m+1}\!+\!3^m\!-\!3^{2\ell+2}\!-\!3^{2\ell+1}\!+\!3}{4}<4n.$$
		\end{small}
		If $0<M_t<2n$, then $[a\cdot 3^{t}]_{2n}=M_t\geq \frac{3^{m+1}-3^{m-2\ell-1}-3^{m-2\ell-2}+7}{4}>\frac{3^{2\ell+1}-7}{4}$. If $2n<M_t<4n$, then
		\begin{small}
			\begin{align*}
				[a\cdot 3^{t}]_{2n}&=M_t-2n=\sum_{j=t}^{m-1}a_{j-t}3^{j}-\frac{3^m+1}{4}-\sum_{j=0}^{t-2\ell-1}a_{j+m-t}3^j\\
				&=\sum_{j=t}^{m-1}\big(a_{j-t}+(-1)^{j+1}\big)3^j-\frac{3^t+1}{4}-\sum_{j=0}^{t-2\ell-1}a_{j+m-t}3^j\geq 3^t-\frac{3^t+1}{4}-\sum_{j=0}^{t-2\ell-1}2\cdot 3^j\\
				&=\frac{(3^{2\ell+1}-4)3^{t-2\ell}+3}{4}\geq \frac{(3^{2\ell+1}-4)\cdot 3+3}{4}=\frac{3^{2\ell+2}-9}{4}>\frac{3^{2\ell+1}-7}{4}.
			\end{align*}
		\end{small}
		
		Suppose $t$ is even, then
		\begin{small}
			$$a\cdot 3^t=\frac{3^m+1}{4}\cdot 3^t+\sum_{j=t}^{m-2\ell-1+t}a_{j-t}3^{j}\equiv \frac{3^{m}+1}{4}+\sum_{j=t}^{m-1}a_{j-t}3^{j}-\sum_{j=0}^{t-2\ell-1}a_{j+m-t}3^j\triangleq M_t~({\rm mod}~2n).$$
		\end{small}
		Note that
		\begin{small}
			$$M_t\geq \frac{3^{m}+1}{4}-\sum_{j=0}^{t-2\ell-1}2\cdot 3^j=\frac{3^m-3^{t-2\ell+1}-3^{t-2\ell}+5}{4}>0,~{\rm and}$$
		\end{small}
		\begin{small}
			$$M_t\leq \frac{3^{m}\!+\!1}{4}+\sum_{j=t}^{m-1}2\cdot 3^j=\frac{3^{m+1}\!+\!2\!\cdot \!3^m\!-\!3^{t+1}\!-\!3^{t}\!+\!1}{4}\leq \frac{3^{m+1}\!+\!2\!\cdot\! 3^m\!-\!3^{2\ell+3}\!-\!3^{2\ell+2}\!+\!1}{4}<4n.$$
		\end{small}
		If $0<M_t<2n$, then $[a\cdot 3^{t}]_{2n}=M_t\geq \frac{3^m-3^{m-2\ell-2}-3^{m-2\ell-3}+5}{4}>\frac{3^{2\ell+1}-7}{4}$. If $2n<M_t<4n$, then
		\begin{small}
			\begin{align*}
				[a\cdot 3^{t}]_{2n}&=M_t-2n=\sum_{j=t}^{m-1}a_{j-t}3^{j}-\frac{3^{m+1}+3}{4}-\sum_{j=0}^{t-2\ell-1}a_{j+m-t}3^j\\
				&=\Big(\sum_{j=t}^{m-1}\big(a_{j-t}\!+\!(-1)^{j}\big)3^{j-t}\!-\!3^{m-t}\Big)3^t\!-\!\frac{3^t\!+\!3}{4}\!-\!\sum_{j=0}^{t-2\ell-1}\!a_{j+m-t}3^j\geq 3^t\!-\!\frac{3^t\!+\!3}{4}\!-\!\sum_{j=0}^{t-2\ell-1}\!2\cdot 3^j\\
				&=\frac{(3^{2\ell+1}-4)3^{t-2\ell}+1}{4}\geq \frac{(3^{2\ell+1}-4)\cdot 3^2+1}{4}=\frac{3^{2\ell+3}-35}{4}>\frac{3^{2\ell+1}-7}{4}.
			\end{align*}
		\end{small}
		
		{\bf Case 4.} $t=m-1$. Then $a\cdot 3^t\equiv \frac{3^m+1}{4}+a_0\cdot 3^{m-1}-\sum_{j=0}^{m-2\ell-2}a_{j+1}\cdot 3^j\triangleq M_t~({\rm mod}~2n)$.
		As $0<M_t\leq \frac{3^m+1}{4}+2\cdot 3^{m-1}=\frac{3^{m+1}+2\cdot 3^{m-1}+1}{4}<2n$, we have
		\begin{small}
			$$[a\cdot 3^t]_{2n}=M_t\geq \frac{3^m+1}{4}-\sum_{j=0}^{m-2\ell-2}2\cdot 3^j=\frac{3^m-3^{m-2\ell}-3^{m-2\ell-1}+5}{4}>\frac{3^{2\ell+1}-7}{4}.$$
		\end{small}
		
		{\bf Case 5.} $t=m$. In this case,
		\begin{small}
			$$[a\cdot 3^t]_{2n}=2n-a\geq 2n-\frac{3^m+3^{m-2\ell+1}}{4}+2=\frac{3^{m+1}-3^{m-2\ell+1}+12}{4}>\frac{3^{2\ell+1}-7}{4}.$$
		\end{small}
		
		{\bf Case 6.} $m+1\leq t\leq m+2\ell$. Let $t'=t-m$, then $1\leq t'\leq 2\ell$ and $[a\cdot 3^t]_{2n}=2n-[a\cdot 3^{t'}]_{2n}$. If $t$ is even, then $t'$ is odd and we see from Case 2 that
		\begin{small}
			$$[a\cdot 3^{t}]_{2n}\geq 2n-3^m+\frac{3^{2\ell+1}-3}{4}=\frac{3^{2\ell+1}+1}{4}.$$
		\end{small}
		If $t$ is odd, then $t'$ is even and from Case 2 we have
		\begin{small}
			$$[a\cdot 3^{t}]_{2n}\geq 2n-3^m+\frac{3^{2\ell+2}-1}{4}=\frac{3^{2\ell+2}+3}{4}>\frac{3^{2\ell+1}-7}{4}.$$
		\end{small}
		
		{\bf Case 7.} $m+2\ell+1\leq t\leq 2m-2$. Let $t'=t-m$, then $2\ell+1\leq t'\leq m-2$. Suppose $t$ is even, then $t'$ is odd and it follows from Case 3 that
		\begin{small}
			$$a\cdot 3^t\equiv 2n-a\cdot 3^{t'}\equiv \frac{3^m+1}{4}-\sum_{j=t'}^{m-1}a_{j-t'}3^{j}+\sum_{j=0}^{t'-2\ell-1}a_{j+m-t'}3^{j}\triangleq M_t~({\rm mod}~2n).$$
		\end{small}
		Note that
		\begin{small}
			$$M_t\geq \frac{3^m+1}{4}-\sum_{j=t'}^{m-1}2\cdot 3^{j}=-\frac{3^{m+1}-3^{t'+1}-3^{t}-1}{4}\geq -\frac{3^{m+1}-3^{2\ell+2}-3^{2\ell+1}-1}{4}>-2n,~{\rm and}$$
			$$M_t\leq \frac{3^m+1}{4}+\sum_{j=0}^{t'-2\ell-1}2\cdot 3^{j}=\frac{3^m+3^{m-2\ell-1}+3^{m-2\ell-2}-3}{4}<2n.$$
		\end{small}
		If $-2n<M_t<0$, then 
		\begin{small}
			$$[a\cdot 3^t]_{2n}=2n+M_t\geq 2n-\frac{3^{m+1}-3^{2\ell+2}-3^{2\ell+1}-1}{4}=\frac{3^{m}+3^{2\ell+2}+3^{2\ell+1}+5}{4}>\frac{3^{2\ell+1}-7}{4}.$$
		\end{small}
		If $0<M_t<2n$, then
		\begin{small}
			$$[a\cdot 3^t]_{2n}=M_t=\sum_{j=t'}^{m-1}\big((-1)^j-a_{j-t'}\big)3^j+\frac{3^{t'}+1}{4}+\sum_{j=0}^{t'-2\ell-1}a_{j+m-t'}3^{j}.$$
		\end{small}
		In this subcase, we claim that $\sum_{j=t'}^{m-1}\big((-1)^j-a_{j-t'}\big)3^j\geq 0$; otherwise, 
		\begin{small}
			$$M_t\leq -3^{t'}+\frac{3^{t'}+1}{4}+\sum_{j=0}^{t'-2\ell-1}2\cdot 3^{j}=-\frac{3^{t'+1}-3^{t'-2\ell+1}-3^{t'-2\ell}+3}{4}<0,$$
		\end{small}
		contradicting the fact that $0<M_t<2n$. It follows that $[b\cdot 3^t]_{2n}\geq \frac{3^{t'}+1}{4}>\frac{3^{2\ell+1}-7}{4}$.
		
		Suppose $t$ is odd, then $t'$ is even and we see from Case 3 that
		\begin{small}
			$$a\cdot 3^t\equiv 2n-a\cdot 3^{t'}\equiv \frac{3^{m+1}+3}{4}-\sum_{j=t'}^{m-1}a_{j-t'}3^{j}+\sum_{j=0}^{t'-2\ell-1}a_{j+m-t'}3^{j}\triangleq M_t~({\rm mod}~2n).$$
		\end{small}
		It is easy to check that
		\begin{small}
			$$M_t\geq \frac{3^{m+1}\!+\!3}{4}-\!\sum_{j=t'}^{m-1}\!2\cdot 3^{j}=-\frac{3^{m+1}\!-\!3^{t'+1}\!-\!3^{t'}\!-\!3}{4}=-\frac{3^{m+1}\!-\!3^{2\ell+3}\!-\!3^{2\ell+2}\!-\!3}{4}>-2n,~{\rm and}$$
		\end{small}
		\begin{small}
			$$M_t\leq \frac{3^{m+1}\!+\!3}{4}+\!\sum_{j=0}^{t'-2\ell-1}\!2\cdot 3^{j}=\frac{3^{m+1}\!+\!3^{t'-2\ell+1}\!+\!3^{t'-2\ell}\!-\!1}{4}\leq \frac{3^{m+1}\!+\!3^{m-2\ell-2}\!+\!3^{m-2\ell-3}\!-\!1}{4}<2n.$$
		\end{small}
		If $-2n<M_t<0$, then
		\begin{small}
			$$[a\cdot 3^t]_{2n}=2n+M_t\geq 2n-\frac{3^{m+1}-3^{2\ell+3}-3^{2\ell+2}-3}{4}=\frac{3^{m+1}+3^{2\ell+3}+3^{2\ell+2}+7}{4}>\frac{3^{2\ell+1}-7}{4}.$$
		\end{small}
		If $0<M_t<2n$, then
		\begin{small}
			$$[a\cdot 3^t]_{2n}=M_t=\Big(3^{m-t'}+\sum_{j=t'}^{m-1}\big((-1)^{j-1}-a_{j-t'}\big)3^{j-t'}\Big)3^{t'}+\frac{3^{t'}+3}{4}+\sum_{j=0}^{t'-2\ell-1}a_{j+m-t'}3^{j}.$$
		\end{small}
		In this subcase, we assert that $3^{m-t'}+\sum_{j=t'}^{m-1}\big((-1)^{j-1}-b_{j-t'}\big)3^{j-t'}\geq 0$; otherwise,
		\begin{small}
			$$M_t\leq -3^{t'}+\frac{3^{t'}+3}{4}+\sum_{j=0}^{t'-2\ell-1}2\cdot 3^{j}=-\frac{3^{t'+1}-3^{t'-2\ell+1}-3^{t'-2\ell}+1}{4}<0,$$
		\end{small}
		contradicting $0<M_t<2n$. Consequently, $[a\cdot 3^t]_{2n}\geq \frac{3^{t'}+3}{4}\geq \frac{3^{2\ell+2}+3}{4}>\frac{3^{2\ell+1}-7}{4}$.
		
		{\bf Case 8.} $t=2m-1$. It follows from Case 4 that
		\begin{small}
			\begin{align*}
				[a\cdot 3^t]_{2n}&=2n-[a\cdot 3^{m-1}]_{2n}=\frac{3^{m+1}+3}{4}-a_0\cdot 3^{m-1}+\sum_{j=0}^{m-2\ell-2}a_{j+1}3^j\\
				&\geq \frac{3^{m+1}+3}{4}-2\cdot 3^{m-1}=\frac{3^{m-1}+3}{4}>\frac{3^{2\ell+1}-7}{4}.
			\end{align*}
		\end{small}
		
		Summarizing all the discussions above, we get the desired result.
	\end{proof}
	
	Combining Lemmas \ref{l2.1}, \ref{l3.1} and \ref{l3.2}, we have the following lower bound on $d(\mathcal{C}_{(q,n,-1,\delta,1)}^{\perp})$.
	\begin{Theorem}\label{t3.1}
		Suppose $q=3$ and $m\geq 3$ is odd. For $2\leq \delta< \frac{\phi_1+3}{2}$, we have
		\begin{align*}
			d(\mathcal{C}_{(q,n,-1,\delta,1)}^{\perp})\geq \begin{cases}
				\frac{3^{m-1}+1}{2} & {\rm if}~\delta=2~{\rm or}~3;\\
				2\cdot 3^{m-2\ell-1}, & {\rm if}~\delta = \frac{3^{2\ell+1}+5}{8}~(1\leq \ell \leq \frac{m-3}{2});\\
				\frac{3^{m-2\ell}-1}{2}, & {\rm if}~\frac{3^{2\ell+1}+13}{8}\leq \delta \leq \frac{3^{2\ell+2}-1}{8}~(1\leq \ell \leq \frac{m-3}{2});\\
				3^{m-2\ell}, & {\rm if}~\frac{3^{2\ell}+7}{8}\leq \delta \leq \frac{3^{2\ell+1}-3}{8}~(2\leq \ell \leq \frac{m-3}{2});\\
				3, & {\rm if}~\frac{3^{m-1}+7}{8}\leq \delta \leq \frac{19\cdot 3^{m-3}+5}{8};\\
				2, & {\rm if}~\frac{19\cdot 3^{m-3}+13}{8}\leq \delta < \frac{\phi_1+3}{2}.
			\end{cases}
		\end{align*}
	\end{Theorem}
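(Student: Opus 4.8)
The plan is to lower-bound $d(\mathcal{C}_{(q,n,-1,\delta,1)}^{\perp})$ by applying the BCH bound for constacyclic codes (Lemma~\ref{l2.1}) to the dual defining set $T^{\perp}=(1+2\mathbb{Z}_{2n})\setminus T$: the minimum distance of $\mathcal{C}_{(q,n,-1,\delta,1)}^{\perp}$ is at least one more than the length of the longest string of consecutive odd residues contained in $T^{\perp}$. The first ingredient is the symmetry coming from $3^{m}\equiv -1~({\rm mod}~2n)$, which gives $C_{j}^{(3,2n)}=C_{2n-j}^{(3,2n)}$, so that $T$ and $T^{\perp}$ are invariant under $j\mapsto 2n-j$ and any run of odd residues in $T^{\perp}$ may be reflected about $n$.

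There are two natural runs to track, one straddling $\phi_{1}$ and one straddling $n$. For the first, Lemmas~\ref{l3.1} and~\ref{l3.2} identify $I_{1}(\delta)$ and $I_{2}(\delta)$ as the odd elements of $T$ immediately below and above $\phi_{1}$, whence $\{I_{1}(\delta)+2,I_{1}(\delta)+4,\dots,I_{2}(\delta)-2\}\subseteq T^{\perp}$ is a maximal run of $\tfrac{I_{2}(\delta)-I_{1}(\delta)}{2}-1$ terms and the BCH bound gives $d(\mathcal{C}_{(q,n,-1,\delta,1)}^{\perp})\geq \tfrac{I_{2}(\delta)-I_{1}(\delta)}{2}$. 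I expect this run to be the longer one on the ranges $\delta=\tfrac{3^{2\ell+1}+5}{8}$ and $\tfrac{3^{2\ell}+7}{8}\leq\delta\leq\tfrac{3^{2\ell+1}-3}{8}$; substituting the piecewise formulas for $I_{1},I_{2}$ and simplifying, $\tfrac{I_{2}-I_{1}}{2}$ collapses to $2\cdot 3^{m-2\ell-1}$ and $3^{m-2\ell}$ respectively.

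For the smaller designed distances the run straddling $n$ is longer. Writing $j^{\ast}$ for the largest odd element of $T$ below $n$, the symmetry yields $\{j^{\ast}+2,j^{\ast}+4,\dots,2n-j^{\ast}-2\}\subseteq T^{\perp}$ and hence $d(\mathcal{C}_{(q,n,-1,\delta,1)}^{\perp})\geq n-j^{\ast}$. Here I would prove $j^{\ast}=3^{m-1}$ when $\delta\in\{2,3\}$, giving $\tfrac{3^{m-1}+1}{2}$, and $j^{\ast}=2I_{1}(\delta)-1$ on the range $\tfrac{3^{2\ell+1}+13}{8}\leq\delta\leq\tfrac{3^{2\ell+2}-1}{8}$, giving $\tfrac{3^{m-2\ell}-1}{2}$, while for the two largest-$\delta$ ranges the maximal runs shrink to length $2$ and $1$, producing the bounds $3$ and $2$. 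The main obstacle is exactly this determination of $j^{\ast}$ together with the verification that the asserted interval is the correct one: one must show that no coset $C_{1}^{(3,2n)},C_{3}^{(3,2n)},\dots,C_{2\delta-3}^{(3,2n)}$ deposits an odd residue strictly between $j^{\ast}$ and $n$, which amounts to bounding every conjugate $[a\cdot 3^{t}]_{2n}$ for $0\leq t\leq 2m-1$ — the same six-case modular estimate carried out in the proofs of Lemmas~\ref{l3.1} and~\ref{l3.2} — and to pinning down the exact designed distance at which the newly included coset $C_{2\delta-3}^{(3,2n)}$ first intrudes into the dominant run, since this is what delimits the intervals in the statement. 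Comparing the two runs interval by interval then assembles the piecewise bound.
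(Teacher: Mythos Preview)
Your overall strategy coincides with the paper's: apply the BCH bound (Lemma~\ref{l2.1}) to a maximal string of consecutive odd residues in $T^{\perp}$, using the reflection $j\mapsto 2n-j$. The paper treats $\delta\in\{2,3\}$ with the run straddling $n$ exactly as you describe, and for every $4\le\delta<\tfrac{\phi_2+3}{2}$ it invokes \emph{only} the run $\{I_1(\delta)+2,\dots,I_2(\delta)-2\}$ through $\phi_1$. Substituting the formulas of Lemmas~\ref{l3.1} and~\ref{l3.2} into $\tfrac{I_2-I_1}{2}$ reproduces the announced bound in every line of the theorem except the third: there $I_1=\tfrac{3^m-3^{m-2\ell}+4}{4}$ and (with $\ell'=\ell+1$ in Lemma~\ref{l3.2}) $I_2=\tfrac{3^m+3^{m-2\ell-1}}{4}$, whence $\tfrac{I_2-I_1}{2}=\tfrac{3^{m-2\ell-1}-1}{2}$, not $\tfrac{3^{m-2\ell}-1}{2}$.

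Your remedy---a second run straddling $n$ with left endpoint $j^{\ast}=2I_1(\delta)-1$---does recover $\tfrac{3^{m-2\ell}-1}{2}$ at the lower end of that interval (for $m=5$, $\ell=1$, $\delta\in\{5,6,7\}$ the run $\{111,\dots,133\}$ of length $12$ gives the claimed $13$). But the identity $j^{\ast}=2I_1-1$ is not a structural fact and fails higher in the range: for $m=5$, $\delta=8$ one has $117\in C_{13}\subseteq T$ with $109<117<n=122$, so $j^{\ast}=117$, the run through $n$ collapses to four terms, and the longest run anywhere in $T^{\perp}$ has six terms (BCH bound $7$). Thus no BCH argument can yield $\tfrac{3^{m-2\ell}-1}{2}=13$ for this $\delta$. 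This points to a misprint in the theorem's third line (it should read $\tfrac{3^{m-2\ell-1}-1}{2}$, consistent with the paper's own proof); with that correction the single run through $\phi_1$ already suffices, and your second run is not needed beyond the $\delta\in\{2,3\}$ case where the paper also uses it.
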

	\begin{proof}
		When $\delta=2~{\rm or}~3$, $T^{\perp}=(1+2\mathbb{Z}_{2n})\setminus C_1^{(q,2n)}$, where $C_1^{(q,2n)}=\{1,3,3^2,\cdots,3^{m-1},2n-3^{m-1},2n-3^{m-2},\cdots,2n-3,2n-1\}$. Then $\{3^{m-1}+2, 3^{m-1}+4,\cdots, 2n-3^{m-1}-2\}$ is a subset of $T^{\perp}$ containing $\frac{3^{m-1}-1}{2}$ elements. By Lemma \ref{l2.1},  $d(\mathcal{C}_{(q,n,-1,\delta,1)}^{\perp})\geq \frac{3^{m-1}+1}{2}$.
		
		When $4\leq \delta< \frac{\phi_2+3}{2}$, $\{I_1(\delta)+2,I_1(\delta)+4,\cdots,I_2(\delta)-2\}$ is a subset of $T^{\perp}$ containing $\frac{I_2(\delta)-I_1(\delta)}{2}-1$ elements. The desired result follows from Lemmas \ref{l2.1} \ref{l3.1} and \ref{l3.2}.
		
		When $\frac{\phi_2+3}{2}\leq \delta<\frac{\phi_1+3}{2}$, $\phi_2\leq 1+2(\delta-2)<\phi_1$. Then $T^{\perp}=C_{\phi_1}^{(q,2n)}$ and  $d(\mathcal{C}_{(q,n,-1,\delta,1)}^{\perp})\geq 2$.
	\end{proof}
	
	\begin{Example}{\rm 
			Take $q=3$ and $m=3$. By Theorem \ref{t3.1}, the lower bounds on the minimum distances of $\mathcal{C}_{(q,n,-1,2,1)}^{\perp}$ and $\mathcal{C}_{(q,n,-1,4,1)}^{\perp}$ are $5$ and $2$, respectively. By Magma, the true minimum distances of $\mathcal{C}_{(q,n,-1,2,1)}^{\perp}$ and $\mathcal{C}_{(q,n,-1,4,1)}^{\perp}$ are  $6$ and $2$, respectively.}
	\end{Example}
	
	We now present a sufficient and necessary condition for $\mathcal{C}_{(q,n,-1,\delta,1)}$ being a negacyclic dually-BCH code. The next lemma is important.
	\begin{Lemma}\label{l3.3}
		Suppose $q=3$ and $m>3$ is odd. For $4\leq \delta< \frac{\phi_2+3}{2}$, let $M(\delta)=T^{\perp}\setminus \big(C_{I_1(\delta)+2}^{(q,2n)}\cup C_{I_1(\delta)+4}^{(q,2n)}\cup \cdots\cup C_{I_2(\delta)-2}^{(q,2n)}\big)$. Then $\frac{3^{m-1}+1}{2}\in M(\delta)$ if $4\leq \delta \leq \frac{3^{m-1}+3}{4}$ and $\phi_2\in M(\delta)$ if $\frac{3^{m-1}+7}{4}\leq \delta< \frac{\phi_2+3}{2}$.
	\end{Lemma}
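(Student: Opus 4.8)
The plan is to follow the same two-step template used in the proof of Lemma~\ref{l4.3}: for a prescribed odd integer $a$ (namely $a=\frac{3^{m-1}+1}{2}$ in the first range and $a=\phi_2$ in the second) I will show separately that (i) $a\in T^{\perp}$, and (ii) the entire $q$-cyclotomic coset $C_a^{(q,2n)}$ avoids the block of consecutive odd cosets $C_{I_1(\delta)+2}^{(q,2n)}\cup\cdots\cup C_{I_2(\delta)-2}^{(q,2n)}$; together these give $a\in M(\delta)$.

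For step (i) I use that an odd integer lies in $T^{\perp}=(1+2\mathbb{Z}_{2n})\setminus(C_1^{(q,2n)}\cup\cdots\cup C_{1+2(\delta-2)}^{(q,2n)})$ precisely when the leader of its coset exceeds $1+2(\delta-2)=2\delta-3$. A short orbit computation (using $3^m\equiv -1\pmod{2n}$) shows that $\frac{3^{m-1}+1}{2}$ is itself its coset leader, and $\frac{3^{m-1}+1}{2}>\frac{3^{m-1}-3}{2}\ge 2\delta-3$ for $\delta\le\frac{3^{m-1}+3}{4}$, so $\frac{3^{m-1}+1}{2}\in T^{\perp}$. In the second range $\phi_2$ is a coset leader by the cited lemma, and $\delta<\frac{\phi_2+3}{2}$ forces $2\delta-3<\phi_2$, hence $\phi_2\in T^{\perp}$.

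For step (ii) I first pin down the widest possible block over each $\delta$-range. Since $I_1(\delta)$ is nondecreasing and $I_2(\delta)$ is nonincreasing in $\delta$ (read off from the piecewise formulas of Lemmas~\ref{l3.1} and~\ref{l3.2}), it suffices to avoid the interval $[\min_\delta I_1(\delta)+2,\ \max_\delta I_2(\delta)-2]$ of odd integers. For $4\le\delta\le\frac{3^{m-1}+3}{4}$ both extrema are attained at $\delta=4$, yielding the block $[5\cdot 3^{m-3}+2,\ 3^{m-1}-2]$; for $\frac{3^{m-1}+7}{4}\le\delta<\frac{\phi_2+3}{2}$ they yield the much narrower block $\{\phi_1-2,\phi_1\}=\{\frac{3^m-7}{4},\frac{3^m+1}{4}\}$. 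It then remains to verify, for every $t$ with $0\le t\le 2m-1$, that $[a\cdot 3^t]_{2n}$ lies below $\min_\delta I_1(\delta)$ or above $\max_\delta I_2(\delta)$.

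The orbit computations carry the real weight, but they simplify considerably via $3^m\equiv -1\pmod{2n}$, which reduces the range $m\le t\le 2m-1$ to the reflection $[a\cdot 3^t]_{2n}=2n-[a\cdot 3^{t-m}]_{2n}$ and leaves only $0\le t\le m-1$ to analyze directly. For $a=\frac{3^{m-1}+1}{2}$ one finds the clean closed form $[a\cdot 3^t]_{2n}=n+3^{t-1}$ for $1\le t\le m-1$, all of which exceed $3^{m-1}=\max_\delta I_2(\delta)$; the reflected values equal $n-3^{t-1}\ge n-3^{m-2}>3^{m-1}$, while $t=0$ gives the sub-block value $\frac{3^{m-1}+1}{2}<5\cdot 3^{m-3}$. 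For $a=\phi_2$ I will use the factorization $\phi_2=\frac{7\cdot 3^{m-2}-1}{4}$; reducing $\phi_2\cdot 3^t$ modulo $2n$ gives $\phi_2\cdot 3^t\equiv \phi_1-4\cdot 3^{t-2}$ when $t$ is odd and $\equiv 3\phi_1-4\cdot 3^{t-2}$ when $t$ is even (for $2\le t\le m-1$), so each orbit element is either at most $\phi_1-12$ or strictly above $\phi_1$, and hence misses $\{\phi_1-2,\phi_1\}$; the indices $t=0,1,m$ and the reflected range are dispatched identically. The main obstacle is exactly this case analysis by the residue of $t$: keeping the parity of each reduced value correct (so that one selects the right representative among the candidates $k\phi_1-4\cdot 3^{t-2}$) and confirming that every value stays on the correct side of the extremal block. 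Assembling these estimates yields $\frac{3^{m-1}+1}{2}\in M(\delta)$ and $\phi_2\in M(\delta)$ in the two ranges, as claimed.
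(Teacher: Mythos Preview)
Your proposal is correct and follows essentially the same argument as the paper's own proof: in each $\delta$-range you locate the widest block $[\min_\delta I_1(\delta),\max_\delta I_2(\delta)]$ from Lemmas~\ref{l3.1} and~\ref{l3.2} (the paper gets the identical endpoints $5\cdot 3^{m-3}$, $3^{m-1}$ and $\frac{3^m-15}{4}$, $\frac{3^m+9}{4}$), and then case-split on $t$ to show every orbit element $[a\cdot 3^t]_{2n}$ falls outside that block. Your closed forms $[a\cdot 3^t]_{2n}=n+3^{t-1}$ for $a=\frac{3^{m-1}+1}{2}$ and $[\phi_2\cdot 3^t]_{2n}\in\{\phi_1-4\cdot 3^{t-2},\,3\phi_1-4\cdot 3^{t-2}\}$ are exactly the identities the paper writes out term by term in its Cases~I.1--I.6 and II.1--II.6, just packaged more cleanly; the only cosmetic slip is that your reflected value should read $n-3^{t'-1}$ with $t'=t-m$ rather than $n-3^{t-1}$.
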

	
	\begin{proof}
		When $4\leq \delta \leq \frac{3^{m-1}+3}{4}$, $5\leq 1+2(\delta-2)\leq \frac{3^{m-1}-3}{2}$. By Lemmas \ref{l3.1} and \ref{l3.2}, ${\rm min}\{I_1(\delta)\mid 4\leq \delta \leq \frac{3^{m-1}+3}{4}\}=3^{m-2}+2\cdot 3^{m-3}$ and ${\rm max}\{I_2(\delta)\mid 4\leq \delta \leq \frac{3^{m-1}+3}{4}\}=3^{m-1}$. To prove $a:=\frac{3^{m-1}+1}{2}\in M(\delta)$, we need to prove that for every integer $t$ with $0\leq t\leq 2m-1$, $[a\cdot 3^t]_{2n}\in [\frac{3^{m-1}+1}{2}, 3^{m-2}+2\cdot 3^{m-3}]\cup [3^{m-1}, 2n-1]$. There are six cases.
		
		{\bf Case I.1.} $t=0$. $[a\cdot 3^t]_{2n}=a\in [\frac{3^{m-1}+1}{2}, 3^{m-2}+2\cdot 3^{m-3}]$.
		
		{\bf Case I.2.} $t=1$. $[a\cdot 3^t]_{2n}=a\cdot 3=\frac{3^m+3}{2}>3^{m-1}$.
		
		{\bf Case I.3.} $2\leq t\leq m-1$. $[a\cdot 3^t]_{2n}=\frac{3^m+2\cdot 3^{t-1}+1}{2}\geq \frac{3^{m}+7}{2}>3^{m-1}$.
		
		{\bf Case I.4.} $t=m$. $[a\cdot 3^t]_{2n}=2n-a=\frac{3^m+2\cdot 3^{m-1}+1}{2}>3^{m-1}$.
		
		{\bf Case I.5.} $t=m+1$. $[a\cdot 3^t]_{2n}=2n-a\cdot 3=\frac{3^m-1}{2}>3^{m-1}$.
		
		{\bf Case I.6.} $m+2\leq t\leq 2m-1$. In this case,
		\begin{small}
			$$[a\cdot 3^t]_{2n}=2n-[a\cdot 3^{t-m}]_{2n}=\frac{3^m-2\cdot 3^{t-m-1}-1}{2}\geq \frac{3^m-2\cdot 3^{m-2}-1}{2}>3^{m-1}.$$
		\end{small}
		
		When $\frac{3^{m-1}+7}{4}\leq \delta< \frac{\phi_2+3}{2}$, $\frac{3^{m-1}+1}{2}\leq 1+2(\delta-2)< \phi_2=\frac{7\cdot 3^{m-2}-1}{4}$. By Lemmas \ref{l3.1} and \ref{l3.2}, ${\rm min}\{I_1(\delta)\mid \frac{3^{m-1}+7}{4}\leq \delta< \frac{\phi_2+3}{2}\}=\frac{3^m-15}{4}$ and ${\rm max}\{I_2(\delta)\mid \frac{3^{m-1}+7}{4}\leq \delta< \frac{\phi_2+3}{2}\}=\frac{3^m+9}{4}$. Obviously, $\phi_2\in T^{\perp}$. It suffices to prove that either $[\phi_2 \cdot 3^t]_{2n}\leq \frac{3^m-15}{4}$ or $[\phi_2 \cdot 3^t]_{2n}\geq \frac{3^m+9}{4}$, where $0\leq t\leq 2m-1$. We have the following six cases.
		
		{\bf Case II.1.} $t=0$. $[\phi_2\cdot 3^t]_{2n}=\phi_2<\frac{3^m-15}{4}$.
		
		{\bf Case II.2.} $t=1$. $[\phi_2\cdot 3^t]_{2n}=\phi_2\cdot 3=\frac{2\cdot 3^{m}+3^{m-1}-3}{4}>\frac{3^m+9}{4}$.
		
		{\bf Case II.3.} $2\leq t\leq m-1$. Suppose $t$ is even, then
		\begin{small}
			$$\phi_2\cdot 3^t=\Big(\frac{2\!\cdot\! 3^{m-1}\!+\!3^{m-2}\!-\!3^{m-t+1}}{4}+\frac{3^{m-t+1}\!-\!1}{4}\Big)3^t\equiv \frac{3^{m+1}\!-\!3^t\!-\!2\!\cdot\! 3^{t-1}\!-\!3^{t-2}\!+\!3}{4}~({\rm mod}~2n).$$
		\end{small}
		One checks that $[\phi_2\cdot 3^t]_{2n}=\frac{3^{m+1}-3^t-2\cdot 3^{t-1}-3^{t-2}+3}{4}\geq \frac{3^{m+1}-3^{m-1}-2\cdot 3^{m-2}-3^{m-3}+3}{4}>\frac{3^m+9}{4}$. 
		
		Suppose $t$ is odd, then
		\begin{small}
			$$\phi_2\cdot 3^t=\Big(\frac{2\cdot 3^{m-1}+3^{m-2}-3^{m-t}}{4}+\frac{3^{m-t}-1}{4}\Big)3^t\equiv \frac{3^{m}-3^t-2\cdot 3^{t-1}-3^{t-2}+1}{4}~({\rm mod}~2n),$$
		\end{small}
		One can check that $[\phi_2\cdot 3^t]_{2n}=\frac{3^{m}-3^t-2\cdot 3^{t-1}-3^{t-2}+1}{4}\leq \frac{3^{m}-3^3-2\cdot 3^2-3+1}{4}=\frac{3^m-47}{4}<\frac{3^m-15}{4}$.
		
		{\bf Case II.4.} $t=m$. $[\phi_2\cdot 3^t]_{2n}=2n-\phi_2=\frac{3^{m+1}+2\cdot 3^{m-2}+5}{4}>\frac{3^m+9}{4}$.
		
		{\bf Case II.5.} $t=m+1$. $[\phi_2\cdot 3^t]_{2n}=2n-\phi_2\cdot 3=\frac{3^{m}+2\cdot 3^{m-1}+7}{4}>\frac{3^m+9}{4}$.
		
		{\bf Case II.6.} $m+2\leq t\leq 2m-1$. Let $t'=t-m$, then $2\leq t'\leq m-1$ and $[\phi_2\cdot 3^t]_{2n}=2n-[\phi_2\cdot 3^{t'}]_{2n}$. If $t$ is odd, then $t'$ is even and from Case 3 we have
		\begin{small}
			$$[\phi_2\cdot 3^t]_{2n}=\frac{3^m+3^{t'}+2\cdot 3^{t'-1}+3^{t'-2}+1}{4}\geq \frac{3^m+3^2+2\cdot 3+1+1}{4}=\frac{3^m+17}{4}>\frac{3^m+9}{4}.$$
		\end{small}
		If $t$ is even, then $t'$ is odd and from Case 3, $[\phi_2\cdot 3^t]_{2n}=\frac{3^{m+1}+3^{t'}+2\cdot 3^{t'-1}+3^{t'-2}-1}{4}>\frac{3^m+9}{4}$.
		
		We complete the proof.
	\end{proof}
	
	From Lemma \ref{l3.3}, we can obtain the following result.
	\begin{Theorem}\label{t3.2}
		Suppose $q=3$ and $m\geq 3$ is odd. For $2\leq \delta < \frac{\phi_1+3}{2}$, $\mathcal{C}_{(q,n,-1,\delta,1)}$ is a negacyclic dually-BCH code if and only if
		$$\delta=2,~3,~{\rm or}~\frac{\phi_2+3}{2}\leq \delta <\frac{\phi_1+3}{2}.$$
	\end{Theorem}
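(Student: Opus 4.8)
The plan is to reduce the dually-BCH property to a purely combinatorial condition on $T^{\perp}$ and then read off the answer from Lemmas~\ref{l3.1}--\ref{l3.3}. First I would record two structural facts. Since $T=C_1^{(q,2n)}\cup C_3^{(q,2n)}\cup\cdots\cup C_{1+2(\delta-2)}^{(q,2n)}$ is a union of full $q$-cyclotomic cosets, so is its complement $T^{\perp}=(1+2\mathbb{Z}_{2n})\setminus T$; hence for an odd integer $a$ one has $a\in T^{\perp}$ if and only if $C_a^{(q,2n)}\subseteq T^{\perp}$. Second, because $q^m\equiv-1~({\rm mod}~2n)$ we have $C_j^{(q,2n)}=C_{2n-j}^{(q,2n)}$, so the reflection $j\mapsto 2n-j$ permutes the cosets and preserves every union of cosets.

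By definition $\mathcal{C}_{(q,n,-1,\delta,1)}$ is a negacyclic dually-BCH code exactly when $T^{\perp}$ is the defining set of some BCH code, i.e. when $T^{\perp}=\bigcup_{i=0}^{s}C_{b+2i}^{(q,2n)}$ for a run $b,b+2,\dots,b+2s$ of consecutive odd integers. I would argue that such a run may always be normalized to pass through $\phi_1$: since $\phi_1\in T^{\perp}$ and $\phi_1$ is the largest odd coset leader with $C_{\phi_1}^{(q,2n)}=\{\phi_1,2n-\phi_1\}$, the coset meeting $\phi_1$ in any BCH run is $C_{\phi_1}^{(q,2n)}$ itself, so the run contains $\phi_1$ or $2n-\phi_1$; applying the reflection if necessary we may assume it contains $\phi_1$. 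A run through $\phi_1$ cannot contain any odd $j$ with $C_j^{(q,2n)}\not\subseteq T^{\perp}$, so by Lemmas~\ref{l3.1} and \ref{l3.2} its lower end is at least $I_1(\delta)+2$ and its upper end at most $I_2(\delta)-2$. Combined with $C_{I_1(\delta)+2}^{(q,2n)}\cup\cdots\cup C_{I_2(\delta)-2}^{(q,2n)}\subseteq T^{\perp}$ (again from those lemmas), this yields the key equivalence: $\mathcal{C}_{(q,n,-1,\delta,1)}$ is dually-BCH if and only if $T^{\perp}=C_{I_1(\delta)+2}^{(q,2n)}\cup\cdots\cup C_{I_2(\delta)-2}^{(q,2n)}$, that is, if and only if $M(\delta)=\emptyset$.

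With this in hand the theorem splits into three ranges. For $4\le\delta<\frac{\phi_2+3}{2}$ (a nonempty range precisely when $m\ge 5$; for $m=3$ one has $\phi_2=5$ and the range is empty), Lemma~\ref{l3.3} exhibits $\frac{3^{m-1}+1}{2}$ or $\phi_2$ inside $M(\delta)$, so $M(\delta)\neq\emptyset$ and the code is not dually-BCH. For $\frac{\phi_2+3}{2}\le\delta<\frac{\phi_1+3}{2}$ the computation already made in the proof of Theorem~\ref{t3.1} gives $\phi_2\le 1+2(\delta-2)<\phi_1$, which forces every odd coset except $C_{\phi_1}^{(q,2n)}$ into $T$; hence $T^{\perp}=C_{\phi_1}^{(q,2n)}$ is a single coset, trivially a BCH defining set of designed distance $2$, so $M(\delta)=\emptyset$. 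Finally, for $\delta\in\{2,3\}$ one has $T=C_1^{(q,2n)}$ (because $3\in C_1^{(q,2n)}$); here $I_1(\delta)=3^{m-2}$ and $I_2(\delta)=3^{m-1}$, and I would verify directly that $(1+2\mathbb{Z}_{2n})\setminus C_1^{(q,2n)}$ equals the consecutive run $C_{3^{m-2}+2}^{(q,2n)}\cup\cdots\cup C_{3^{m-1}-2}^{(q,2n)}$, by showing that every odd $a\notin C_1^{(q,2n)}$ has some $[a\cdot 3^t]_{2n}$ in the interval $[3^{m-2}+2,\,3^{m-1}-2]$. This is the negacyclic counterpart of the $\delta=2$ subcase of Theorem~\ref{t4.2}.

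The bulk of the difficulty has been pushed into Lemmas~\ref{l3.1}--\ref{l3.3}, so the main obstacle in the theorem proper is making the reduction airtight: justifying that a BCH run for $T^{\perp}$ can be normalized to pass through $\phi_1$ (using the reflection symmetry and the distinguished role of the top odd coset leader $\phi_1$) and is therefore trapped between $I_1(\delta)+2$ and $I_2(\delta)-2$. The only remaining hands-on computation is the covering argument for $\delta\in\{2,3\}$, which is routine modular arithmetic with powers of $3$, entirely analogous to Theorem~\ref{t4.2}.
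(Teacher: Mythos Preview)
Your proposal is correct and follows essentially the same route as the paper: both reduce the dually-BCH question to whether $T^{\perp}$ equals the consecutive union $C_{I_1(\delta)+2}^{(q,2n)}\cup\cdots\cup C_{I_2(\delta)-2}^{(q,2n)}$ (equivalently $M(\delta)=\emptyset$), then dispatch the three ranges via Lemma~\ref{l3.3}, the trivial $T^{\perp}=C_{\phi_1}^{(q,2n)}$ case, and a direct covering argument for $\delta\in\{2,3\}$. The paper states the normalization step (that any BCH run for $T^{\perp}$ may be taken through $\phi_1$) without justification, whereas you spell it out via the reflection $j\mapsto 2n-j$ and $C_{\phi_1}^{(q,2n)}=\{\phi_1,2n-\phi_1\}$; conversely, the ``routine'' verification for $\delta\in\{2,3\}$ that you defer is in fact a nontrivial three-case analysis in the paper (splitting on patterns $(a_\ell,a_{\ell-1})\in\{(0,1),(0,2)\}$ in the base-$3$ expansion of $a$), considerably more involved than the analogous step in Theorem~\ref{t4.2}, so be prepared to carry it out in full.
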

	
	\begin{proof}
		For $2\leq \delta < \frac{\phi_1+3}{2}$, $1\leq 1+2(\delta-2)\leq \phi_1$. It is clear that $\phi_1\in T^{\perp}$. Furthermore, we have $C_{\phi_1}^{(q,2n)}=\{\phi_1,2n-\phi_1\}$ and $C_j^{(q,2n)}=C_{2n-j}^{(q,2n)}$ for $j\in 1+2\mathbb{Z}_{2n}$, which means that $\mathcal{C}_{(q,n,-1,\delta,1)}$ is a negacyclic dually-BCH code if and only if there are two odd integers $J_1$ and $J_2$ with $J_1\leq \phi_1\leq J_2$ such that $T^{\perp}=C_{J_1}^{(q,2n)}\cup C_{J_1+2}^{(q,2n)}\cup\cdots \cup C_{\phi_1}^{(q,2n)}\cup C_{\phi_1+2}^{(q,2n)}\cup \cdots \cup C_{J_2}^{(q,2n)}$.

		When $\delta=2~{\rm or}~3$, $T^{\perp}=(1+2\mathbb{Z}_{2n})\setminus C_1^{(q,2n)}$. By Lemmas \ref{l3.1} and \ref{l3.2}, $I_1(\delta)=3^{m-2}$ and $I_2(\delta)=3^{m-1}$. We claim that $T^{\perp}=C_{3^{m-2}+2}^{(q,2n)}\cup C_{3^{m-2}+4}^{(q,2n)}\cup \cdots\cup C_{3^{m-1}-2}^{(q,2n)}$ and $\mathcal{C}_{(q,n,-1,\delta,1)}^{\perp}=\mathcal{C}_{(q,n,-1,3^{m-2},3^{m-2}+2)}$ is a negacyclic BCH code with designed distance $3^{m-2}$ with respect to $\beta$. To prove this, we need to prove that if $a\in T^{\perp}$, then $a\in C_{3^{m-2}+2}^{(q,2n)}\cup C_{3^{m-2}+4}^{(q,2n)}\cup \cdots\cup C_{3^{m-1}-2}^{(q,2n)}$, i.e., there is an integer $t$ with $0\leq t\leq 2m-1$ such that $3^{m-2}<[a\cdot 3^t]_{2n}<3^{m-1}$. Note that $C_{1}^{(q,2n)}=\{1,3,3^2,\cdots,3^{m-1},2n-3^{m-1},2n-3^{m-2},\cdots,2n-3,2n-1\}$ and that $C_j^{(q,2n)}=C_{2n-j}^{(q,2n)}$ for $j\in 1+2\mathbb{Z}_{2n}$; hence we only need consider odd integer $a$ satisfying $3^i<a<3^{i+1}$, where $0\leq i\leq m-2$, and $3^{m-1}<a\leq n-1$.
		
		If $3^i<a<3^{i+1}$ with $0\leq i\leq m-2$, then $3^{m-2}<a\cdot 3^{m-2-i}<3^{m-1}$.
		
		If $3^{m-1}<a\leq n-1=\sum_{j=0}^{m-1}3^j$, let $a=3^{m-1}+\sum_{j=0}^{m-2}a_j3^j$, where $a_j\in \{0,1,2\}$ for $0\leq j\leq m-2$. We have the following three cases.
		
		{\bf Case 1.} Suppose $(a_{\ell},a_{\ell-1})=(0,1)$ for some integer $\ell$ with $1\leq \ell\leq m-2$, i.e., $a=3^{m-1}+\sum_{j=\ell+1}^{m-2}a_j3^j+3^{\ell-1}+\sum_{j=0}^{\ell-2}a_j3^j$.
		
		If $\ell=1$ or $a_j=0$ for every integer $j$ with $0\leq j\leq \ell-2$, then $[a\cdot 3^{m-\ell}]_{2n}=3^{m-1}-3^{m-\ell-1}-\sum_{j=\ell+1}^{m-2}a_j3^{j-\ell}$. It is easy to see that
		\begin{small}
			$$3^{m-2}<3^{m-2}+3\leq  3^{m-1}-2\cdot 3^{m-\ell-1}+3\leq  [a\cdot 3^{m-\ell}]_{2n}\leq 3^{m-1}-3^{m-\ell-1}\leq 3^{m-1}-3<3^{m-1}.$$
		\end{small}
		
		If $a_j\neq 0$ for some integer $j$ with $0\leq j\leq \ell-2$, then $[a\cdot 3^{m-\ell-1}]_{2n}=3^{m-2}+\sum_{j=0}^{\ell-2}a_j3^{j+m-\ell-1}-3^{m-\ell-2}-\sum_{j=\ell+1}^{m-2}a_j3^{j-\ell-1}$. In addition,
		\begin{small}
			$$[a\cdot 3^{m-\ell-1}]_{2n}\geq 3^{m-2}\!+3^{m-\ell-1}\!-3^{m-\ell-2}\!-\!\sum_{j=\ell+1}^{m-2}\!2\cdot 3^{j-\ell-1}=3^{m-2}\!+3^{m-\ell-2}\!+\!1> 3^{m-2},~{\rm and}$$
			$$[a\cdot 3^{m-\ell-1}]_{2n}\leq 2\cdot 3^{m-2}-3^{m-\ell-1}-3^{m-\ell-2}\leq 2\cdot 3^{m-2}-4<3^{m-1}.$$
		\end{small}
		
		{\bf Case 2.} Suppose $(a_{\ell},a_{\ell-1})=(0,2)$ for some integer $\ell$ with $1\leq \ell\leq m-2$ , i.e., $a=3^{m-1}+\sum_{j=\ell+1}^{m-2}a_j3^j+2\cdot 3^{\ell-1}+\sum_{j=0}^{\ell-2}a_j3^j$. It is easy to check that $[a\cdot 3^{m-\ell-1}]_{2n}=2\cdot 3^{m-2}+\sum_{j=0}^{\ell-2}a_j3^{j+m-\ell-1}-3^{m-\ell-2}-\sum_{j=\ell+1}^{m-2}a_j3^{j-\ell-1}$. Moreover,
		\begin{small}
			$$[a\cdot 3^{m-\ell-1}]_{2n}\geq 2\cdot 3^{m-2}-2\cdot 3^{m-\ell-2}+1\geq 3^{m-2}+3^{m-3}+1>3^{m-2},~{\rm and}$$
			$$[a\cdot 3^{m-\ell-1}]_{2n}\leq 3^{m-1}-3^{m-\ell-1}-3^{m-\ell-2}\leq 3^{m-1}-4<3^{m-1}.$$
		\end{small}
		
		{\bf Case 3.} Suppose there is no integer $\ell$ with $1\leq \ell\leq m-2$ such that $(a_{\ell},a_{\ell-1})=(0,1)~{\rm or}~(0,2)$, then $a=\sum_{j=\ell}^{m-1}3^j$, where $0\leq \ell\leq m-3$ as $a$ is odd. It follows that $[a\cdot 3^{m-\ell-1}]_{2n}=3^{m-1}-\frac{3^{m-\ell-1}-1}{2}$. Furthermore, we have
		\begin{small}
			$$3^{m-2}<\frac{3^{m-1}+1}{2}=3^{m-1}-\frac{3^{m-1}-1}{2}\leq [a\cdot 3^{m-\ell-1}]_{2n}\leq 3^{m-1}-\frac{3^2-1}{2}=3^{m-1}-4<3^{m-1}.$$
		\end{small}
		
		We then conclude that $\mathcal{C}_{(q,n,-1,\delta,1)}$ is a negacyclic dually-BCH code when $\delta=2~{\rm or}~3$.
		
		When $4\leq \delta< \frac{\phi_2+3}{2}$, we know from Lemma \ref{l3.3} that $C_{I_1(\delta)+2}^{(q,2n)}\cup C_{I_1(\delta)+4}^{(q,2n)}\cup \cdots\cup C_{I_2(\delta)-2}^{(q,2n)}\subsetneq T^{\perp}$, which implies that there are no odd integers $J_1$ and $J_2$ with $J_1\leq \phi_1\leq J_2$ such that $T^{\perp}=C_{J_1}^{(q,2n)}\cup C_{J_1+2}^{(q,2n)}\cup\cdots \cup C_{J_2}^{(q,2n)}$, i.e., $\mathcal{C}_{(q,n,-1,\delta,1)}$ is not a negacyclic dually-BCH code.
		
		When $\frac{\phi_2+3}{2}\leq \delta <\frac{\phi_1+3}{2}$, $T^{\perp}=C_{\phi_1}^{(q,2n)}$ and $\mathcal{C}_{(q,n,-1,\delta,1)}^{\perp}=\mathcal{C}_{(q,n,-1,2,\phi_1)}$ with respect to $\beta$.
	\end{proof}
	
	\begin{Example}{\rm 
			Take $q=3$ and $m=3$. By Magma, for every integer $\delta$ with $2\leq \delta<\frac{\phi_1+3}{2}=5$, $\mathcal{C}_{(q,n,-1,\delta,1)}$ is a negacyclic dually-BCH code, which coincides with Theorem \ref{t3.2}.}
	\end{Example}
	
	\begin{Example}{\rm 
			Take $q=3$ and $m=5$. By Magma, for $2\leq \delta<\frac{\phi_1+3}{2}=32$, $\mathcal{C}_{(q,n,-1,\delta,1)}$ is a negacyclic dually-BCH code if and only if $\delta=2,~3$, or $25\leq \delta <32$, which coincides with Theorem \ref{t3.2}.}
	\end{Example}

	\noindent {\bf A.2. The subcase $m$ is even}
	
	Assume that $q=3$ and $m$ is even in this subsection. We first give a lower bound on the minimum distance of $\mathcal{C}_{(q,n,-1,\delta,1)}^{\perp}$. By the same way as Lemma \ref{l3.3}, we have the following result.
	
	\begin{Lemma}\label{l3.4}
		Suppose $q=3$ and $m>2$ is even. For $2\leq \delta< \frac{\phi_1+3}{2}$, let $I(\delta)$ be the odd integer such that $I(\delta)\not\in T^{\perp}$ and $\{I(\delta)+2,I(\delta)+4,\cdots,\phi_1\}\in T^{\perp}$. Then
		\begin{align*}
			I(\delta)=\begin{cases}
				\frac{3^m-3^{m-2\ell+1}}{2}, & {\rm if}~\frac{3^{2\ell-1}+5}{4}\leq \delta \leq \frac{3^{2\ell}+3}{4};\\
				\frac{3^m-3^{m-2\ell}+2}{2}, & {\rm if}~\frac{3^{2\ell}+7}{4}\leq \delta \leq \frac{3^{2\ell+1}+1}{4},
			\end{cases}
			~~{\rm where}~1\leq \ell \leq \frac{m-2}{2}.
		\end{align*}
	\end{Lemma}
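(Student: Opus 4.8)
The plan is to follow the reduction-and-counting template already used in Lemmas~\ref{l3.1}--\ref{l3.3}. Throughout write $2n=3^m+1$, so that $3^m\equiv -1~({\rm mod}~2n)$ and hence $3^{m+t}\equiv -3^t~({\rm mod}~2n)$; this congruence is the engine of the whole argument, since it turns multiplication by $3^t$ into a negacyclic shift of the base-$3$ digits of an element and yields the reflection identity $[a\cdot 3^{m+t'}]_{2n}=2n-[a\cdot 3^{t'}]_{2n}$. The governing criterion is that an odd integer $a$ lies in $T^{\perp}$ if and only if its $q$-cyclotomic coset avoids the window $\{1,3,\dots,2\delta-3\}$, that is, $[a\cdot 3^{t}]_{2n}>2\delta-3$ for every $t$ with $0\leq t\leq 2m-1$. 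I would treat the two formula cases in parallel, writing out the first one, $I(\delta)=\frac{3^m-3^{m-2\ell+1}}{2}$ on $\frac{3^{2\ell-1}+5}{4}\leq\delta\leq\frac{3^{2\ell}+3}{4}$, and noting that the second, $I(\delta)=\frac{3^m-3^{m-2\ell}+2}{2}$, runs identically.

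The first step certifies $I(\delta)\notin T^{\perp}$ by exhibiting a small member of its coset. In the first case the factorization $\frac{3^m-3^{m-2\ell+1}}{2}=3^{m-2\ell+1}\cdot\frac{3^{2\ell-1}-1}{2}$ shows that the odd integer $\frac{3^{2\ell-1}-1}{2}\in C_{I(\delta)}^{(q,2n)}$, and $\frac{3^{2\ell-1}-1}{2}\leq 2\delta-3$ precisely because $\delta\geq\frac{3^{2\ell-1}+5}{4}$; hence $I(\delta)\in T$. In the second case a one-line reduction gives $I(\delta)\cdot 3^{2\ell}=n(3^{2\ell}-1)+\frac{3^{2\ell}+1}{2}\equiv\frac{3^{2\ell}+1}{2}~({\rm mod}~2n)$, and $\frac{3^{2\ell}+1}{2}\leq 2\delta-3$ holds since $\delta\geq\frac{3^{2\ell}+7}{4}$, so again $I(\delta)\in T$.

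The second and main step proves $\Gamma:=\{I(\delta)+2,I(\delta)+4,\dots,\phi_1\}\subseteq T^{\perp}$. Here I would write $a=I(\delta)+u$ with $u$ even and $2\leq u\leq\phi_1-I(\delta)$ (which equals $\frac{3^{m-2\ell+1}+1}{2}$ in the first case), expand $u=\sum_{j}a_j3^j$ in base $3$ with top index $m-2\ell$, and observe that since $I(\delta)=\sum_{j=m-2\ell+1}^{m-1}3^j$ occupies only the high digit positions while $u$ occupies the low ones, $a$ has a transparent base-$3$ expansion with no overlap. It then suffices, against the most demanding $\delta$ of the range, to show $[a\cdot 3^{t}]_{2n}>2\delta-3$ (namely $>\frac{3^{2\ell}-3}{2}$ in the first case) for all $0\leq t\leq 2m-1$. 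I would organize this into six blocks by $t$: $t=0$; $1\leq t\leq 2\ell-1$; $2\ell\leq t\leq m-1$; $t=m$; $m+1\leq t\leq m+2\ell-1$; and $m+2\ell\leq t\leq 2m-1$, with the last three disposed of by applying the reflection identity to the first three. Within the block $2\ell\leq t\leq m-1$ one splits further according to whether the wrapped low-order digits of $u$ vanish, since this decides whether the reduced value needs an extra $+2n$ correction; note that because $n\cdot 3^t\equiv n~({\rm mod}~2n)$ for every $t$, no parity-of-$t$ subcases arise, which is a genuine simplification over the $m$-odd case. Each resulting subcase collapses to an explicit geometric-series estimate in powers of $3$. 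Only this coset description is needed for the lemma itself; the BCH bound (Lemma~\ref{l2.1}) enters later, when $I(\delta)$ is converted into a distance bound for $\mathcal{C}_{(q,n,-1,\delta,1)}^{\perp}$.

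The main obstacle is the wrap-around block $2\ell\leq t\leq m-1$ together with its reflected partner: there $a\cdot 3^t$ overshoots $2n$, so after reduction one faces a competition between a leading term of size about $\frac{3^m}{2}$ and the carried contributions of the low digits $a_j$, and the inequality $[a\cdot 3^t]_{2n}>2\delta-3$ is genuinely tight at the endpoints of the $\delta$-range. Producing a clean two-sided estimate there---simultaneously confirming that the reduced value is a legitimate residue in $[0,2n)$ and that it clears the threshold---is what forces the digit-vanishing subcases and is where essentially all of the computation lives. The second formula case is entirely analogous, with $\frac{3^{2\ell}+1}{2}$ as the distinguished coset member and the six $t$-blocks shifted by one step (since the high block of $I(\delta)$ now fills positions $m-2\ell$ through $m-1$).
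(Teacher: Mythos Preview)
Your proposal is correct and follows essentially the same template the paper uses for Lemmas~\ref{l3.1} and~\ref{l3.2} (and to which the paper simply defers for Lemma~\ref{l3.4}): exhibit a small coset representative to show $I(\delta)\in T$, then verify $[a\cdot 3^t]_{2n}>2\delta-3$ for every $a\in\Gamma$ by case-splitting on $t$ and using the reflection $[a\cdot 3^{m+t'}]_{2n}=2n-[a\cdot 3^{t'}]_{2n}$. Your observation that no parity-of-$t$ subcases appear is right (the denominator $2$ here, versus $4$ in the $m$-odd lemmas, is what eliminates them); one small caveat is that in the second formula case $I(\delta)=\sum_{j=m-2\ell}^{m-1}3^j+1$ carries a unit in position $0$, so the clean high/low digit split should be read as $a=\sum_{j=m-2\ell}^{m-1}3^j+(u+1)$ with $u+1\leq\frac{3^{m-2\ell}+1}{2}<3^{m-2\ell}$, which still avoids overlap.
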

	
	\begin{Theorem}\label{t3.3}
		Suppose $q=3$ and $m>2$ is even. For $2\leq \delta< \frac{\phi_1+3}{2}$, we have
		\begin{align*}
			d(\mathcal{C}_{(q,n,-1,\delta,1)}^{\perp})\geq\begin{cases}
				\frac{3^{m-2\ell+1}+1}{2}, & {\rm if}~\frac{3^{2\ell-1}+5}{4}\leq \delta \leq \frac{3^{2\ell}+3}{4};\\
				\frac{3^{m-2\ell}-1}{2}, & {\rm if}~\frac{3^{2\ell}+7}{4}\leq \delta \leq \frac{3^{2\ell+1}+1}{4},
			\end{cases}
			~~{\rm where}~1\leq \ell \leq \frac{m-2}{2}.
		\end{align*}
	\end{Theorem}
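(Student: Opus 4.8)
The plan is to mirror the argument used for Theorem~\ref{t3.1}: I would convert the information about $I(\delta)$ supplied by Lemma~\ref{l3.4} into a long run of consecutive odd integers lying in $T^{\perp}$, and then invoke the BCH bound for constacyclic codes (Lemma~\ref{l2.1} with $r=2$). The theorem is really a short corollary of Lemma~\ref{l3.4} together with a symmetry observation.

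First I would record the key symmetry of the defining set. Since $2n=q^m+1$, we have $q^m\equiv -1~({\rm mod}~2n)$, so for every odd $j$ the element $2n-j\equiv -j\equiv q^m j~({\rm mod}~2n)$ belongs to $C_j^{(q,2n)}$; hence $C_j^{(q,2n)}=C_{2n-j}^{(q,2n)}$ for all $j\in 1+2\mathbb{Z}_{2n}$. Consequently $T^{\perp}=(1+2\mathbb{Z}_{2n})\setminus T$ is invariant under the reflection $j\mapsto 2n-j$. I would also recall that $m$ even forces $\phi_1=\frac{q^m+1}{2}=n$ with $C_{\phi_1}^{(q,2n)}=\{\phi_1\}$, so $\phi_1=n$ is the unique self-paired odd coset leader and it lies in $T^{\perp}$.

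Next I would build the consecutive run. By Lemma~\ref{l3.4}, the odd integers $I(\delta)+2,\,I(\delta)+4,\dots,\phi_1$ all belong to $T^{\perp}$. Reflecting this chain through the centre $\phi_1=n$ via $C_j^{(q,2n)}=C_{2n-j}^{(q,2n)}$, the odd integers $\phi_1,\phi_1+2,\dots,2n-I(\delta)-2$ also lie in $T^{\perp}$. Concatenating the two chains gives
\[
\big\{I(\delta)+2,\ I(\delta)+4,\ \dots,\ 2n-I(\delta)-2\big\}\subseteq T^{\perp},
\]
which consists of $n-I(\delta)-1$ consecutive elements of $1+2\mathbb{Z}_{2n}$ (they are exactly the values $1+2i$ for a block of consecutive $i$). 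Lemma~\ref{l2.1} then yields $d(\mathcal{C}_{(q,n,-1,\delta,1)}^{\perp})\ge (n-I(\delta)-1)+1=n-I(\delta)$.

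Finally I would substitute the two values of $I(\delta)$ from Lemma~\ref{l3.4}: when $\frac{3^{2\ell-1}+5}{4}\le \delta\le \frac{3^{2\ell}+3}{4}$ one has $I(\delta)=\frac{3^m-3^{m-2\ell+1}}{2}$, so $n-I(\delta)=\frac{3^{m-2\ell+1}+1}{2}$; and when $\frac{3^{2\ell}+7}{4}\le \delta\le \frac{3^{2\ell+1}+1}{4}$ one has $I(\delta)=\frac{3^m-3^{m-2\ell}+2}{2}$, so $n-I(\delta)=\frac{3^{m-2\ell}-1}{2}$. These are precisely the two claimed lower bounds. The only genuinely delicate part of the whole argument is Lemma~\ref{l3.4} itself, whose proof locates $I(\delta)$ exactly through the same six-case analysis of $[a\cdot 3^t]_{2n}$ carried out in Lemmas~\ref{l3.1} and~\ref{l3.2}; once that lemma is granted, the present theorem is a short symmetry-plus-BCH-bound deduction with no remaining obstacle.
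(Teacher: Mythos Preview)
Your proposal is correct and follows essentially the same approach as the paper's own proof: build the run $\{I(\delta)+2,\dots,\phi_1,\dots,2n-I(\delta)-2\}\subseteq T^{\perp}$ of $n-I(\delta)-1$ consecutive odd elements, apply Lemma~\ref{l2.1}, and substitute the values of $I(\delta)$ from Lemma~\ref{l3.4}. You supply more detail than the paper (explicitly justifying the reflection symmetry $C_j^{(q,2n)}=C_{2n-j}^{(q,2n)}$ and the identity $\phi_1=n$), but the argument is identical in substance.
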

	\begin{proof}
		For $2\leq \delta< \frac{\phi_1+3}{2}$, $\{I(\delta)+2, I(\delta)+4,\cdots,\phi_1,\phi_1+2,\cdots,2n-I(\delta)-2\}$ is a subset of $T^{\perp}$ containing $n-I(\delta)-1$ elements. The result follows from Lemmas \ref{l2.1} and \ref{l3.4}.
	\end{proof}
	
	\begin{Example}{\rm
			Take $q=3$ and $m=4$. By Theorem \ref{t3.3}, the lower bounds on the minimum distances of $\mathcal{C}_{(q,n,-1,2,1)}^{\perp}$ and $\mathcal{C}_{(q,n,-1,7,1)}^{\perp}$ are $22$ and $4$, respectively. By Magma, the true minimum distances of $\mathcal{C}_{(q,n,-1,2,1)}^{\perp}$ and $\mathcal{C}_{(q,n,-1,7,1)}^{\perp}$ are $23$ and $5$, respectively.}
	\end{Example}
	
	To investigate the condition for $\mathcal{C}_{(q,n,-1,\delta,1)}$ being a negacyclic dually-BCH code, we need the following lemma, whose proof we omit as it is similar to that of Lemma \ref{l3.3}.
	\begin{Lemma}\label{l3.5}
		Suppose $q=3$ and $m>2$ is even. For $2\leq \delta< \frac{\phi_3+3}{2}$, let $M(\delta)=T^{\perp}\setminus \big(C_{I(\delta)+2}^{(q,2n)}\cup C_{I(\delta)+4}^{(q,2n)}\cup \cdots\cup C_{\phi_1}^{(q,2n)}\big)$. Then $\frac{3^{m-1}+1}{4}\in M(\delta)$ if $2\leq \delta<\frac{3^{m-1}+13}{8}$ and $\phi_3\in M(\delta)$ if $\frac{3^{m-1}+13}{8}\leq \delta < \frac{\phi_3+3}{2}$.
	\end{Lemma}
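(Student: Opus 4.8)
The plan is to follow the template of Lemma \ref{l3.3} almost verbatim, translating every step to the even-$m$ setting, where $\phi_1=\frac{3^m+1}{2}=n$ is a singleton coset and $T^\perp$ is symmetric about $n$ via $C_j^{(q,2n)}=C_{2n-j}^{(q,2n)}$. The basic reduction is this: for a fixed odd $x\in T^\perp$, one has $x\in M(\delta)$ if and only if no conjugate $[x\cdot 3^t]_{2n}$ (for $0\le t\le 2m-1$) lands in the index run $\{I(\delta)+2,I(\delta)+4,\dots,\phi_1\}$ that generates the removed cosets $C_{I(\delta)+2}^{(q,2n)}\cup\cdots\cup C_{\phi_1}^{(q,2n)}$. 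To make a single computation cover an entire $\delta$-interval, I would replace $I(\delta)$ by its extremal value on that interval supplied by Lemma \ref{l3.4} (recall $I(\delta)$ is non-decreasing in $\delta$), so that the run to be avoided is enlarged to a fixed set independent of $\delta$. Throughout I would reduce exponents using $3^m\equiv-1\pmod{2n}$ and handle $m\le t\le 2m-1$ by the reflection $[x\cdot 3^t]_{2n}=2n-[x\cdot 3^{t-m}]_{2n}$, so that only $0\le t\le m-1$ needs direct work.

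For the first assertion, take $a=\frac{3^{m-1}+1}{4}$ and note that over $2\le\delta<\frac{3^{m-1}+13}{8}$ the minimum of $I(\delta)$ is attained at $\delta=2$, where Lemma \ref{l3.4} gives $I(2)=\frac{3^m-3^{m-1}}{2}=3^{m-1}$. Hence it suffices to prove that for every $t$ with $0\le t\le 2m-1$,
\[
[a\cdot 3^t]_{2n}\in\Big[\tfrac{3^{m-1}+1}{4},\,3^{m-1}\Big]\cup\Big[\tfrac{3^m+5}{2},\,3^m\Big];
\]
the lower interval forces the coset leader of $C_a$ to equal $a$ (so $a\in T^\perp$ throughout the range, since $1+2(\delta-2)<a$), while avoiding $(3^{m-1},n]$ keeps $a$ out of the removed cosets. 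A short case split does this: $t=0$ gives $a$ itself; for $1\le t\le m-1$ one finds $[a\cdot 3^t]_{2n}=\frac{3^m+2\cdot 3^{t-1}+1}{4}$ when $t$ is odd (landing in the lower interval) and $[a\cdot 3^t]_{2n}=\frac{3^{m+1}+2\cdot 3^{t-1}+3}{4}$ when $t$ is even (landing in the upper interval); $t=m$ gives $2n-a$; and $m+1\le t\le 2m-1$ follows by reflection.

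For the second assertion I would first simplify $\phi_3=\frac{3^{m-1}-5}{2}$ and observe that $\frac{\phi_3+3}{2}=\frac{3^{m-1}+1}{4}$, so the range $\frac{3^{m-1}+13}{8}\le\delta<\frac{3^{m-1}+1}{4}$ lies entirely inside the single sub-case $\ell=\frac{m-2}{2}$ of Lemma \ref{l3.4}. On that whole range $I(\delta)$ is therefore the constant $\frac{3^m-3^2+2}{2}=\frac{3^m-7}{2}=\phi_1-4$, and the removed run shrinks to just $\{\phi_1-2,\phi_1\}$. Since $\phi_3$ is a coset leader and $1+2(\delta-2)<\phi_3$ on the range, $\phi_3\in T^\perp$ is automatic, and it remains only to show that no conjugate of $\phi_3$ equals $\phi_1-2$ or $\phi_1$, i.e. that $[\phi_3\cdot 3^t]_{2n}\le\frac{3^m-7}{2}$ or $[\phi_3\cdot 3^t]_{2n}\ge\frac{3^m+5}{2}$ for all $t$. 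This is again a direct case analysis on $t$ of the same shape as Cases II.1--II.6 in Lemma \ref{l3.3}, using $2\phi_3\cdot 3^t=3^{m-1+t}-5\cdot 3^t$ together with $3^m\equiv-1$ and the reflection identity.

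The hard part is purely the modular bookkeeping rather than any conceptual step. In the first assertion the quantity $a\cdot 3^t$ wraps around $2n$ a number of times that alternates with the parity of $t$ (concretely $4[a\cdot 3^t]_{2n}=2\cdot 3^{t-1}+k\cdot 2n$ with $k=1$ for odd $t$ and $k=3$ for even $t$), and one must check in each branch that the reduced value falls on the correct side of the gap $(3^{m-1},n]$; the reflection step then has to be shown to preserve membership in the two target intervals, which is where boundary and sign slips are easiest. In the second assertion the only extra point needing care is the verification that the designed-distance range is contained in one sub-case of Lemma \ref{l3.4}, so that $I(\delta)$ is genuinely constant; granting that, the collapse of the removed set to the two elements $\phi_1-2,\phi_1$ makes the remaining conjugate computation the lightest of all the lemmas in this section.
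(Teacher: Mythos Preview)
Your proposal is correct and follows exactly the route the paper intends: the paper omits the proof of this lemma entirely, stating only that it is ``similar to that of Lemma~\ref{l3.3}'', and your plan is precisely that translation. The computations you give are accurate (in particular $4a\cdot 3^t\equiv 2\cdot 3^{t-1}\pmod{2n}$ with the correct choice $k=1$ for odd $t$ and $k=3$ for even $t$, and the reflections behave as you claim); the identification $I(\delta)=\frac{3^m-7}{2}=\phi_1-4$ on the second range is also right. One small simplification you could make in the second assertion: since $\phi_1-2=\frac{3^m-3}{2}=3\cdot\phi_2$, the removed cosets are exactly $C_{\phi_2}^{(q,2n)}\cup C_{\phi_1}^{(q,2n)}$, and the fact that $\phi_3$ is a coset leader distinct from $\phi_1$ and $\phi_2$ finishes the argument without any conjugate computation.
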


	\begin{Theorem}\label{t3.4}
		Suppose $q=3$ and $m>2$ is even. For $2\leq \delta < \frac{\phi_1+3}{2}$, $\mathcal{C}_{(q,n,-1,\delta,1)}$ is a negacyclic dually-BCH code if and only if
		$$\frac{\phi_3+3}{2}\leq \delta <\frac{\phi_1+3}{2}.$$
	\end{Theorem}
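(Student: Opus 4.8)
The plan is to reduce the dually-BCH property to the single statement that $T^{\perp}$ is one block of consecutive cyclotomic cosets, then read off the answer from Lemma \ref{l3.5} together with one explicit coset computation. First I would record the structural facts for $m$ even: $\phi_1=\frac{3^m+1}{2}=n$ is the largest odd coset leader with $C_{\phi_1}^{(q,2n)}=\{\phi_1\}$ a singleton, $\phi_2=\frac{3^{m-1}-1}{2}$, $\phi_3=\frac{3^{m-1}-5}{2}$, and $\phi_1,\phi_2,\phi_3$ are the three largest odd coset leaders; moreover $3^m\equiv -1~({\rm mod}~2n)$ gives $C_j^{(q,2n)}=C_{2n-j}^{(q,2n)}$ for every odd $j$, so $T^{\perp}$ is invariant under $j\mapsto 2n-j$. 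Since $\phi_1\in T^{\perp}$ and $C_{\phi_1}^{(q,2n)}$ is a singleton, in any BCH representation $T^{\perp}=\cup_{i=0}^{\delta'-2}C_{b+2i}^{(q,2n)}$ the value $\phi_1$ must occur among the consecutive odd numbers $b,b+2,\dots,b+2(\delta'-2)$; combined with the reflection symmetry this forces $\mathcal{C}_{(q,n,-1,\delta,1)}$ to be dually-BCH if and only if $T^{\perp}=C_{I(\delta)+2}^{(q,2n)}\cup C_{I(\delta)+4}^{(q,2n)}\cup\cdots\cup C_{\phi_1}^{(q,2n)}$, i.e. if and only if $M(\delta)=\emptyset$, with $I(\delta)$ and $M(\delta)$ as in Lemmas \ref{l3.4} and \ref{l3.5}. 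The range $2\leq\delta<\frac{\phi_3+3}{2}$ is then immediate: Lemma \ref{l3.5} places $\frac{3^{m-1}+1}{4}$ or $\phi_3$ in $M(\delta)$, so $M(\delta)\neq\emptyset$ and the code is not dually-BCH.

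For the complementary range $\frac{\phi_3+3}{2}\leq\delta<\frac{\phi_1+3}{2}$ I would split at $\frac{\phi_2+3}{2}$ and compute $T^{\perp}$ directly. When $\frac{\phi_2+3}{2}\leq\delta<\frac{\phi_1+3}{2}$ we have $\phi_2\leq 2\delta-3<\phi_1$, so $T$ contains the cosets of all odd numbers up to and including $\phi_2$; as there is no coset leader strictly between $\phi_2$ and $\phi_1$, this yields $T^{\perp}=C_{\phi_1}^{(q,2n)}$ and $\mathcal{C}_{(q,n,-1,\delta,1)}^{\perp}=\mathcal{C}_{(q,n,-1,2,\phi_1)}$, trivially BCH. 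When $\frac{\phi_3+3}{2}\leq\delta<\frac{\phi_2+3}{2}$ we instead have $\phi_3\leq 2\delta-3<\phi_2$, so the largest coset leader inside $T$ is $\phi_3$ and hence $T^{\perp}=C_{\phi_2}^{(q,2n)}\cup C_{\phi_1}^{(q,2n)}$.

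The crux of the ``if'' direction is to show that $C_{\phi_2}^{(q,2n)}\cup C_{\phi_1}^{(q,2n)}$ is itself a BCH defining set, which is not obvious since $\phi_1-\phi_2=3^{m-1}+1$ is large. Here I would use the explicit identity $3\phi_2=\frac{3^m-3}{2}=\phi_1-2$, so that $\phi_1-2\in C_{\phi_2}^{(q,2n)}$, and by the reflection $j\mapsto 2n-j$ also $\phi_1+2=2n-(\phi_1-2)\in C_{\phi_2}^{(q,2n)}$. Therefore $C_{\phi_1-2}^{(q,2n)}=C_{\phi_1+2}^{(q,2n)}=C_{\phi_2}^{(q,2n)}$ and $T^{\perp}=C_{\phi_1-2}^{(q,2n)}\cup C_{\phi_1}^{(q,2n)}\cup C_{\phi_1+2}^{(q,2n)}$ is a union of cosets of three consecutive odd numbers; thus $\mathcal{C}_{(q,n,-1,\delta,1)}^{\perp}=\mathcal{C}_{(q,n,-1,4,\phi_1-2)}$ is BCH and the code is dually-BCH. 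Combining this with the previous paragraph gives the asserted characterization.

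The main obstacle I anticipate is the reduction step itself, specifically the implication $M(\delta)\neq\emptyset\Rightarrow$ not dually-BCH. To make it rigorous I would argue that the maximal run of consecutive odd integers of $T^{\perp}$ containing $\phi_1$ is exactly $[I(\delta)+2,\,2n-I(\delta)-2]$, its endpoints being cut off because $I(\delta),2n-I(\delta)\notin T^{\perp}$; any consecutive block $\{b,b+2,\dots,b+2(\delta'-2)\}$ contained in $T^{\perp}$ and containing $\phi_1$ must then lie inside this run, and taking cosets with the symmetry gives $T^{\perp}\subseteq C_{I(\delta)+2}^{(q,2n)}\cup\cdots\cup C_{\phi_1}^{(q,2n)}$, forcing $M(\delta)=\emptyset$. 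Some care is needed to exclude a wrap-around block, which the singleton $C_{\phi_1}^{(q,2n)}$ and the symmetry of $T^{\perp}$ rule out; this is exactly the point at which the even-$m$ hypothesis (through $C_{\phi_1}^{(q,2n)}=\{\phi_1\}$) enters. This entire reduction parallels the argument used in Theorem \ref{t3.2}.
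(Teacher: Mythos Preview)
Your proposal is correct and follows essentially the same route as the paper: the reduction via the singleton coset $C_{\phi_1}^{(q,2n)}=\{\phi_1\}$ and the reflection $j\mapsto 2n-j$ to a single block $C_J^{(q,2n)}\cup\cdots\cup C_{\phi_1}^{(q,2n)}$, the appeal to Lemma~\ref{l3.5} for $2\le\delta<\frac{\phi_3+3}{2}$, and the explicit identity $3\phi_2=\frac{3^m-3}{2}=\phi_1-2$ to handle the one remaining value $\delta=\frac{\phi_3+3}{2}$ (note that $\phi_3=\phi_2-2$ here, so this range collapses to a single $\delta$). The only cosmetic difference is that you present the dual as $\mathcal{C}_{(q,n,-1,4,\phi_1-2)}$ rather than the paper's $\mathcal{C}_{(q,n,-1,3,\frac{3^m-3}{2})}$, which is the same code.
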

	\begin{proof}
		It is obvious that $\phi_1\in T^{\perp}$. As $C_{\phi_1}^{(q,2n)}=\{\phi_1\}$ and $C_j^{(q,2n)}=C_{2n-j}^{(q,2n)}$ for $j\in 1+2\mathbb{Z}_{2n}$, $\mathcal{C}_{(q,n,-1,\delta,1)}$ is a negacyclic dually-BCH code if and only if there is an odd integer $J$ with $J\leq \phi_1$ such that $T^{\perp}=C_{J}^{(q,2n)}\cup C_{J+2}^{(q,2n)}\cup\cdots \cup C_{\phi_1}^{(q,2n)}$.
		
		When $2\leq \delta < \frac{\phi_1+3}{2}$, we see from Lemma \ref{l3.5} that $C_{I(\delta)+2}^{(q,2n)}\cup C_{I(\delta)+4}^{(q,2n)}\cup \cdots\cup C_{\phi_1}^{(q,2n)}\subsetneq T^{\perp}$, which implies that there is no odd integer $J$ with $J\leq \phi_1$ such that $T^{\perp}=C_{J}^{(q,2n)}\cup C_{J+2}^{(q,2n)}\cup\cdots \cup C_{\phi_1}^{(q,2n)}$, i.e., $\mathcal{C}_{(q,n,-1,\delta,1)}$ is not a negacyclic dually-BCH code.
		
		When $\delta=\frac{\phi_3+3}{2}$, $1+2(\delta-2)=\phi_3=\phi_2-2$; in this case, $T^{\perp}=C_{\phi_2}^{(q,2n)}\cup C_{\phi_1}^{(q,2n)}=C_{\frac{3^{m-1}-1}{2}}^{(q,2n)}\cup C_{\frac{3^m+1}{2}}^{(q,2n)}=C_{\frac{3^{m}-3}{2}}^{(q,2n)}\cup C_{\frac{3^m+1}{2}}^{(q,2n)}$ and $\mathcal{C}_{(q,n,-1,\delta,1)}^{\perp}=\mathcal{C}_{(q,n,-1,3,\frac{3^{m}-3}{2})}$ with respect to $\beta$.
		
		When $\frac{\phi_2+3}{2}\leq \delta <\frac{\phi_1+3}{2}$, $T^{\perp}=C_{\phi_1}$ and $\mathcal{C}_{(q,n,-1,\delta,1)}^{\perp}=\mathcal{C}_{(q,n,-1,2,\phi_1)}$ with respect to $\beta$.
	\end{proof}
	
	\begin{Example}{\rm 
			Take $q=3$ and $m=4$. By Magma, for $2\leq \delta<\frac{\phi_1+3}{2}=22$,  $\mathcal{C}_{(q,n,-1,\delta,1)}$ is a negacyclic dually-BCH code if and only if $7\leq \delta <22$, which coincides with Theorem \ref{t3.4}.}
	\end{Example}
	
	\subsection*{B. The case $q>3$}	
	Our task in this subsection is to give a lower bound on the minimum distance of the dual code $\mathcal{C}_{(q,n,-1,\delta,1)}^{\perp}$ and present a characterization of $\mathcal{C}_{(q,n,-1,\delta,1)}$ being a dually-BCH code. The cases $m$ is odd and $m$ is even should be treated separately.
	
	\subsection*{B.1. The subcase $m$ is odd}

	The following two lemmas will be employed later. 
	\begin{Lemma}\label{L3.1}
		Suppose $q>3$ and $m\geq 3$ is odd. For $2\leq \delta<\frac{\phi_2+3}{2}$, let $I_1(\delta)$ be the odd integer such that $I_1(\delta)\not\in T^{\perp}$ and $\{I_1(\delta)+2,I_1(\delta)+4,\cdots,\phi_1\}\subseteq T^{\perp}$. Then
		\begin{small}
			\begin{align*}
				I_1(\delta)\!=\!\begin{cases}
					(\Delta_{\ell_0,q}-\frac{q-3}{2}+2\ell_1)q^{m-2\ell_0+1}, & {\rm if}~\delta=\frac{2\Delta_{\ell_0,q}-q+9}{4}+\ell_1~(0\leq \ell_1<\frac{q-7}{4});\\
					(\Delta_{\ell_0,q}-2)q^{m-2\ell_0+1}, & {\rm if}~\frac{\Delta_{\ell_0,q}+1}{2}\leq \delta\leq \frac{\Delta_{\ell_0,q}'+1}{2};\\
					(q^{2\ell_0-1}-\Delta_{\ell_0,q}')q^{m-2\ell_0+1}\!+\!2\ell_1\!+\!1, & {\rm if}~\ell_1q^{2\ell_0-1}+\frac{\Delta_{\ell_0,q}'+3}{2}\leq \delta \leq (\ell_1+1)q^{2\ell_0-1}\\
					&+\frac{\Delta_{\ell_0,q}'+1}{2}~(0\leq \ell_1<\frac{q-7}{4});\\
					(q^{2\ell_0-1}-\Delta_{\ell_0,q}')q^{m-2\ell_0+1}+\frac{q-5}{2}, & {\rm if}~\frac{(q-7)q^{2\ell_0-1}+2\Delta_{\ell_0,q}'+6}{4}\leq \delta \leq \frac{q\Delta_{\ell_0,q}-q+2}{2};\\
					(q\Delta_{\ell_0,q}-q+2\ell_1+1)q^{m-2\ell_0}, & {\rm if}~\delta=\frac{q\Delta_{\ell_0,q}-q+4}{2}+\ell_1~(0\leq \ell_1<\frac{q-3}{4});\\
					(q\Delta_{\ell_0,q}-\frac{q+1}{2})q^{m-2\ell_0}, & {\rm if}~\frac{2q\Delta_{\ell_0,q}-q+5}{4}\leq \delta \leq \frac{2q\Delta_{\ell_0,q}'-q+1}{4};\\
					(q^{2\ell_0}\!-\!q\Delta_{\ell_0,q}'\!+\!\frac{q+1}{2})q^{m-2\ell_0}\!+\!2\ell_1\!+\!1, & {\rm if}~\ell_1q^{2\ell_0}+\frac{2q\Delta_{\ell_0,q}'-q+5}{4}\leq \delta \leq (\ell_1+1)q^{2\ell_0}\\
					&+\frac{2q\Delta_{\ell_0,q}'-q+1}{4}~(0\leq \ell_1<\frac{q-7}{4});\\
					(q^{2\ell_0}\!-\!q\Delta_{\ell_0,q}'\!+\!\frac{q+1}{2})q^{m-2\ell_0}\!+\!\frac{q-7}{2}, & {\rm if}~\frac{(q-7)q^{2\ell_0}+2q\Delta_{\ell_0,q}'-q+5}{4}\leq \delta \leq \frac{2\Delta_{\ell_0+1,q}-q+5}{4},
				\end{cases}
			\end{align*}
		\end{small}
		
		\noindent where $1\leq \ell_0\leq \frac{m-1}{2}$, $\Delta_{\ell_0,q}=\frac{(q-1)(q^{2\ell_0-1}+1)}{2(q+1)}$, $\Delta_{\ell_0,q}'=\frac{(q+3)(q^{2\ell_0-1}+1)}{2(q+1)}$, and $\Delta_{\ell_0+1,q}=\frac{(q-1)(q^{2\ell_0+1}+1)}{2(q+1)}$.
	\end{Lemma}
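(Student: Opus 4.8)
The plan is to follow the template already used in the proofs of Lemmas \ref{l3.1} and \ref{l3.2}, now carried out for general $q>3$. Fix one of the eight ranges of $\delta$ and let $I_1(\delta)$ be the corresponding candidate value. Two things must be established: first, that $I_1(\delta)\notin T^{\perp}$; and second, that $\Gamma:=\{I_1(\delta)+2,I_1(\delta)+4,\ldots,\phi_1\}\subseteq T^{\perp}$. Since every integer in sight is odd and $T=C_1^{(q,2n)}\cup C_3^{(q,2n)}\cup\cdots\cup C_{1+2(\delta-2)}^{(q,2n)}$, membership of an odd $a$ in $T^{\perp}$ is equivalent to the coset leader of $C_a^{(q,2n)}$ exceeding $1+2(\delta-2)$, i.e. to $[aq^t]_{2n}>1+2(\delta-2)$ for every $t$ with $0\leq t\leq 2m-1$. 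Because $I_1(\delta)$ is constant across the range while $T$ grows with $\delta$, the binding threshold for the inclusion $\Gamma\subseteq T^{\perp}$ is the one coming from the largest $\delta$ in the range.

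For the first claim I would factor $I_1(\delta)$ as (a small coset representative)$\times q^{s}$ for the appropriate power $s$ (namely $m-2\ell_0+1$ or $m-2\ell_0$, according to the case), where the small representative equals $1+2(\delta-2)$ evaluated at the left endpoint of the range. This places $I_1(\delta)$ in the coset $C_{1+2(\delta-2)}^{(q,2n)}\subseteq T$, so $I_1(\delta)\notin T^{\perp}$; the relevant small values are exactly the quantities $\Delta_{\ell_0,q}$, $\Delta_{\ell_0,q}'$ and $\Delta_{\ell_0+1,q}$, which enter as coset leaders modulo $q^{2\ell_0-1}+1$ or $q^{2\ell_0+1}+1$ in the spirit of Lemma \ref{l4.0}.

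For the second claim, fix $a\in\Gamma$ and write $a=I_1(\delta)+u$ with $u$ even and carrying a bounded $q$-adic expansion $u=\sum_j a_jq^j$, the top digit constrained by the width of $\Gamma$. The computation of $[aq^t]_{2n}$ proceeds by repeatedly invoking $q^m\equiv -1\ (\mathrm{mod}\ 2n)$, which reduces exponents and flips signs once $t\geq m$, so that $[aq^t]_{2n}=2n-[aq^{t-m}]_{2n}$ for $t\geq m$. I would partition $t\in\{0,1,\ldots,2m-1\}$ into the subranges dictated by where $t$ sits relative to $0$, the powers $m-2\ell_0$ or $m-2\ell_0+1$, $m$, and $2m-1$, exactly as in Cases 1--6 of Lemma \ref{l3.1}. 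In each subrange one derives a closed expression $M_t$ for $aq^t$ modulo $2n$, establishes matching bounds $0<M_t<2n$ (or, after a correction by $\pm 2n$, locates $[aq^t]_{2n}=M_t\pm 2n$), and checks the threshold inequality $[aq^t]_{2n}>1+2(\delta-2)$ on both ends.

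The hard part will be bookkeeping rather than any single estimate. With eight pieces, each splitting into six to eight subcases in $t$, the challenge is to keep the wraparound consistent: in every subcase one must decide whether the reduced value already lies in $[0,2n)$ or whether a multiple of $2n$ must be added or subtracted, and then verify the threshold on both sides. The digit constraints on $u$ together with its parity (which, as in Case 6 of Lemma \ref{l3.1}, rules out the degenerate boundary expansions $u=q^{m-2\ell_0}$ and forces the critical top digits) are precisely what keep the bounds strictly above $1+2(\delta-2)$ without ever dipping below it. Since the remaining seven ranges are handled by the identical mechanism with shifted base points, it suffices to present one representative case in full detail and to remark that the others follow the same way.
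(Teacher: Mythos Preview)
Your proposal is correct and follows essentially the same approach as the paper. The paper likewise treats only one representative case (the first), factors $I_1(\delta)$ as the small representative $1+2(\delta-2)$ times $q^{m-2\ell_0+1}$ to obtain $I_1(\delta)\notin T^{\perp}$, writes $a=I_1(\delta)+u$ with a bounded $q$-adic expansion of $u$, and then performs the case split on $t$ (there organized into four blocks, with a further odd/even subdivision inside the middle blocks and the reflection $[aq^t]_{2n}=2n-[aq^{t-m}]_{2n}$ handling $t\geq m$) to verify $[aq^t]_{2n}>1+2(\delta-2)$ throughout.
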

	
	\begin{proof}
		We only prove the first case, and the proof of other cases is similar. When $\delta=\frac{2\Delta_{\ell_0,q}-q+9}{4}+\ell_1$, $1+2(\delta-2)=\Delta_{\ell_0,q}-\frac{q-3}{2}+2\ell_1$. It is easy to see that $(\Delta_{\ell_0,q}-\frac{q-3}{2}+2\ell_1)q^{m-2\ell_0+1}\in C_{\Delta_{\ell_0,q}-\frac{q-3}{2}+2\ell_1}^{(q,2n)}\nsubseteq T^{\perp}=(1+2\mathbb{Z}_{2n})\setminus \big(C_1^{(q,2n)}\cup C_3^{(q,2n)}\cup \cdots\cup C_{1+2(\delta-2)}^{(q,2n)}\big)$. Now we are going to show that
		\begin{small}
			$$\Gamma\!:=\!\Big\{\!\big(\Delta_{\ell_0,q}-\frac{q\!-\!3}{2}+2\ell_1\big)q^{m-2\ell_0+1}+2,\big(\Delta_{\ell_0,q}-\frac{q\!-\!3}{2}+2\ell_1\big)q^{m-2\ell_0+1}+4,\cdots\!,\phi_1\Big\}\!\subseteq T^{\perp}.$$
		\end{small}
		For $a\in \Gamma$, $a=(\Delta_{\ell_0,q}-\frac{q-3}{2}+2\ell_1)q^{m-2\ell_0+1}+u$, where $u$ is an even integer satisfying $2\leq u\leq \phi_1-(\Delta_{\ell_0,q}-\frac{q-3}{2}+2\ell_1)q^{m-2\ell_0+1}=(\frac{q-5}{2}-2\ell_1)q^{m-2\ell_0+1}+\frac{q+1}{2}q^{m-2\ell_0}+\frac{(q-1)(q^{m-2\ell_0}+1)}{2(q+1)}$. Suppose $u=\sum_{j=0}^{m-2\ell_0+1}a_jq^j$, where $0\leq a_{m-2\ell_0+1}\leq \frac{q-5}{2}-2\ell_1$ and $0\leq a_j\leq q-1$ for $0\leq j\leq m-2\ell_0$. Then
		\begin{small}
			$$a=\frac{(q-1)(q^m-q^{m-2\ell_0+2})}{2(q+1)}+(2\ell_1+1)q^{m-2\ell_0+1}+\sum_{j=0}^{m-2\ell_0+1}a_jq^j.$$
		\end{small}
		To prove $a\in T^{\perp}$, we need to prove that $[aq^t]_{2n}>\Delta_{\ell_0,q}-\frac{q-3}{2}+2\ell_1$ for every integer $t$ with $0\leq t\leq 2m-1$. We organize the proof into the following four cases.

		{\bf Case 1.} $t=0$. It is obvious that $[aq^t]_{2n}=a>\Delta_{\ell_0,q}-\frac{q-3}{2}+2\ell_1$.
		
		{\bf Case 2.} $1\leq t\leq 2\ell_0-2$. Suppose $t$ is odd, then
		\begin{small}
			\begin{align*}
				aq^t&=\frac{q-1}{2}\Big(\frac{(q^t+1)q^{m-t}}{q+1}-\frac{q^{m-t}+q^{m-2\ell_0+2}}{q+1}\Big)q^t+(2\ell_1+1)q^{m-2\ell_0+1+t}+uq^t\\
				&\equiv \frac{q-1}{2}\Big(-\frac{q^t+1}{q+1}-\frac{q^m+q^{m-2\ell_0+2+t}}{q+1}\Big)+(2\ell_1+1)q^{m-2\ell_0+1+t}+uq^t\\
				&\equiv q^m+1-\frac{q-1}{2}\Big(\frac{q^m+1}{q+1}+\frac{q^{m-2\ell_0+2+t}+q^t}{q+1}\Big)+(2\ell_1+1)q^{m-2\ell_0+1+t}+uq^t\\
				&=\frac{(q+3)(q^m+1)}{2(q+1)}-\Big(\frac{(q-1)(q^{m-2\ell_0+2}+1)}{2(q+1)}-(2\ell_1+1)q^{m-2\ell_0+1}-u\Big)q^t\triangleq M_t~({\rm mod}~2n).
			\end{align*}
		\end{small}
		
		\noindent	It is easy to check that $0<M_t<\frac{(q+3)(q^m+1)}{2(q+1)}<2n$. Therefore,
		\begin{small}
			\begin{align*}
				[aq^t]_{2n}&=M_t\geq \frac{(q+3)(q^m+1)}{2(q+1)}-\Big(\frac{(q-1)(q^{m-2\ell_0+2}+1)}{2(q+1)}-q^{m-2\ell_0+1}-2\Big)q^{2\ell_0-3}\\
				&=\frac{(q+3)(q^m+1)}{2(q+1)}-\frac{(q-1)(q^{m-1}+q^{2\ell_0-3})}{2(q+1)}+q^{m-2}+2q^{2\ell_0-3}\\
				&>\frac{(q+3)(q^m+1)}{2(q+1)}-\frac{(q-1)(q^m+1)}{2(q+1)}=\frac{2(q^m+1)}{q+1}>\Delta_{\ell_0,q}-\frac{q-3}{2}+2\ell_1.
			\end{align*}
		\end{small}
		
		Suppose $t$ is even, then
		\begin{small}
			\begin{align*}
				aq^t&=\frac{q-1}{2}\Big(\frac{(q^t-1)q^{m-t}}{q+1}+\frac{q^{m-t}-q^{m-2\ell_0+2}}{q+1}\Big)q^t+(2\ell_1+1)q^{m-2\ell_0+1+t}+uq^t\\
				&\equiv \frac{q-1}{2}\Big(-\frac{q^t-1}{q+1}+\frac{q^m-q^{m-2\ell_0+2+t}}{q+1}\Big)+(2\ell_1+1)q^{m-2\ell_0+1+t}+uq^t\\
				&= \frac{(q-1)(q^m+1)}{2(q+1)}-\Big(\frac{(q-1)(q^{m-2\ell_0+2}+1)}{2(q+1)}-(2\ell_1+1)q^{m-2\ell_0+1}-u\Big)q^t\triangleq M_t~({\rm mod}~2n).
			\end{align*}
		\end{small}
		
		\noindent One can check that $0<M_t<\frac{(q-1)(q^m+1)}{2(q+1)}<2n$, thus
		\begin{small}
			\begin{align*}
				[aq^t]_{2n}&=M_t\geq \frac{(q-1)(q^m+1)}{2(q+1)}-\Big(\frac{(q-1)(q^{m-2\ell_0+2}+1)}{2(q+1)}-q^{m-2\ell_0+1}-2\Big)q^{2\ell_0-2}\\
				&=q^{m-1}+2q^{2\ell_0-2}-\frac{(q-1)(q^{2\ell_0-2}-1)}{2(q+1)}>q^{m-1}>\Delta_{\ell_0,q}-\frac{q-3}{2}+2\ell_1.
			\end{align*}
		\end{small}
		
		{\bf Case 3.} $2\ell_0-1\leq t\leq m-1$. In this case,
		\begin{small}
			\begin{align*}
				aq^t&=\Big(\frac{(q-1)(q^t-q^{t-2\ell_0+2})q^{m-t}}{2(q+1)}+(2\ell_1+1)q^{m-2\ell_0+1}+\sum_{j=0}^{m-t-1}a_jq^j+\sum_{j=m-t}^{m-2\ell_0+1}a_jq^j\Big)q^t\\
				&\equiv\! \sum_{j=0}^{m-t-1}\!a_jq^{j+t}-\frac{(q\!-\!1)(q^t\!-\!q^{t-2\ell_0+2})}{2(q+1)}-(2\ell_1\!+\!1)q^{t-2\ell_0+1}-\!\sum_{j=m-t}^{m-2\ell_0+1}\!a_jq^{j+t-m}\triangleq M_t~({\rm mod}~2n).
			\end{align*}
		\end{small}
		
		\noindent  If there is an integer $j$ with $0\leq j\leq m-t-1$ such that $a_j\neq 0$, then
		\begin{small}
			\begin{align*}
				M_t&\geq q^t-\frac{(q\!-\!1)(q^t\!-\!q^{t-2\ell_0+2})}{2(q+1)}-(2\ell_1\!+\!1)q^{t-2\ell_0+1}-(\frac{q\!-\!5}{2}-2\ell_1)q^{t-2\ell_0+1}-\sum_{j=m-t}^{m-2\ell_0}(q\!-\!1)q^{j+t-m}\\
				&=\big(q^{2\ell_0-1}-\frac{(q-1)(q^{2\ell_0-1}+1)}{2(q+1)}\big)q^{t-2\ell_0+1}+1>0,
			\end{align*}
		\end{small}
		
		\noindent 	and $M_t<\sum_{j=0}^{m-t-1}(q-1)q^{j+t}=q^m-q^t<2n$; consequently,
		\begin{small}
			$$	[aq^t]_{2n}=M_t\geq q^{2\ell_0-1}-\frac{(q-1)(q^{2\ell_0-1}+1)}{2(q+1)}+1>\frac{(q-1)(q^{2\ell_0-1}+1)}{2(q+1)}>\Delta_{\ell_0,q}-\frac{q-3}{2}+2\ell_1.$$
		\end{small}
		If $a_j=0$ for every integer $j$ with $0\leq j\leq m-t-1$, then 
		\begin{small}
			\begin{align*}
				[aq^t]_{2n}&=q^m+1-\frac{(q-1)(q^t-q^{t-2\ell_0+2})}{2(q+1)}-(2\ell_1+1)q^{t-2\ell_0+1}-\sum_{j=m-t}^{m-2\ell_0+1}a_jq^{j+t-m}\\
				&\geq q^m+1-\frac{(q-1)(q^t+q^{t-2\ell_0+1})}{2(q+1)}+1>q^m+1-\frac{(q-1)(q^m+1)}{2(q+1)}\\
				&=\frac{(q+3)(q^m+1)}{2(q+1)}>\Delta_{\ell_0,q}-\frac{q-3}{2}+2\ell_1.
			\end{align*}
		\end{small}
		
		{\bf Case 4.} $m\leq t\leq 2m-1$. Let $t'=t-m$, then $0\leq t'\leq m-1$ and $[aq^t]_{2n}=2n-[aq^{t'}]_{2n}$. It can be seen from Cases 1-3 that $[aq^t]_{2n}>\Delta_{\ell_0,q}-\frac{q-3}{2}+2\ell_1$. This completes the proof.
	\end{proof}
	
	\begin{Lemma}\label{L3.2}
		Suppose $q>3$ and $m\geq 3$ is odd. For $2\leq \delta< \frac{\phi_2+3}{2}$, let $I_2(\delta)$ be the odd integer such that $\{\phi_1,\phi_1+2,\cdots,I_2(\delta)-2\}\subseteq T^{\perp}$ and $I_2(\delta)\not\in T^{\perp}$. Then
		\begin{small}
			\begin{align*}
				I_2(\delta)\!=\!\begin{cases}
					(q-2\ell_1-1)q^{m-1}\!+\!1, & {\rm if}~\delta=\ell_1+2~(0\leq \ell_1\leq \frac{q-7}{4});\\
					\frac{q^m-q^{m-1}}{2}\!-\!2\ell_1, & {\rm if}~\ell_1q+\frac{q+5}{4}\leq \delta\leq (\ell_1+1)q+\frac{q+1}{4}~(0\leq \ell_1\leq \frac{q-7}{4});\\
					\frac{q^m-q^{m-1}-q+3}{2}, & {\rm if}~\delta=\frac{q^2-2q+5}{4};\\
					(q\Delta_{\ell_0,q}-\frac{q-3}{2})q^{m-2\ell_0}\!-\!2\ell_1, & {\rm if}~\ell_1q^{2\ell_0}+\frac{2q\Delta_{\ell_0,q}-q+9}{4}\leq \delta\leq (\ell_1+1)q^{2\ell_0}+\frac{2q\Delta_{\ell_0,q}-q+5}{4}\\
					&(0\leq \ell_1\leq \frac{q-7}{4});\\
					(q\Delta_{\ell_0,q}\!-\!\frac{q-3}{2})q^{m-2\ell_0}\!-\!\frac{q-3}{2}, & {\rm if}~\frac{(q-3)q^{2\ell_0}+2q\Delta_{\ell_0,q}-q+9}{4}\leq \delta\leq \frac{\Delta_{\ell_0+1,q}+1}{2};\\
					q^{m-2\ell_0-1}\Delta_{\ell_0+1,q}-2\ell_1, & {\rm if}~\ell_1q^{2\ell_0+1}+\frac{\Delta_{\ell_0+1,q}+3}{2}\leq \delta\leq (\ell_1+1)q^{2\ell_0+1}+\frac{\Delta_{\ell_0+1,q}+1}{2}\\
					&(0\leq \ell_1\leq \frac{q-7}{4});\\
					q^{m-2\ell_0-1}\Delta_{\ell_0+1,q}-\frac{q-3}{2}, & {\rm if}~\frac{(q-3)q^{2\ell_0+1}+2\Delta_{\ell_0+1,q}+6}{4}\leq \delta\leq \frac{2q\Delta_{\ell_0+1,q}-q+5}{4};\\
					\frac{q^{m+1}-q^m+q^2+3q}{2(q+1)}-2\ell_1, & {\rm if}~\ell_1q^{m-1}+\frac{q^m-q^{m-1}+7q+9}{4(q+1)}\leq \delta\leq (\ell_1+1)q^{m-1}\\
					&+\frac{q^m-q^{m-1}+3q+5}{4(q+1)}~(0\leq \ell_1<\frac{q-7}{4});\\
					\frac{q^{m+1}-q^m+9q+7}{2(q+1)},& {\rm if}~\frac{q^{m+1}-5q^m-8q^{m-1}+7q+9}{4(q+1)}\!\leq\! \delta\!\leq\! \frac{q^{m+1}-q^m-4q^{m-1}-4q^{m-2}+3q+1}{4(q+1)};\\
					\frac{q^{m+1}-q^m+5q+3}{2(q+1)}, & {\rm if}~\frac{q^{m+1}-q^m-4q^{m-1}-4q^{m-2}+7q+5}{4(q+1)}\leq \delta< \frac{\phi_2+3}{2},
				\end{cases}
			\end{align*}
		\end{small}
		
		\noindent where $1\leq \ell_0\leq \frac{m-3}{2}$, $\Delta_{\ell_0,q}=\frac{(q-1)(q^{2\ell_0-1}+1)}{2(q+1)}$ and $\Delta_{\ell_0+1,q}=\frac{(q-1)(q^{2\ell_0+1}+1)}{2(q+1)}$.
	\end{Lemma}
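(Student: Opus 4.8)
The plan is to use the same description of $T^{\perp}$ that underlies Lemma~\ref{L3.1}. Since $T=C_1^{(q,2n)}\cup C_3^{(q,2n)}\cup\cdots\cup C_{1+2(\delta-2)}^{(q,2n)}$ and $T^{\perp}=(1+2\mathbb{Z}_{2n})\setminus T$, an odd integer $a$ lies in $T^{\perp}$ if and only if the coset leader of $C_a^{(q,2n)}$ exceeds $1+2(\delta-2)=2\delta-3$, that is, if and only if $[aq^t]_{2n}>2\delta-3$ for every $t$ with $0\le t\le 2m-1$. Consequently $I_2(\delta)$ is exactly the smallest odd integer larger than $\phi_1$ whose coset leader is at most $2\delta-3$. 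I would verify the displayed formula case by case, giving one representative case in full detail and noting that the others follow by the same reasoning, precisely as was done for Lemma~\ref{L3.1}.

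In each case the argument splits into two parts. First I would show $I_2(\delta)\notin T^{\perp}$ by exhibiting a single exponent $t$ with $[I_2(\delta)q^t]_{2n}\le 2\delta-3$; the congruence $q^m\equiv -1~({\rm mod}~2n)$, coming from $2n=q^m+1$, is the main tool. For example, in the case $\delta=\ell_1+2$ with $I_2(\delta)=(q-2\ell_1-1)q^{m-1}+1$, multiplying by $q$ yields $(q-2\ell_1-1)q^m+q\equiv 2\ell_1+1~({\rm mod}~2n)$, and $2\ell_1+1=2\delta-3=1+2(\delta-2)$, so $I_2(\delta)$ shares a coset with $1+2(\delta-2)$ and hence belongs to $T$.

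The longer part is to prove that every odd integer $a$ in the range $\{\phi_1,\phi_1+2,\dots,I_2(\delta)-2\}$ does lie in $T^{\perp}$. I would write each such $a$ as $\phi_1+u$ with $u$ an even integer confined to an explicit interval, expand $u=\sum_j a_jq^j$ in base $q$ with prescribed digit bounds, and then compute $[aq^t]_{2n}$ explicitly. The exponent range would be partitioned into the blocks $t=0$, a small-$t$ block, a middle block of the form $2\ell_0-1\le t\le m-1$, and the reflected block $m\le t\le 2m-1$, where $[aq^t]_{2n}=2n-[aq^{t-m}]_{2n}$ reduces matters to the previous blocks. In each block I would bound $[aq^t]_{2n}$ from below and check that it strictly exceeds $2\delta-3$.

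The main obstacle is this uniform lower bound, which must hold simultaneously for every admissible digit pattern of $u$ and every $t$: the reductions modulo $2n$ produce borrow terms whose signs depend on the parity of $t$ and on whether the shifted high-order digits of $u$ vanish. Handling these carefully — separating the subcase in which those high-order digits do not all vanish from the subcase in which they do, where an extra copy of $2n=q^m+1$ must be restored, exactly as in Case~3 of Lemma~\ref{L3.1} — is where the computation is most delicate. The quantities $\Delta_{\ell_0,q}$ and $\Delta_{\ell_0+1,q}$, which are the analogues at the scales $2\ell_0-1$ and $2\ell_0+1$ of the second-largest coset leader $\delta_2$ in Lemma~\ref{l4.0}, appear as the natural thresholds that make the final inequalities tight, so I would organize the estimates to land precisely on these values.
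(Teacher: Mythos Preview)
Your proposal is correct and follows essentially the same approach as the paper: the paper proves one representative case (the fourth one, for the range $\ell_1q^{2\ell_0}+\frac{2q\Delta_{\ell_0,q}-q+9}{4}\le\delta\le(\ell_1+1)q^{2\ell_0}+\frac{2q\Delta_{\ell_0,q}-q+5}{4}$) by first exhibiting an exponent $t$ with $[I_2(\delta)q^t]_{2n}\le 2\delta-3$, then writing each $a\in\{\phi_1,\phi_1+2,\dots,I_2(\delta)-2\}$ as $\phi_1+u$, expanding $u$ in base $q$, and bounding $[aq^t]_{2n}$ from below across the blocks $t=0$, $1\le t\le 2\ell_0$, $2\ell_0+1\le t\le m-1$, and $m\le t\le 2m-1$ via the reflection. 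The only cosmetic difference is that the paper's dichotomy in the middle block is phrased as $0<M_t<2n$ versus $2n<M_t<4n$ (subtracting $2n$ in the latter) rather than as vanishing versus non-vanishing high digits, but this is the same bookkeeping you describe.
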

	\begin{proof}
		We are just going to prove the fourth case, as the others can be proved similarly.
		When $\delta$ is in the fourth case, we have $2\ell_1q^{2\ell_0}+q\Delta_{\ell_0,q}-\frac{q-3}{2}\leq 1+2(\delta-2)\leq f:=(2\ell_1+2)q^{2\ell_0}+q\Delta_{\ell_0,q}-\frac{q+1}{2}$. Then $(q\Delta_{\ell_0,q}-\frac{q-3}{2})q^{m-2\ell_0}-2\ell_1\notin T^{\perp}=(1+2\mathbb{Z}_{2n})\setminus \big(C_1^{(q,2n)}\cup C_3^{(q,2n)}\cup \cdots\cup C_{1+2(\delta-2)}^{(q,2n)}\big)$ as $\big((q\Delta_{\ell_0,q}-\frac{q-3}{2})q^{m-2\ell_0}-2\ell_1\big)q^{m+2\ell_0}=2\ell_1q^{2\ell_0}+q\Delta_{\ell_0,q}-\frac{q-3}{2}\notin T^{\perp}$. We next prove that
		\begin{small}
			$$\Gamma:=\Big\{\phi_1, \phi_1+2,\cdots, (q\Delta_{\ell_0,q}-\frac{q-3}{2})q^{m-2\ell_0}-2\ell_1-2\Big\}\subseteq T^{\perp}.$$
		\end{small}
		For $a\in \Gamma$, $a=\frac{(q-1)(q^m+1)}{2(q+1)}+u$, where $u$ is an even integer satisfying $0\leq u\leq (q\Delta_{\ell_0,q}-\frac{q-3}{2})q^{m-2\ell_0}-2\ell_1-2-\phi_1=\frac{q+1}{2}q^{m-2\ell_0-1}+\frac{q-3}{2}q^{m-2\ell_0-2}+\frac{(q+3)(q^{m-2\ell_0-2}+1)}{2(q+1)}-2\ell_1-3$. Suppose $u=\sum_{j=0}^{m-2\ell_0-1}a_jq^j$, where $0\leq a_{m-2\ell_0-1}\leq \frac{q+1}{2}$ and $0\leq a_j\leq q-1$ for $0\leq j\leq m-2\ell_0-2$. In order to prove $a\in T^{\perp}$, it suffices to prove $[aq^t]_{2n}>f$ for every integer $t$ with $0\leq t\leq 2m-1$. We divide the proof into four cases.
		
		{\bf Case 1.} $t=0$. Then $[aq^t]_{2n}=a\geq \frac{(q-1)(q^m+1)}{2(q+1)}>f$.
		
		{\bf Case 2.} $1\leq t\leq 2\ell_0$. Suppose $t$ is odd, then
		\begin{small}
			\begin{align*}
				aq^t=\Big(\frac{(q\!-\!1)(q^m\!+\!1)}{2(q+1)}+u\Big)q^t\equiv -\frac{(q\!-\!1)(q^m\!+\!1)}{2(q+1)}+uq^t\equiv \frac{(q\!+\!3)(q^m\!+\!1)}{2(q+1)}+uq^t\triangleq M_t~({\rm mod}~2n).
			\end{align*}
		\end{small}
		
		\noindent It is obvious that $M_t>0$, and
		\begin{small}
			\begin{align*}
				M_t&\leq \frac{(q+3)(q^m+1)}{2(q+1)}+\Big(\frac{(q+3)(q^{m-2\ell_0}+1)}{2(q+1)}-2\ell_1-3\Big)q^{2\ell_0-1}\\
				&=\frac{(q+3)q^{m-1}}{2}-(2\ell_1+3)q^{2\ell_0-1}+\frac{(q+3)(q^{2\ell_0-1}+1)}{2(q+1)}<\frac{(q+3)q^{m-1}}{2}<2n;
			\end{align*}
		\end{small}
		
		\noindent therefore, $[aq^t]_{2n}=M_t\geq \frac{(q+3)(q^m+1)}{2(q+1)}>f$. 
		
		Suppose $t$ is even, then
		\begin{small}
			$$aq^t=\Big(\frac{(q-1)(q^m+1)}{2(q+1)}+u\Big)q^t\equiv \frac{(q-1)(q^m+1)}{2(q+1)}+uq^t \triangleq M_t~({\rm mod}~2n).$$
		\end{small}
		Note that $M_t>0$, and
		\begin{small}
			\begin{align*}
				M_t&\leq \frac{(q-1)(q^m+1)}{2(q+1)}+\Big(\frac{(q+3)(q^{m-2\ell_0}+1)}{2(q+1)}-2\ell_1-3\Big)q^{2\ell_0}\\
				&=q^m-(2\ell_1+2)q^{2\ell_0}-\frac{(q-1)(q^{2\ell_0}-1)}{2(q+1)}<2n;
			\end{align*}
		\end{small}
		
		\noindent thus $[aq^t]_{2n}=M_t\geq \frac{(q-1)(q^m+1)}{2(q+1)}>f$.
		
		{\bf Case 3.} $2\ell_0+1\leq t\leq m-1$. Suppose $t$ is odd, then
		\begin{small}
			\begin{align*}
				aq^t&=\Big(\frac{(q-1)(q^m+1)}{2(q+1)}+\sum_{j=0}^{m-t-1}a_jq^j+\sum_{j=m-t}^{m-2\ell_0-1}a_jq^j\Big)q^t\\
				&\equiv \frac{(q+3)(q^m+1)}{2(q+1)}+\sum_{j=t}^{m-1}a_{j-t}q^j-\sum_{j=0}^{t-2\ell_0-1}a_{j+m-t}q^j\triangleq M_t~({\rm mod}~2n).
			\end{align*}
		\end{small}
		
		\noindent Direct calculations show that
		\begin{small}
			$$	M_t\geq \frac{(q+3)(q^m+1)}{2(q+1)}-\sum_{j=0}^{t-2\ell_0-1}(q-1)q^j\geq \frac{(q+3)(q^m+1)}{2(q+1)}-q^{m-2\ell_0-1}+1>0,~{\rm and}$$
			$$M_t\leq \frac{(q+3)(q^m+1)}{2(q+1)}+\sum_{j=t}^{m-1}(q-1)q^j\leq \frac{(q+3)(q^m+1)}{2(q+1)}+q^m-q^{2\ell_0+1}<4n.$$
		\end{small}
		If $0<M_t<2n$, then $[aq^t]_{2n}=M_t>f$. If $2n<M_t<4n$, then
		\begin{small}
			\begin{align*}
				[aq^t]_{2n}&=M_t-2n=\sum_{j=t}^{m-1}a_{j-t}q^j-\frac{(q-1)(q^m+1)}{2(q+1)}-\sum_{j=0}^{t-2\ell_0-1}a_{j+m-t}q^j\\
				&=\sum_{j=t}^{m-1}\Big(a_{j-t}+(-1)^{j+1}\frac{q-1}{2}\Big)q^j-\frac{(q-1)(q^t+1)}{2(q+1)}-\sum_{j=0}^{t-2\ell_0-1}a_{j+m-t}q^j\\
				&\geq q^t-\frac{(q-1)(q^t+1)}{2(q+1)}-\sum_{j=0}^{t-2\ell_0-1}a_{j+m-t}q^j=\frac{(q^{2\ell_0+1}+3q^{2\ell_0}-2q-2)q^{t-2\ell_0}+q+3}{2(q+1)}\\
				&\geq \frac{(q^{2\ell_0+1}+3q^{2\ell_0}-2q-2)q+q+3}{2(q+1)}=\frac{(q+3)(q^{2\ell_0+1}+1)}{2(q+1)}-q>f.
			\end{align*}
		\end{small}
		
		Suppose $t$ is even, then
		\begin{small}
			\begin{align*}
				aq^t&=\Big(\frac{(q-1)(q^m+1)}{2(q+1)}+\sum_{j=0}^{m-t-1}a_jq^j+\sum_{j=m-t}^{m-2\ell_0-1}a_jq^j\Big)q^t\\
				&\equiv \frac{(q-1)(q^m+1)}{2(q+1)}+\sum_{j=t}^{m-1}a_{j-t}q^j-\sum_{j=0}^{t-2\ell_0-1}a_{j+m-t}q^j\triangleq M_t~({\rm mod}~2n).
			\end{align*}
		\end{small}
		
		\noindent One can check that
		\begin{small}
			$$M_t\geq \frac{(q-1)(q^m+1)}{2(q+1)}-\sum_{j=0}^{t-2\ell_0-1}(q-1)q^j\geq \frac{(q-1)(q^m+1)}{2(q+1)}-q^{m-2\ell_0-1}+1>0, ~{\rm and}$$
			$$M_t\leq \frac{(q-1)(q^m+1)}{2(q+1)}+\sum_{j=t}^{m-1}(q-1)q^j\leq \frac{(q-1)(q^m+1)}{2(q+1)}+q^m-q^{2\ell_0+2}<4n.$$
		\end{small}
		If $0<M_t<2n$, then $[aq^t]_{2n}=M_t>f$. If $2n<M_t<4n$, then
		\begin{small}
			\begin{align*}
				[aq^t]_{2n}&=M_t-2n=\sum_{j=t}^{m-1}a_{j-t}q^j-\frac{(q+3)(q^m+1)}{2(q+1)}-\sum_{j=0}^{t-2\ell_0-1}a_{j+m-t}q^j\\
				&=\sum_{j=t}^{m-1}\Big(a_{j-t}+(-1)^{j+1}\frac{q+3}{2}\Big)q^j+\frac{(q+3)(q^t-1)}{2(q+1)}-\sum_{j=0}^{t-2\ell_0-1}a_{j+m-t}q^j.
			\end{align*}
		\end{small}
		
		\noindent  In this subcase, we assert that $\sum_{j=t}^{m-1}\big(a_{j-t}+(-1)^{j+1}\frac{q+3}{2}\big)q^j\geq 0$; otherwise, $[aq^t]_{2n}\leq -q^t+\frac{(q+3)(q^t-1)}{2(q+1)}=-\frac{(q-1)(q^t-1)}{2(q+1)}-1<0$, which is impossible. Consequently,
		\begin{small}
			\begin{align*}
				[aq^t]_{2n}&\geq \frac{(q+3)(q^t-1)}{2(q+1)}-\sum_{j=0}^{t-2\ell_0-1}(q-1)q^j=\frac{(q^{2\ell_0+3}+3q^{2\ell_0+2}-2q^3-2q^2)q^{t-2\ell_0-2}+q-1}{2(q+1)}\\
				&\geq \frac{q^{2\ell_0+3}+3q^{2\ell_0+2}-2q^3-2q^2+q-1}{2(q+1)}=\frac{(q+3)(q^{2\ell_0+2}+q)}{2(q+1)}-q^2-\frac{q+1}{2}>f.
			\end{align*}
		\end{small}
		
		{\bf Case 4.} $m\leq t\leq 2m-1$. Let $t'=t-m$, then $0\leq t'\leq m-1$ and $[aq^t]_{2n}=2n-[aq^{t'}]_{2n}$. It follows from Cases 1-3 that $[aq^t]_{2n}>f$. The proof is completed.
	\end{proof}
	
	Combining Lemmas \ref{l2.1}, \ref{L3.1} and \ref{L3.2}, we have the following result.
	\begin{Theorem}\label{T3.1}
		Suppose $q>3$ and $m\geq 3$ is odd. For $2\leq \delta<\frac{\phi_1+3}{2}$, we have
		\begin{align*}
			d(\mathcal{C}_{(q,n,-1,\delta,1)}^{\perp})\geq \begin{cases}
				\frac{I_2(\delta)-I_1(\delta)}{2}, & {\rm if}~2\leq \delta<\frac{\phi_2+3}{2};\\
				2, & {\rm if}~\frac{\phi_2+3}{2}\leq \delta<\frac{\phi_1+3}{2},
			\end{cases}
		\end{align*}
		where $I_1(\delta)$ and $I_2(\delta)$ are given by Lemmas \ref{L3.1} and \ref{L3.2}, respectively.
	\end{Theorem}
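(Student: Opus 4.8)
The plan is to deduce the bound directly from the two technical lemmas just established together with the BCH bound for constacyclic codes (Lemma \ref{l2.1}), in exact parallel with the argument of Theorem \ref{t3.1} for the case $q=3$. All of the genuine difficulty has already been packaged into the explicit formulas for $I_1(\delta)$ and $I_2(\delta)$; what remains is a short counting argument.

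First I would handle the range $2\leq \delta<\frac{\phi_2+3}{2}$. By the defining property of $I_1(\delta)$ in Lemma \ref{L3.1}, the odd integers $I_1(\delta)+2, I_1(\delta)+4,\dots,\phi_1$ all lie in $T^{\perp}$, and by the defining property of $I_2(\delta)$ in Lemma \ref{L3.2}, the odd integers $\phi_1,\phi_1+2,\dots,I_2(\delta)-2$ all lie in $T^{\perp}$. These two runs overlap exactly at $\phi_1$, so their union
$$\{I_1(\delta)+2,\,I_1(\delta)+4,\,\dots,\,I_2(\delta)-2\}\subseteq T^{\perp}$$
is a single block of consecutive odd integers, i.e. of elements of the form $1+2i$ for consecutive $i$. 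Counting gives exactly $\frac{I_2(\delta)-I_1(\delta)}{2}-1$ such elements. Since $\mathcal{C}_{(q,n,-1,\delta,1)}^{\perp}$ is a negacyclic code (so $r=2$) with defining set $T^{\perp}$, applying the BCH bound of Lemma \ref{l2.1} to this block of $\frac{I_2(\delta)-I_1(\delta)}{2}-1$ consecutive odd elements yields $d(\mathcal{C}_{(q,n,-1,\delta,1)}^{\perp})\geq \frac{I_2(\delta)-I_1(\delta)}{2}$, which is the first case.

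Next I would dispose of the range $\frac{\phi_2+3}{2}\leq \delta<\frac{\phi_1+3}{2}$. Here $\phi_2\leq 1+2(\delta-2)<\phi_1$, and because $\phi_1$ and $\phi_2$ are, respectively, the largest and second-largest odd coset leaders, there is no odd coset leader strictly between them; consequently $T$ contains every odd residue except the coset $C_{\phi_1}^{(q,2n)}$, so that $T^{\perp}=C_{\phi_1}^{(q,2n)}=\{\phi_1,\,2n-\phi_1\}$. This set contains no two consecutive odd integers, so the longest run in $T^{\perp}$ has length one, and the BCH bound gives only $d\geq 2$, establishing the second case.

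The main obstacle is not in this theorem but upstream: the real content---locating the first gap of $T^{\perp}$ on each side of $\phi_1$---is precisely what Lemmas \ref{L3.1} and \ref{L3.2} encode. Within the present proof the only point demanding care is the index bookkeeping: one must check that the two runs overlap at $\phi_1$ and not elsewhere, so that the concatenated block has exactly $\frac{I_2(\delta)-I_1(\delta)}{2}-1$ consecutive odd integers, which under Lemma \ref{l2.1} translates into the clean bound $\frac{I_2(\delta)-I_1(\delta)}{2}$ without an off-by-one error.
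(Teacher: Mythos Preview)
Your proposal is correct and follows essentially the same approach as the paper: the paper simply states that the theorem follows by combining Lemmas~\ref{l2.1}, \ref{L3.1} and \ref{L3.2}, and the explicit argument you give (concatenating the two runs at $\phi_1$, counting $\tfrac{I_2(\delta)-I_1(\delta)}{2}-1$ consecutive odd elements, and applying the BCH bound; then reducing to $T^{\perp}=C_{\phi_1}^{(q,2n)}$ in the second range) is exactly the parallel of the proof of Theorem~\ref{t3.1}.
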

	
	\begin{Example}{\rm
			Take $q=7$ and $m=3$. By Theorem \ref{T3.1}, the lower bound on the minimum distance of $\mathcal{C}_{(q,n,-1,2,1)}^{\perp}$ is $123$. By Magma, the true minimum distance of $\mathcal{C}_{(q,n,-1,2,1)}^{\perp}$ is $138$.}
	\end{Example}
	
	To show the condition for $\mathcal{C}_{(q,n,-1,\delta,1)}$ being a negacyclic dually-BCH code, we first give a key lemma.
	\begin{Lemma}\label{L3.3}
		Suppose $q>3$ and $m\geq 3$ is odd. For $2\leq \delta< \frac{\phi_2+3}{2}$, let $M(\delta)=T^{\perp}\setminus \big(C_{I_1(\delta)+2}^{(q,2n)}\cup C_{I_1(\delta)+4}^{(q,2n)}\cup \cdots\cup C_{I_2(\delta)-2}^{(q,2n)}\big)$. Then $\frac{q^m+1}{q+1}\in M(\delta)$ if $2\leq \delta < \frac{q^m+3q+4}{2(q+1)}$ and $\phi_2\in M(\delta)$ if $\frac{q^m+3q+4}{2(q+1)}\leq \delta< \frac{\phi_2+3}{2}$.
	\end{Lemma}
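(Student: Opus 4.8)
The plan is to exploit the description of $M(\delta)$ as the set of elements of $T^{\perp}$ whose $q$-cyclotomic coset avoids the block of cosets lying between $I_1(\delta)$ and $I_2(\delta)$. The odd integers strictly between $I_1(\delta)$ and $I_2(\delta)$ are exactly $I_1(\delta)+2,I_1(\delta)+4,\dots,I_2(\delta)-2$, so for $x\in 1+2\mathbb{Z}_{2n}$ one has $x\in C_{I_1(\delta)+2}^{(q,2n)}\cup\cdots\cup C_{I_2(\delta)-2}^{(q,2n)}$ if and only if $[xq^t]_{2n}\in\big(I_1(\delta),I_2(\delta)\big)$ for some $t$. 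Hence to put a given $x$ into $M(\delta)$ it suffices to check that $x\in T^{\perp}$ and that no conjugate $[xq^t]_{2n}$ with $0\le t\le 2m-1$ lies in the open interval $\big(I_1(\delta),I_2(\delta)\big)$. First I would record the monotonicity $I_1(\delta')\le I_1(\delta'')$ and $I_2(\delta')\ge I_2(\delta'')$ for $\delta'\le\delta''$, which is immediate because enlarging $\delta$ enlarges $T$ and shrinks $T^{\perp}$; this lets me replace the moving interval by the single widest interval $\big(I_1^{*},I_2^{*}\big)$, where $I_1^{*}$ and $I_2^{*}$ are the values of $I_1$ and $I_2$ at the left endpoint of each $\delta$-range, read off from Lemmas \ref{L3.1} and \ref{L3.2}. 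Avoiding $\big(I_1^{*},I_2^{*}\big)$ then forces avoidance of every sub-interval $\big(I_1(\delta),I_2(\delta)\big)$.

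For the first range $2\le\delta<\frac{q^m+3q+4}{2(q+1)}$ the argument is short. The key observation is that $x=\frac{q^m+1}{q+1}$ satisfies $xq\equiv -x\pmod{q^m+1}$, so its coset is just the two-element set $C_{x}^{(q,2n)}=\big\{\tfrac{q^m+1}{q+1},\tfrac{q(q^m+1)}{q+1}\big\}$ with leader $\frac{q^m+1}{q+1}$. Thus $x\in T^{\perp}$ amounts to $\frac{q^m+1}{q+1}>1+2(\delta-2)$, which is precisely the hypothesis $\delta<\frac{q^m+3q+4}{2(q+1)}$. Taking $I_1^{*}=q^{m-1}$ and $I_2^{*}=q^m-q^{m-1}+1$ (the values $I_1(2)$ and $I_2(2)$), I would verify that both coset elements lie outside $\big(I_1^{*},I_2^{*}\big)$; in fact both required inequalities collapse to the single elementary fact $\frac{q^m+1}{q+1}<q^{m-1}$, i.e. $1<q^{m-1}$, since $\frac{q(q^m+1)}{q+1}-(q^m-q^{m-1}+1)=q^{m-1}-\frac{q^m+1}{q+1}$. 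This finishes this case.

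For the second range $\frac{q^m+3q+4}{2(q+1)}\le\delta<\frac{\phi_2+3}{2}$ I take $x=\phi_2$. Here $\phi_2\in T^{\perp}$ is immediate, since $\phi_2$ is a coset leader and $\delta<\frac{\phi_2+3}{2}$ gives $\phi_2>1+2(\delta-2)$. Now the interval $\big(I_1^{*},I_2^{*}\big)$ is narrow and clusters just around $\phi_1$, its endpoints coming from the final cases of Lemmas \ref{L3.1} and \ref{L3.2}, so the remaining task is to show that the full coset of $\phi_2$ jumps over this short window. I would do this by the same kind of case analysis as in Lemma \ref{l3.3}: writing $\phi_2=\frac{(q-1)(q^m-2q^{m-2}-1)}{2(q+1)}$ together with its base-$q$ expansion, I compute $[\phi_2q^t]_{2n}$ for every $t$ with $0\le t\le 2m-1$, distinguishing the parity of $t$ and using $q^m\equiv-1\pmod{2n}$ to fold the range $m\le t\le 2m-1$ back onto $0\le t\le m-1$, and check in each case that $[\phi_2q^t]_{2n}\le I_1^{*}$ or $[\phi_2q^t]_{2n}\ge I_2^{*}$.

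The main obstacle is precisely this last case analysis. Unlike the first range, the coset of $\phi_2$ has $2m$ elements while the target interval is only a few units wide, so the estimates must be threaded carefully past the narrow neighbourhood of $\phi_1$; moreover, for $q>3$ the bounds $I_1^{*}$ and $I_2^{*}$ are governed by the intricate piecewise formulas of Lemmas \ref{L3.1} and \ref{L3.2}, and pinning down their exact values at the left endpoint of the range is itself delicate. The computations are of the same flavour as Cases II.1--II.6 in the proof of Lemma \ref{l3.3}, but the presence of the quantities $\Delta_{\ell_0,q}$ makes the inequalities more involved; this bookkeeping, rather than any conceptual difficulty, is where the weight of the proof lies.
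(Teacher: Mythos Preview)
Your proposal is correct and follows essentially the same approach as the paper: reduce to the widest interval via $\min I_1$ and $\max I_2$ over each $\delta$-range (the paper computes these directly from Lemmas \ref{L3.1} and \ref{L3.2} rather than invoking monotonicity, but the effect is identical), then verify for the two-element coset $\{\frac{q^m+1}{q+1},\frac{q(q^m+1)}{q+1}\}$ in the first range and carry out a $t$-by-$t$ case analysis of $[\phi_2 q^t]_{2n}$ in the second. One point you did not anticipate: the paper has to treat $q=7$ separately in the second range, because several branches of Lemmas \ref{L3.1} and \ref{L3.2} carry the constraint $0\le \ell_1<\frac{q-7}{4}$ and thus collapse when $q=7$, so the values $I_1^{*}$ and $I_2^{*}$ at the left endpoint come from a different line of the table; this is exactly the ``delicate'' bookkeeping you flagged, but it manifests as an actual case split rather than merely a messy inequality.
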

	
	\begin{proof}
		When $2\leq \delta<\frac{q^m+3q+4}{2(q+1)}$, $1\leq 1+2(\delta-2)<\frac{q^m+1}{q+1}$. We see from Lemmas \ref{L3.1} and \ref{L3.2} that ${\rm min}\{I_1(\delta)\mid 2\leq \delta<\frac{q^m+3q+4}{2(q+1)}\}=q^{m-1}$ and ${\rm max}\{I_2(\delta)\mid 2\leq \delta<\frac{q^m+3q+4}{2(q+1)}\}=q^m-q^{m-1}+1$. To prove $\frac{q^m+1}{q+1}\in M(\delta)$, we need to prove that $C_{\frac{q^m+1}{q+1}}^{(q,2n)}\subseteq [\frac{q^m+1}{q+1}, q^{m-1}]\cup [q^m-q^{m-1}+1, 2n-1]$. Note that $C_{\frac{q^m+1}{q+1}}^{(q,2n)}=\big\{\frac{q^m+1}{q+1}, \frac{q^m+q}{q+1}\big\}$. The result is then established.
		
		When $\frac{q^m+3q+4}{2(q+1)}\leq \delta< \frac{\phi_2+3}{2}$, $\frac{q^m+1}{q+1}\leq 1+2(\delta-2)<\phi_2$. Obviously, $\phi_2=\frac{q-1}{2}\big(q^{m-1}-q^{m-2}-\frac{q^{m-2}+1}{q+1}\big)\in T^{\perp}$. Suppose $q>7$. It follows from Lemma \ref{L3.1} that $f_1:={\rm min}\{I_1(\delta)\mid \frac{q^m+3q+4}{2(q+1)}\leq \delta< \frac{\phi_2+3}{2}\}=\frac{(q-1)(q^m+1)}{2(q+1)}-\frac{q-3}{2}$ as $\frac{2q\Delta_{\frac{m-1}{2},q}'-q+5}{4}\leq \frac{q^m+3q+4}{2(q+1)}\leq q^{m-1}+\frac{2q\Delta_{\frac{m-1}{2},q}'-q+1}{4}$, and it follows from Lemma \ref{L3.2} that $f2:={\rm max}\{I_2(\delta)\mid \frac{q^m+3q+4}{2(q+1)}\leq \delta< \frac{\phi_2+3}{2}\}=\frac{(q-1)(q^m+1)}{2(q+1)}+\frac{q+1}{2}$ as $\frac{q^m-q^{m-1}+7q+9}{4(q+1)}\leq \frac{q^m+3q+4}{2(q+1)}\leq q^{m-1}+\frac{q^m-q^{m-1}+3q+5}{4(q+1)}$. To prove $\phi_2\in M(\delta)$, it suffices to prove that either $[\phi_2\cdot q^t]_{2n}\leq f_1$ or $[\phi_2\cdot q^t]_{2n}\geq f_2$, where $0\leq t\leq 2m-1$. We have the following five cases.
		
		{\bf Case 1.} $t=0$. It is clear that $[\phi_2\cdot q^t]_{2n}=\phi_2<f_1$.
		
		{\bf Case 2.} $t=1$. Then
		\begin{small}
			\begin{align*}
				\phi_2\cdot q^t&\equiv \frac{q-1}{2}\Big(-1-q^{m-1}-\frac{q^{m-1}+q}{q+1}\Big)\equiv q^m+1-\frac{q-1}{2}\Big(q^{m-1}+\frac{q^{m-1}+q}{q+1}+1\Big)\\
				&=\frac{q^m+1}{2}+\frac{q^{m-1}-q^2}{q+1}+1\triangleq M_t~({\rm mod}~2n).
			\end{align*}
		\end{small}
		
		\noindent 	It is easy to see that $[\phi_2\cdot q^t]_{2n}=M_t>f_2$.
		
		{\bf Case 3.} $t=2$. Then
		\begin{small}
			\begin{align*}
				\phi_2\cdot q^t&\equiv \frac{q-1}{2}\Big(-q+1-\frac{q^m+q^2}{q+1}\Big)\equiv q^m+1-\frac{q-1}{2}\Big(\frac{q^m+q^2}{q+1}+q-1\Big)\\
				&=\frac{(q+3)(q^m+1)}{2(q+1)}-\frac{q^3+1}{q+1}+q\triangleq M_t~({\rm mod}~2n).
			\end{align*}
		\end{small}
		
		\noindent 	It follows that $[\phi_2\cdot q^t]_{2n}=M_t>f_2$.
		
		{\bf Case 4.} $3\leq t\leq m-1$. If $t$ is odd, we have
		\begin{small}
			\begin{align*}
				\phi_2\cdot q^t&=\frac{q-1}{2}\Big(q^{m-1}-q^{m-2}-\frac{(q^{t-2}+1)q^{m-t}}{q+1}+\frac{q^{m-t}-1}{q+1}\Big)q^t\\
				&\equiv \frac{q-1}{2}\Big(-q^{t-1}+q^{t-2}+\frac{q^{t-2}+1}{q+1}+\frac{q^m-q^t}{q+1}\Big)\\
				&=\frac{(q-1)(q^m+1)}{2(q+1)}-(q-1)^2\cdot q^{t-2}\triangleq M_t~({\rm mod}~2n).
			\end{align*}
		\end{small}
		
		\noindent Then $[\phi_2\cdot q^t]_{2n}=M_t\leq \frac{(q-1)(q^m+1)}{2(q+1)}-q(q-1)^2<f_1$.
		
		If $t$ is even, we have 
		\begin{small}
			\begin{align*}
				\phi_2\cdot q^t&=\frac{q-1}{2}\Big(q^{m-1}-q^{m-2}-\frac{(q^{t-2}-1)q^{m-t}}{q+1}-\frac{q^{m-t}+1}{q+1}\Big)q^t\\
				&\equiv \frac{q-1}{2}\Big(-q^{t-1}+q^{t-2}+\frac{q^{t-2}-1}{q+1}-\frac{q^m+q^t}{q+1}\Big)\\
				&\equiv q^m+1-\frac{q-1}{2}\Big(\frac{q^m+q^t}{q+1}-\frac{q^{t-2}-1}{q+1}+q^{t-1}-q^{t-2}\Big)\\
				&=\frac{(q+3)(q^m+1)}{2(q+1)}-(q-1)^2\cdot q^{t-2}\triangleq M_t~({\rm mod}~2n).
			\end{align*}
		\end{small}
		
		\noindent  Then $[\phi_2\cdot q^t]_{2n}=M_t\geq \frac{(q+3)(q^m+1)}{2(q+1)}-(q-1)^2\cdot q^{m-3}=\frac{q^m+1}{2}+q^{m-2}-\frac{q^{m-3}-1}{q+1}>f_2$.
		
		{\bf Case 5.} $m\leq t\leq 2m-1$. Let $t'=t-m$, then $0\leq t'\leq m-1$ and $[\phi_2\cdot q^t]_{2n}=2n-[\phi_2\cdot q^{t'}]_{2n}$.
		It can be seen from Cases 1-4 that either $[\phi_2\cdot q^t]_{2n}<f_1$ or $[\phi_2\cdot q^t]_{2n}>f_2$.
		
		When $q=7$, it follows from Lemmas \ref{L3.1} and \ref{L3.2} that $f_1':={\rm min}\{I_1(\delta)\mid \frac{q^m+3q+4}{2(q+1)}\leq \delta< \frac{\phi_2+3}{2}\}=\frac{(q-1)(q^m+1)}{2(q+1)}-\frac{q-1}{2}$ and $f2':={\rm max}\{I_2(\delta)\mid \frac{q^m+3q+4}{2(q+1)}\leq \delta< \frac{\phi_2+3}{2}\}=\frac{(q-1)(q^m+1)}{2(q+1)}+4$. The rest of the proof is similar to the case $q>7$.
	\end{proof}
	
	\begin{Theorem}\label{T3.2}
		Suppose $q>3$ and $m\geq 3$ is odd. For $2\leq \delta < \frac{\phi_1+3}{2}$, $\mathcal{C}_{(q,n,-1,\delta,1)}$ is a negacyclic dually-BCH code if and only if
		$$\frac{\phi_2+3}{2}\leq \delta < \frac{\phi_1+3}{2}.$$
	\end{Theorem}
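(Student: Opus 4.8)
The plan is to reduce the statement to the combinatorial structure of the defining set $T^{\perp}$, exactly as in the proof of Theorem \ref{t3.2}. For every $\delta$ in the stated range we have $1\leq 1+2(\delta-2)\leq \phi_1$, and since $\phi_1$ is the largest odd coset leader it follows that $\phi_1\in T^{\perp}$. Because $m$ is odd, $C_{\phi_1}^{(q,2n)}=\{\phi_1,2n-\phi_1\}$, and $C_j^{(q,2n)}=C_{2n-j}^{(q,2n)}$ for every odd $j$; consequently $\mathcal{C}_{(q,n,-1,\delta,1)}$ is a negacyclic dually-BCH code if and only if there exist odd integers $J_1\leq \phi_1\leq J_2$ with
$$T^{\perp}=C_{J_1}^{(q,2n)}\cup C_{J_1+2}^{(q,2n)}\cup\cdots\cup C_{\phi_1}^{(q,2n)}\cup\cdots\cup C_{J_2}^{(q,2n)}.$$
This is the criterion I would test against the two ranges of $\delta$.

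First I would establish sufficiency. When $\frac{\phi_2+3}{2}\leq \delta<\frac{\phi_1+3}{2}$ we have $\phi_2\leq 1+2(\delta-2)<\phi_1$. Since $\phi_1$ and $\phi_2$ are the two largest odd coset leaders modulo $2n$, every odd coset leader other than $\phi_1$ is at most $\phi_2$ and is therefore already contained in $T=C_1^{(q,2n)}\cup\cdots\cup C_{1+2(\delta-2)}^{(q,2n)}$. Hence $T^{\perp}=C_{\phi_1}^{(q,2n)}$, which is precisely the defining set of the non-narrow-sense negacyclic BCH code $\mathcal{C}_{(q,n,-1,2,\phi_1)}$ of designed distance $2$; so $\mathcal{C}_{(q,n,-1,\delta,1)}$ is dually-BCH.

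For necessity I would show that $\delta<\frac{\phi_2+3}{2}$ forces the criterion to fail, and here Lemma \ref{L3.3} carries the weight. By the definitions of $I_1(\delta)$ and $I_2(\delta)$ in Lemmas \ref{L3.1} and \ref{L3.2}, the odd integers $I_1(\delta)$ and $I_2(\delta)$ lie in $T$ with $I_1(\delta)<\phi_1<I_2(\delta)$, so they act as two-sided barriers: any block $C_{J_1}^{(q,2n)}\cup\cdots\cup C_{J_2}^{(q,2n)}$ with $J_1\leq \phi_1\leq J_2$ that is contained in $T^{\perp}$ must satisfy $I_1(\delta)+2\leq J_1$ and $J_2\leq I_2(\delta)-2$ (otherwise $I_1(\delta)$ or $I_2(\delta)$ would index one of the cosets, forcing it into $T^{\perp}$). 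Thus any such block is contained in $C_{I_1(\delta)+2}^{(q,2n)}\cup\cdots\cup C_{I_2(\delta)-2}^{(q,2n)}$. But Lemma \ref{L3.3} asserts that $M(\delta)=T^{\perp}\setminus\big(C_{I_1(\delta)+2}^{(q,2n)}\cup\cdots\cup C_{I_2(\delta)-2}^{(q,2n)}\big)\neq\emptyset$, since it contains $\frac{q^m+1}{q+1}$ or $\phi_2$; so no block can equal $T^{\perp}$, the criterion fails, and $\mathcal{C}_{(q,n,-1,\delta,1)}$ is not dually-BCH.

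The conceptual steps above are short, and I expect no genuine obstacle at the level of the theorem itself: it is a direct assembly of the criterion together with the non-emptiness of $M(\delta)$. The real difficulty is entirely absorbed into Lemma \ref{L3.3}, i.e. into verifying that $\frac{q^m+1}{q+1}$ (resp. $\phi_2$) survives in $M(\delta)$, which in turn rests on the explicit forms of $I_1(\delta)$ and $I_2(\delta)$ in Lemmas \ref{L3.1} and \ref{L3.2} and on the delicate case analysis of $[aq^t]_{2n}$ over all $0\leq t\leq 2m-1$. Since those lemmas are already available, the proof proceeds cleanly: set up the criterion, verify it on $\big[\frac{\phi_2+3}{2},\frac{\phi_1+3}{2}\big)$, and refute it below $\frac{\phi_2+3}{2}$ via Lemma \ref{L3.3}.
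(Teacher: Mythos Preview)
Your proposal is correct and follows essentially the same approach as the paper: the paper's own proof consists of the single sentence ``The proof is clear with the help of Lemma \ref{L3.3},'' and what you have written is precisely the unpacking of that sentence along the lines of the detailed argument given for the analogous Theorem \ref{t3.2}. Your treatment of both the sufficiency (where $T^{\perp}=C_{\phi_1}^{(q,2n)}$) and the necessity (where the barriers $I_1(\delta),I_2(\delta)$ force any candidate block to sit inside $C_{I_1(\delta)+2}^{(q,2n)}\cup\cdots\cup C_{I_2(\delta)-2}^{(q,2n)}$, contradicting $M(\delta)\neq\emptyset$) is exactly what the paper intends.
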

	\begin{proof}
		The proof is clear with the help of Lemma \ref{L3.3}.
	\end{proof}
	\begin{Example}{\rm 
			Take $q=7$ and $m=3$. By Magma, for $2\leq \delta<\frac{\phi_1+3}{2}=66$,  $\mathcal{C}_{(q,n,-1,\delta,1)}$ is a negacyclic dually-BCH code if and only if $63\leq \delta <66$, which coincides with Theorem \ref{T3.2}.}
	\end{Example}
	
	\subsection*{B.2. The subcase $m$ is even}
	In this subsetion, we consider the subcase $m$ is even. The following lemma will be needed later.
	\begin{Lemma}\label{L3.4}
		Suppose $q>3$ and $m\geq 2$ is even. For $2\leq \delta<\frac{\phi_2+3}{2}$, let $I(\delta)$ be the odd integer such that $I(\delta)\not\in T^{\perp}$ and $\{I(\delta)+2,I(\delta)+4,\cdots,\phi_1\}\subseteq T^{\perp}$. Then
		\begin{small}
			\begin{align*}
				I(\delta)=\begin{cases}
					(\frac{q^{2\ell_0-1}-q-2}{2}+2\ell_1)q^{m-2\ell_0+1}, & {\rm if}~\delta=\frac{q^{2\ell_0-1}-q+4}{4}+\ell_1~(1\leq \ell_0\leq \frac{m}{2},~1\leq \ell_1\leq \frac{q-3}{4});\\
					\frac{q^m-q^{m-2\ell_0+1}}{2}, & {\rm if}~\frac{q^{2\ell_0-1}+5}{4}\leq \delta\leq \frac{3q^{2\ell_0-1}+3}{4}~(1\leq \ell_0\leq \frac{m}{2});\\
					\frac{q^m-q^{m-2\ell_0+1}}{2}+2\ell_1, & {\rm if}~\frac{(4\ell_1-1)q^{2\ell_0-1}+7}{4}\leq \delta\leq \frac{(4\ell_1+3)q^{2\ell_0-1}+3}{4}\\
					&(1\leq \ell_0\leq \frac{m}{2},~1\leq \ell_1<\frac{q-3}{4});\\
					\frac{q^m-q^{m-2\ell_0+1}+q-3}{2}, & {\rm if}~\frac{(q-4)q^{2\ell_0-1}+7}{4}\leq \delta\leq \frac{q^{2\ell_0}-q+6}{4}~(1\leq \ell_0\leq \frac{m}{2});\\
					(\frac{q^{2\ell_0}-q}{2}+2\ell_1)q^{m-2\ell_0}, & {\rm if}~\delta=\frac{q^{2\ell_0}-q+6}{4}+\ell_1~(1\leq \ell_0<\frac{m}{2},~1\leq \ell_1\leq \frac{q-3}{4});\\
					\frac{q^m-q^{m-2\ell_0}+2}{2}+2\ell_1, & {\rm if}~\frac{(4\ell_1+1)q^{2\ell_0}+7}{4}\leq \delta\leq \frac{(4\ell_1+5)q^{2\ell_0}+3}{4}\\
					&(1\leq \ell_0<\frac{m}{2},~0\leq \ell_1<\frac{q-3}{4});\\
					\frac{q^m-q^{m-2\ell_0}+q-1}{2}, & {\rm if}~\frac{(q-2)q^{2\ell_0}+7}{4}\leq \delta\leq \frac{q^{2\ell_0+1}-q+4}{4}~(1\leq \ell_0<\frac{m}{2}).
				\end{cases}
			\end{align*}
		\end{small}
	\end{Lemma}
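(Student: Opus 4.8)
The plan is to follow the template already used in the proofs of Lemmas~\ref{l3.4} and \ref{L3.1}: fix one representative subcase, establish the formula there, and remark that the remaining subcases are treated in the same fashion with shifted breakpoints. Throughout I would exploit the defining relation $q^m\equiv -1~(\mathrm{mod}~2n)$, which forces $\mathrm{ord}_{2n}(q)\mid 2m$ and yields the reflection identity $[aq^{t}]_{2n}=2n-[aq^{t-m}]_{2n}$ for $m\le t\le 2m-1$ and every odd $a$. The basic observation reducing everything to an arithmetic check is that an odd $a$ lies in $T^{\perp}=(1+2\mathbb{Z}_{2n})\setminus T$ if and only if the coset $C_a^{(q,2n)}$ meets none of the base indices $\{1,3,\dots,1+2(\delta-2)\}$, i.e. if and only if $[aq^{t}]_{2n}>2\delta-3$ for every $0\le t\le 2m-1$. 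Hence proving the asserted value $I(\delta)=v$ splits into two tasks: (a) $v\notin T^{\perp}$, and (b) every odd $a$ with $v<a\le\phi_1$ satisfies $a\in T^{\perp}$.

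I would illustrate with the second subcase, where $\frac{q^{2\ell_0-1}+5}{4}\le\delta\le\frac{3q^{2\ell_0-1}+3}{4}$ and the claimed value is $v=\frac{q^m-q^{m-2\ell_0+1}}{2}$. For (a) I would factor
\begin{equation*}
v=\frac{q^{2\ell_0-1}-1}{2}\,q^{m-2\ell_0+1}
\end{equation*}
and note that, because $q\equiv 3~(\mathrm{mod}~4)$ and $2\ell_0-1$ is odd, the integer $L:=\frac{q^{2\ell_0-1}-1}{2}$ is odd, so $L\in 1+2\mathbb{Z}_{2n}$; since $L=[vq^{m+2\ell_0-1}]_{2n}\in C_v^{(q,2n)}$ and $L=2\delta_{\min}-3\le 2\delta-3$ over the whole range, the coset $C_v^{(q,2n)}$ meets the base set, whence $v\in T$ and $v\notin T^{\perp}$.

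For (b) I would write each $a$ in $\Gamma:=\{v+2,v+4,\dots,\phi_1\}$ as $a=v+u$ with $u$ even and $2\le u\le\phi_1-v=\frac{q^{m-2\ell_0+1}+1}{2}$, expand $u=\sum_{j=0}^{m-2\ell_0}a_jq^j$ with $0\le a_{m-2\ell_0}\le\frac{q-1}{2}$ and $0\le a_j\le q-1$ otherwise, and then show $[aq^{t}]_{2n}>2\delta_{\max}-3=\frac{3(q^{2\ell_0-1}-1)}{2}$ for every $t$. The reflection identity reduces this to $0\le t\le m-1$, which I would split at the breakpoint $t=2\ell_0-1$ (where $q^{m-2\ell_0+1+t}$ reaches $q^m$) into the ranges $t=0$, $1\le t\le 2\ell_0-1$, and $2\ell_0\le t\le m-1$, handling odd and even $t$ separately since multiplication by $q^{m}\equiv-1$ interchanges the two halves of the coset. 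In each range one reduces $aq^{t}$ modulo $2n$ using the geometric sums packaged inside $\frac{q^m-q^{m-2\ell_0+1}}{2}$, extracts an explicit representative $M_t$, locates it in $(0,2n)$ (or in $(2n,4n)$, subtracting $2n$), and finally estimates $M_t$ against the threshold using $0\le a_j\le q-1$.

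The main obstacle is the sheer bulk and bookkeeping of this analysis rather than any single hard idea: there are seven subcases for $I(\delta)$, and within each the closed form of $[aq^{t}]_{2n}$ changes every time $t$ crosses one of the breakpoints and every time its parity flips. The most delicate subcases are those in which the digits of $u$ can push $M_t$ past $2n$, so that the actual representative is $M_t-2n$; there one must, exactly as in the proof of Lemma~\ref{L3.2}, argue that an alternating-digit sum of the shape $\sum_j\big(a_{j-t}\pm\tfrac{q\mp1}{2}\big)q^j$ is nonnegative before the lower bound goes through. Once the representative subcase is organized this way, the remaining six subcases follow by the same reductions with the breakpoints relocated according to the parity of $m$ and the value of $\ell_0$.
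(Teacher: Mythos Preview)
Your proposal is correct and follows essentially the same approach as the paper, which in fact omits the proof of Lemma~\ref{L3.4} entirely with the remark that it is ``similar to that of Lemmas~\ref{L3.1} or~\ref{L3.2}''. Your outline---verify $v\notin T^{\perp}$ by exhibiting a small element of $C_v^{(q,2n)}$, then check $[aq^t]_{2n}>2\delta-3$ for each $a\in(v,\phi_1]$ by splitting on the range and parity of $t$ and using $q^m\equiv-1$---is exactly the template those lemmas instantiate.
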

	\begin{proof}
		The proof is similar to that of Lemmas \ref{L3.1} or \ref{L3.2}, and thus we omit it here.
	\end{proof}
	
	By virtue of Lemmas \ref{l2.1} and \ref{L3.4}, we obtain the following lower bound.
	\begin{Theorem}\label{T3.3}
		Suppose $q>3$ and $m\geq 2$ is even. For $2\leq \delta<\frac{\phi_1+3}{2}$, we have 
		\begin{align*}
			d(\mathcal{C}_{(q,n,-1,\delta,1)}^{\perp})\geq \begin{cases}
				n-I(\delta), & {\rm if}~2\leq \delta<\frac{\phi_2+3}{2};\\
				2, & {\rm if}~\frac{\phi_2+3}{2}\leq \delta<\frac{\phi_1+3}{2},
			\end{cases}
		\end{align*}
		where $I(\delta)$ is given by Lemma \ref{L3.4}.
	\end{Theorem}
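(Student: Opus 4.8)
The plan is to mirror the proof of Theorem~\ref{t3.3}: once Lemma~\ref{L3.4} pins down the cut-off $I(\delta)$, the bound drops out of the constacyclic BCH bound (Lemma~\ref{l2.1}). The one structural fact I will lean on throughout is that $q^m\equiv -1~({\rm mod}~2n)$, so $2n-j\equiv jq^m~({\rm mod}~2n)$ and hence $C_j^{(q,2n)}=C_{2n-j}^{(q,2n)}$ for every $j\in 1+2\mathbb{Z}_{2n}$; consequently $T^{\perp}=(1+2\mathbb{Z}_{2n})\setminus T$ is invariant under the negation $j\mapsto 2n-j$. Since $m$ is even, the largest odd coset leader is $\phi_1=\frac{q^m+1}{2}=n$ with $C_{\phi_1}^{(q,2n)}=\{\phi_1\}$, so $\phi_1$ is fixed by this negation.

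For $2\leq\delta<\frac{\phi_2+3}{2}$ I would start from the one-sided run $\{I(\delta)+2,I(\delta)+4,\cdots,\phi_1\}\subseteq T^{\perp}$ furnished by Lemma~\ref{L3.4}. Reflecting it through the negation and using $2n-\phi_1=\phi_1$ shows that its mirror image $\{\phi_1,\phi_1+2,\cdots,2n-I(\delta)-2\}$ also lies in $T^{\perp}$, so that
$$\{I(\delta)+2,I(\delta)+4,\cdots,2n-I(\delta)-2\}\subseteq T^{\perp}.$$
This is a block of $n-I(\delta)-1$ consecutive terms of the progression $1+2i$, and Lemma~\ref{l2.1} (with $r=2$) then gives $d(\mathcal{C}_{(q,n,-1,\delta,1)}^{\perp})\geq (n-I(\delta)-1)+1=n-I(\delta)$, which is the first case.

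For $\frac{\phi_2+3}{2}\leq\delta<\frac{\phi_1+3}{2}$ I would first note that $\phi_2\leq 1+2(\delta-2)<\phi_1$. Because $\phi_1$ and $\phi_2$ are the two largest odd coset leaders, the defining set $T=C_1^{(q,2n)}\cup C_3^{(q,2n)}\cup\cdots\cup C_{1+2(\delta-2)}^{(q,2n)}$ already absorbs every odd cyclotomic coset whose leader is at most $\phi_2$, i.e.\ every odd coset except $C_{\phi_1}^{(q,2n)}$; hence $T^{\perp}=(1+2\mathbb{Z}_{2n})\setminus T=\{\phi_1\}$. A defining set consisting of a single residue supplies a trivial run of length one, so Lemma~\ref{l2.1} yields $d(\mathcal{C}_{(q,n,-1,\delta,1)}^{\perp})\geq 2$, finishing the second case.

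The visible argument is short precisely because Lemma~\ref{L3.4} carries the load; the two points I would check carefully are the symmetric extension of the run across the self-conjugate leader $\phi_1$ (this is what upgrades the one-sided run of Lemma~\ref{L3.4} to the full length $n-I(\delta)-1$) and the exact identification $T^{\perp}=\{\phi_1\}$ in the upper range, which relies on $\phi_1,\phi_2$ being the first two largest odd coset leaders so that no other coset survives. The genuine obstacle lies one level down, inside Lemma~\ref{L3.4} itself: determining $I(\delta)$ amounts to verifying $[aq^t]_{2n}>1+2(\delta-2)$ for all $0\leq t\leq 2m-1$, case by case on $t$, exactly as carried out in Lemmas~\ref{L3.1} and~\ref{L3.2}; granting that lemma, the theorem presents no further difficulty.
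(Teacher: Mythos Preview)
Your proposal is correct and matches the paper's approach essentially verbatim: the paper's (implicit) proof of Theorem~\ref{T3.3} is the same symmetric-run argument used for Theorem~\ref{t3.3}, extending $\{I(\delta)+2,\dots,\phi_1\}$ to $\{I(\delta)+2,\dots,2n-I(\delta)-2\}$ via $C_j^{(q,2n)}=C_{2n-j}^{(q,2n)}$ and then invoking Lemma~\ref{l2.1}, while the upper range reduces to $T^{\perp}=C_{\phi_1}^{(q,2n)}=\{\phi_1\}$ exactly as you describe.
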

	
	\begin{Example}{\rm
			Take $q=7$ and $m=2$. By Theorem \ref{T3.3}, the lower bounds on the minimum distances of $\mathcal{C}_{(q,n,-1,2,1)}^{\perp}$ and $\mathcal{C}_{(q,n,-1,6,1)}^{\perp}$ are $18$ and $4$, respectively. By Magma, the true minimum distances of $\mathcal{C}_{(q,n,-1,2,1)}^{\perp}$ and $\mathcal{C}_{(q,n,-1,6,1)}^{\perp}$ are $19$ and $6$, respectively.}
	\end{Example}
	
	We next present the conditions for $\mathcal{C}_{(q,n,-1,\delta,1)}$ to be a negacyclic dually-BCH code. The cases $m=2$ and $m>2$ should be discussed seperately.
	\begin{Lemma}\label{L3.5}
		Suppose $q>3$ and $m=2$. For $3\leq \delta<\frac{\phi_2+3}{2}$, let  $M(\delta)=T^{\perp}\setminus \big(C_{I(\delta)+2}^{(q,2n)}\cup C_{I(\delta)+4}^{(q,2n)}\cup \cdots\cup C_{\phi_1}^{(q,2n)}\big)$. 
		
		{\rm (1)} If $q=7$, then $\phi_2 \in M(\delta)$.
		
		{\rm (2)} If $q>7$, then $2q+3\in M(\delta)$ if $3\leq \delta<q+3$ and $\phi_2\in M(\delta)$ if $q+3\leq \delta<\frac{\phi_2+3}{2}$.
	\end{Lemma}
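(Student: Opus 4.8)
The plan is to reuse the certification mechanism from Lemmas \ref{l3.3} and \ref{l3.5}: to show that an element $a$ lies in $M(\delta)$ it suffices to verify two things, namely that $a\in T^{\perp}$ and that the coset $C_a^{(q,2n)}$ is disjoint from the consecutive block $\{I(\delta)+2,I(\delta)+4,\dots,\phi_1\}$. Since $T=C_1^{(q,2n)}\cup\cdots\cup C_{1+2(\delta-2)}^{(q,2n)}$, membership $a\in T^{\perp}$ is equivalent to the coset leader of $C_a^{(q,2n)}$ exceeding $2\delta-3$, so it is enough to take $\delta$ as large as the range permits. For the disjointness I would first note that $I(\delta)$ is non-decreasing in $\delta$ (visible from the piecewise description in Lemma \ref{L3.4}), so the block $\{I(\delta)+2,\dots,\phi_1\}$ is largest at the left endpoint of each $\delta$-range; hence it suffices to prove $C_a^{(q,2n)}\cap[\,\min I(\delta)+2,\ \phi_1\,]=\emptyset$, the minimum being taken over the relevant range.

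The decisive simplification for $m=2$ is that $2n=q^2+1$ and $q^2\equiv-1\pmod{2n}$, so every $q$-cyclotomic coset modulo $2n$ has at most four elements, namely $a\mapsto[aq]_{2n}\mapsto[-a]_{2n}\mapsto[-aq]_{2n}$. I would compute the two relevant cosets outright; a short reduction gives
$$C_{2q+3}^{(q,2n)}=\{\,2q+3,\ 3q-2,\ q^2-2q-2,\ q^2-3q+3\,\}$$
and
$$C_{\phi_2}^{(q,2n)}=\Big\{\,\tfrac{q^2-2q-1}{2},\ \tfrac{q^2-2q+3}{2},\ \tfrac{q^2+2q-1}{2},\ \tfrac{q^2+2q+3}{2}\,\Big\},$$
where $\phi_1=\frac{q^2+1}{2}$ and $\phi_2=\frac{q^2-2q-1}{2}$. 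In each case the coset leader is the first listed entry ($2q+3$ for $q>5$, and $\phi_2$), which immediately yields the $T^{\perp}$-membership: $2q+3\in T^{\perp}$ for $\delta\le q+2$, and $\phi_2\in T^{\perp}$ for $\delta<\frac{\phi_2+3}{2}$.

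Next I would read the needed minima off Lemma \ref{L3.4}, using that for $m=2$ only the pieces with $\ell_0=1$ survive: $\min\{I(\delta):3\le\delta<q+3\}=3q$ and $\min\{I(\delta):q+3\le\delta<\frac{\phi_2+3}{2}\}=\frac{q^2-q+4}{2}$. It then remains to check, for each coset, that two of its four entries fall at or below this minimum while the other two exceed $\phi_1$. Indeed $2q+3$ and $3q-2$ are both $\le 3q$, whereas $q^2-2q-2$ and $q^2-3q+3$ both exceed $\phi_1$ once $q>5$ (the differences factor as $\frac{(q-5)(q+1)}{2}$ and $\frac{(q-1)(q-5)}{2}$); likewise $\frac{q^2-2q-1}{2},\frac{q^2-2q+3}{2}$ are both $\le\frac{q^2-q+4}{2}$ while $\frac{q^2+2q-1}{2},\frac{q^2+2q+3}{2}$ exceed $\phi_1$ by $q-1$ and $q+1$. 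This establishes part (2). Part (1) is the degenerate case: solving $2q+3=\phi_2$ gives $(q-7)(q+1)=0$, so exactly when $q=7$ the two candidates coincide, the range $q+3\le\delta<\frac{\phi_2+3}{2}$ collapses to the empty set, and the single claim $\phi_2\in M(\delta)$ covers all of $3\le\delta<\frac{\phi_2+3}{2}$ (here $\min I(\delta)=21$ and $C_{17}^{(q,2n)}=\{17,19,31,33\}$, checked directly).

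The main obstacle is bookkeeping rather than ideas: correctly extracting $\min I(\delta)$ from the intricate piecewise formula of Lemma \ref{L3.4} and confirming that $I(\delta)$ is genuinely monotone on each range, together with pinning down precisely where the $q=7$ degeneracy forces the separate statement. Once the two four-element cosets and the two minima are in hand, the remaining comparisons are elementary and hold for every admissible prime power $q\equiv3\pmod 4$ with $q>7$ (equivalently $q\ge 11$).
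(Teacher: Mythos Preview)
Your proposal is correct and follows the same template the paper uses in its analogous lemmas (e.g.\ Lemma~\ref{l4.2} and Lemma~\ref{l3.3}); since the paper omits this proof as ``routine and easy,'' your argument is exactly the kind of verification the authors had in mind. The explicit four-element coset computations, the extraction of $\min I(\delta)=3q$ and $\min I(\delta)=\frac{q^2-q+4}{2}$ from Lemma~\ref{L3.4} via monotonicity, and the observation that $q=7$ is precisely where $2q+3=\phi_2$ and the second $\delta$-range collapses, are all correct and complete the argument.
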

	\begin{proof}
		The proof is routine and easy, and thus omitted here.
	\end{proof}
	
	\begin{Theorem}\label{T3.4}
		Suppose $q>3$ and $m=2$. For $2\leq \delta < \frac{\phi_1+3}{2}$, $\mathcal{C}_{(q,n,-1,\delta,1)}$ is a negacyclic dually-BCH code if and only if
		$$\delta=2~{\rm or}~\frac{\phi_2+3}{2}\leq \delta < \frac{\phi_1+3}{2}.$$
	\end{Theorem}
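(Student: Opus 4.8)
The plan is to reduce the problem, exactly as in the proof of Theorem~\ref{t3.4}, to a purely combinatorial question about the defining set $T^\perp$. Since $m=2$ is even, $\phi_1=\frac{q^m+1}{2}=n$ is the largest odd coset leader, $C_{\phi_1}^{(q,2n)}=\{\phi_1\}$ is a singleton, and $C_j^{(q,2n)}=C_{2n-j}^{(q,2n)}$ for every $j\in 1+2\mathbb{Z}_{2n}$. Consequently $\mathcal{C}_{(q,n,-1,\delta,1)}$ is a negacyclic dually-BCH code if and only if there is an odd integer $J\le \phi_1$ with $T^\perp=C_J^{(q,2n)}\cup C_{J+2}^{(q,2n)}\cup\cdots\cup C_{\phi_1}^{(q,2n)}$. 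I would then verify this condition on each of the three $\delta$-ranges $\delta=2$, $3\le\delta<\frac{\phi_2+3}{2}$, and $\frac{\phi_2+3}{2}\le\delta<\frac{\phi_1+3}{2}$.

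For $\delta=2$ I would prove the code is dually-BCH by showing $T^\perp=C_{q+2}^{(q,2n)}\cup C_{q+4}^{(q,2n)}\cup\cdots\cup C_{\phi_1}^{(q,2n)}$, so that $\mathcal{C}_{(q,n,-1,2,1)}^\perp$ is a negacyclic BCH code whose defining set starts at $q+2$. Here $T=C_1^{(q,2n)}=\{1,q,q^2,q^2-q+1\}$ because $q^2\equiv -1$ and $q^4\equiv 1$ modulo $2n=q^2+1$. The inclusion $\{q+2,q+4,\dots,\phi_1\}\subseteq T^\perp$ follows from $I(2)=q$ in Lemma~\ref{L3.4}, and $C_1^{(q,2n)}$ is disjoint from the union since none of $1,q,q^2,q^2-q+1$ lies in the odd interval $[q+2,\phi_1]$. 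The reverse inclusion is the crux: for an odd coset leader $a\notin C_1^{(q,2n)}$ with $a<q+2$, hence $a\in\{3,5,\dots,q-2\}$, I would exhibit an element of $C_a^{(q,2n)}=\{a,aq,2n-a,2n-aq\}$ lying in $[q+2,\phi_1]$, distinguishing $a\le\frac{q-1}{2}$ (where $aq\in[q+2,\phi_1]$) from $\frac{q+1}{2}\le a\le q-2$ (where $2n-aq\in[q+2,\phi_1]$).

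For $3\le\delta<\frac{\phi_2+3}{2}$ I would argue by contradiction to show the code is not dually-BCH. By definition of $I(\delta)$ we have $I(\delta)\notin T^\perp$ while $\{I(\delta)+2,\dots,\phi_1\}\subseteq T^\perp$; if $T^\perp=C_J^{(q,2n)}\cup\cdots\cup C_{\phi_1}^{(q,2n)}$ for some odd $J\le\phi_1$, then $J\le I(\delta)$ would place the odd index $I(\delta)$ into the union and hence into $T^\perp$, a contradiction, so $J\ge I(\delta)+2$ and $T^\perp\subseteq C_{I(\delta)+2}^{(q,2n)}\cup\cdots\cup C_{\phi_1}^{(q,2n)}$, forcing $M(\delta)=\emptyset$. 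This contradicts Lemma~\ref{L3.5}, which supplies an explicit element ($2q+3$ or $\phi_2$) of $M(\delta)$ throughout this range.

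Finally, for $\frac{\phi_2+3}{2}\le\delta<\frac{\phi_1+3}{2}$ I would observe that $\phi_2\le 1+2(\delta-2)<\phi_1$, and since there is no odd coset leader strictly between $\phi_2$ and $\phi_1$, every odd index $1+2i\le 1+2(\delta-2)$ lies in a coset with leader $\le\phi_2$; thus $T$ is the union of all cosets with leader $\le\phi_2$ and $T^\perp=C_{\phi_1}^{(q,2n)}$, so $J=\phi_1$ works and $\mathcal{C}_{(q,n,-1,\delta,1)}^\perp=\mathcal{C}_{(q,n,-1,2,\phi_1)}$ is BCH. I expect the $\delta=2$ covering argument to be the main obstacle, as it is the only step requiring a direct case analysis of the cyclotomic cosets modulo $q^2+1$ rather than a clean appeal to Lemmas~\ref{L3.4} and~\ref{L3.5}.
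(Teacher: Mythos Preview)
Your proposal is correct and follows essentially the same route as the paper: the same reduction to the existence of an odd $J\le\phi_1$ with $T^\perp=C_J^{(q,2n)}\cup\cdots\cup C_{\phi_1}^{(q,2n)}$, the same use of Lemma~\ref{L3.5} for $3\le\delta<\frac{\phi_2+3}{2}$, and the same identification $T^\perp=C_{\phi_1}^{(q,2n)}$ for $\frac{\phi_2+3}{2}\le\delta<\frac{\phi_1+3}{2}$.

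The only place you diverge slightly is the $\delta=2$ covering argument. You aim to place, for each odd $a\in\{3,\dots,q-2\}$, some element of $C_a^{(q,2n)}$ into the \emph{half} interval $[q+2,\phi_1]$, which forces your split into $a\le\frac{q-1}{2}$ (use $aq$) versus $a\ge\frac{q+1}{2}$ (use $2n-aq$). The paper instead writes the target union symmetrically as $C_{q+2}^{(q,2n)}\cup\cdots\cup C_{q^2-q-1}^{(q,2n)}$ and notes in one line that $3\le a\le q-2$ implies $aq\in[3q,\,q^2-2q]\subset[q+2,\,q^2-q-1]$, so no case distinction is needed. Both arguments are valid; the paper's is just a touch shorter.
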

	
	\begin{proof}
		If $m=2$, then $n=\frac{q^2+1}{2}$. When $\delta=2$,  $T^{\perp}=(1+2\mathbb{Z}_{2n})\setminus C_1^{(q,2n)}$ and by Lemma \ref{L3.4}, $I(\delta)=q$. We claim that $T^{\perp}=C_{q+2}^{(q,2n)}\cup C_{q+4}^{(q,2n)}\cup\cdots \cup C_{\frac{q^2+1}{2}}^{(q,2n)}\cup\cdots \cup C_{q^2-q-3}^{(q,2n)}\cup C_{q^2-q-1}^{(q,2n)}$ and $\mathcal{C}_{(q,n,-1,\delta,1)}^{\perp}=\mathcal{C}_{(q,n,-1,\frac{q^2-2q+1}{2},q+2)}$ is a negacyclic BCH code with designed distance $\frac{q^2-2q+1}{2}$ with respect to $\beta$. To prove this, we need to prove that if $a\in T^{\perp}$, then $a\in C_{q+2}^{(q,2n)}\cup C_{q+4}^{(q,2n)}\cup\cdots \cup C_{q^2-q-1}^{(q,2n)}$. Note that $C_1^{(q,2n)}=\{1,q,q^2-q+1,q^2\}$ and that $C_j^{(q,2n)}=C_{2n-j}^{(q,2n)}$ for $j\in 1+2\mathbb{Z}_{2n}$, thus we just need consider odd integer $a$ with $3\leq a\leq q-2$ or $q+2\leq a\leq \frac{q^2+1}{2}$. If $q+2\leq a\leq \frac{q^2+1}{2}$, this holds trivially. If $3\leq a\leq q-2$, then $3q\leq aq\leq q^2-2q$; the result also holds. The rest of the proof is straightforward from Lemma \ref{L3.5}.
	\end{proof}	
	
	\begin{Example}{\rm 
			Take $q=7$ and $m=2$. By Magma, for $2\leq \delta<\frac{\phi_1+3}{2}=14$,  $\mathcal{C}_{(q,n,-1,\delta,1)}$ is a negacyclic dually-BCH code if and only if $\delta=2$ or $10\leq \delta <14$, which coincides with Theorem \ref{T3.4}.}
	\end{Example}
	
	For the case $m>2$, using the similar methods as before, we get the following lemma.
	\begin{Lemma}\label{L3.6}
		Suppose $q>3$ and $m>2$ is even. For $2\leq \delta<\frac{\phi_2+3}{2}$, let  $M(\delta)=T^{\perp}\setminus \big(C_{I(\delta)+2}^{(q,2n)}\cup C_{I(\delta)+4}^{(q,2n)}\cup \cdots\cup C_{\phi_1}^{(q,2n)}\big)$. Then
		
		{\rm (1)} $\frac{q^{m-1}+1}{q+1}\in M(\delta)$ if $2\leq \delta <\frac{q^{m-1}+3q+4}{2(q+1)}$;
		
		{\rm (2)} $\phi_3\in M(\delta)$ if $\frac{q^{m-1}+3q+4}{2(q+1)}\leq \delta<\frac{\phi_3+3}{2}$;
		
		{\rm (3)} $\phi_2\in M(\delta)$ if $\frac{\phi_3+3}{2}\leq \delta<\frac{\phi_2+3}{2}$.
	\end{Lemma}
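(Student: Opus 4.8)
The plan is to follow the template established in the proofs of Lemmas \ref{l3.3}, \ref{L3.3} and \ref{l3.5}, treating the three claimed elements $a\in\big\{\frac{q^{m-1}+1}{q+1},\,\phi_3,\,\phi_2\big\}$ one at a time. For a fixed $\delta$ in the relevant range, membership $a\in M(\delta)$ is equivalent to the two conditions: (i) $a\in T^{\perp}$, and (ii) $C_a^{(q,2n)}\cap\{I(\delta)+2,I(\delta)+4,\cdots,\phi_1\}=\emptyset$. Since $C_{\phi_1}^{(q,2n)}=\{\phi_1\}$ is a singleton with $\phi_1=n$ when $m$ is even, the forbidden block $\{I(\delta)+2,\cdots,\phi_1\}$ has the fixed upper endpoint $n$, and its left endpoint is weakly increasing in $\delta$; hence the interval with the smallest left endpoint contains all the others, so that $\bigcup_{\delta}[I(\delta)+2,n]=\big[\,\min_\delta I(\delta)+2,\;n\,\big]$. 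Consequently (ii) holds for every $\delta$ in a given range if and only if $C_a^{(q,2n)}\cap\big[\min_\delta I(\delta)+2,\;n\big]=\emptyset$, and the value $\min_\delta I(\delta)$ over each of the three ranges can be read off directly from the explicit formula of Lemma \ref{L3.4}.

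For condition (i) I would verify that each $a$ is the coset leader of $C_a^{(q,2n)}$ and that $a>1+2(\delta-2)$ throughout the stated range, so that $a\notin T=C_1^{(q,2n)}\cup\cdots\cup C_{1+2(\delta-2)}^{(q,2n)}$; the range bounds $\delta<\frac{\phi_3+3}{2}$ and $\delta<\frac{\phi_2+3}{2}$ (together with the equivalent bound $1+2(\delta-2)<\frac{q^{m-1}+1}{q+1}$ for the first element, which is exactly $\delta<\frac{q^{m-1}+3q+4}{2(q+1)}$) are arranged precisely to guarantee this. For the first element, Lemma \ref{L3.4} gives $\min_\delta I(\delta)=q^{m-1}$ on $2\le\delta<\frac{q^{m-1}+3q+4}{2(q+1)}$, so it remains to show $C_{\frac{q^{m-1}+1}{q+1}}^{(q,2n)}\cap[q^{m-1}+2,n]=\emptyset$. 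Using $q^m\equiv-1\pmod{2n}$ to reduce $\big[\frac{q^{m-1}+1}{q+1}q^t\big]_{2n}$ for $0\le t\le 2m-1$, I would split on the range of $t$ exactly as in the six cases of Lemma \ref{l3.3}, showing each residue is either $\le q^{m-1}$ or $>n$.

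For the remaining two elements the scheme is identical: I would determine $\min_\delta I(\delta)$ from Lemma \ref{L3.4} at the left endpoints of $\big[\frac{q^{m-1}+3q+4}{2(q+1)},\frac{\phi_3+3}{2}\big)$ and $\big[\frac{\phi_3+3}{2},\frac{\phi_2+3}{2}\big)$, substitute the closed forms $\phi_3=\frac{q^m-2q^{m-1}-2q+1}{2}$ and $\phi_2=\frac{q^m-2q^{m-1}-1}{2}$, and verify that $C_{\phi_3}^{(q,2n)}$ and $C_{\phi_2}^{(q,2n)}$ avoid the respective intervals $[\min_\delta I(\delta)+2,n]$. A useful simplification throughout is the reflection identity $[aq^{t+m}]_{2n}=2n-[aq^{t}]_{2n}$ coming from $q^m\equiv-1$, which lets me restrict the case analysis to $0\le t\le m-1$ and then recover $m\le t\le 2m-1$ by symmetry (a residue $<n$ reflects to a residue $>n$, and hence stays outside the interval automatically). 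I expect the main obstacle to be precisely this residue bookkeeping for $\phi_2$ and $\phi_3$: unlike $\frac{q^{m-1}+1}{q+1}$, these have dense base-$q$ digit patterns, so the reductions $[\,\phi_i q^t\,]_{2n}$ branch according to the parity of $t$ and occasionally require a further correction by $\pm 2n$, mirroring the delicate estimates in Cases 1--5 of Lemma \ref{L3.3}; the estimates must be tight enough to separate the two alternatives $\le\min_\delta I(\delta)$ and $>n$ cleanly, which is exactly where the hypotheses $q>3$ and $m>2$ enter.
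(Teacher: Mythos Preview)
Your plan is exactly the template the paper intends: it omits the proof of this lemma with the remark ``using the similar methods as before'', and those methods are precisely the ones from Lemmas \ref{l3.3}, \ref{L3.3} and \ref{L3.5} that you cite. The reduction to $\min_\delta I(\delta)$ via monotonicity, the reading of that minimum from Lemma \ref{L3.4} (in particular $\min I(\delta)=q^{m-1}$ on the first range, attained at $\delta=2$ via the case $\ell_0=\ell_1=1$), and the verification that the chosen witness is a coset leader with $a>1+2(\delta-2)$ are all correct and match the paper's pattern.

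One point needs sharpening. Your reflection shortcut as stated is not sufficient: you assert that a residue $<n$ reflects to one $>n$ and ``hence stays outside the interval automatically'', but the problematic direction is the other one. If for some $t\in[0,m-1]$ you only establish $[aq^t]_{2n}>n$, then $[aq^{t+m}]_{2n}=2n-[aq^t]_{2n}<n$ may still lie in $[I(\delta)+2,\,n]$ unless you have the stronger bound $[aq^t]_{2n}\ge 2n-I(\delta)$. Since $C_a$ is closed under $x\mapsto 2n-x$, the condition $C_a\cap[I(\delta)+2,n]=\emptyset$ is equivalent to $C_a\subseteq[1,I(\delta)]\cup[2n-I(\delta),2n-1]$; it is this \emph{symmetric} condition that propagates by reflection from $t\in[0,m-1]$ to $t\in[m,2m-1]$. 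The fix is simply to verify, for $t\in[0,m-1]$, that $[aq^t]_{2n}\le I(\delta)$ or $[aq^t]_{2n}\ge 2n-I(\delta)$ (equivalently, run through all $0\le t\le 2m-1$ and use reflection only as a computational device, which is what the paper does in the parallel Lemma \ref{l4.3}). With that adjustment your outline goes through.
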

	
	\begin{Theorem}\label{T3.5}
		Suppose $q>3$ and $m>2$ is even. For $2\leq \delta < \frac{\phi_1+3}{2}$, $\mathcal{C}_{(q,n,-1,\delta,1)}$ is a negacyclic dually-BCH code if and only if
		$$\frac{\phi_2+3}{2}\leq \delta < \frac{\phi_1+3}{2}.$$
	\end{Theorem}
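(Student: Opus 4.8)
The plan is to mirror the proof of Theorem \ref{T3.2} for the odd case, with Lemma \ref{L3.6} playing the role that Lemma \ref{L3.3} plays there. First I would record the structural facts that pin down the shape of the dual defining set. Since $m$ is even, $\phi_1=\frac{q^m+1}{2}=n$, so $C_{\phi_1}^{(q,2n)}=\{\phi_1\}$ is a singleton that is self-paired under $j\mapsto 2n-j$, while $C_j^{(q,2n)}=C_{2n-j}^{(q,2n)}$ for every odd $j$ and $\phi_1\in T^{\perp}$. Because $\phi_1$ is the largest odd coset leader and its coset is self-paired, any BCH presentation of $\mathcal{C}_{(q,n,-1,\delta,1)}^{\perp}$ with respect to $\beta$ must have its defining set run up to and terminate at $\phi_1$. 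Hence $\mathcal{C}_{(q,n,-1,\delta,1)}$ is a negacyclic dually-BCH code if and only if there is an odd integer $J\leq \phi_1$ with $T^{\perp}=C_J^{(q,2n)}\cup C_{J+2}^{(q,2n)}\cup\cdots\cup C_{\phi_1}^{(q,2n)}$.

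Next I would prove the ``only if'' direction by ruling out this shape for all $2\leq\delta<\frac{\phi_2+3}{2}$. By Lemma \ref{L3.4}, $I(\delta)$ is the first odd integer lying below the top run, so $\{I(\delta)+2,\dots,\phi_1\}\subseteq T^{\perp}$ while $I(\delta)\notin T^{\perp}$; consequently any $J$ realising the BCH shape is forced to equal $I(\delta)+2$, giving $T^{\perp}=C_{I(\delta)+2}^{(q,2n)}\cup\cdots\cup C_{\phi_1}^{(q,2n)}$. But Lemma \ref{L3.6} exhibits an explicit element of $M(\delta)=T^{\perp}\setminus\big(C_{I(\delta)+2}^{(q,2n)}\cup\cdots\cup C_{\phi_1}^{(q,2n)}\big)$ in each of its three subranges --- namely $\frac{q^{m-1}+1}{q+1}$ when $2\leq\delta<\frac{q^{m-1}+3q+4}{2(q+1)}$, then $\phi_3$, and then $\phi_2$ --- so $M(\delta)\neq\emptyset$ throughout $2\leq\delta<\frac{\phi_2+3}{2}$. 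This contradicts $T^{\perp}=C_{I(\delta)+2}^{(q,2n)}\cup\cdots\cup C_{\phi_1}^{(q,2n)}$, so no admissible $J$ exists and $\mathcal{C}_{(q,n,-1,\delta,1)}$ is not dually-BCH in this range.

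For the ``if'' direction I would treat $\frac{\phi_2+3}{2}\leq\delta<\frac{\phi_1+3}{2}$ directly. Here $\phi_2\leq 1+2(\delta-2)<\phi_1$, and since there is no odd coset leader strictly between $\phi_2$ and $\phi_1$, increasing $\delta$ over this interval adds no new coset to $T$; thus $T$ consists of exactly the odd-leader cosets up to $\phi_2$ and $T^{\perp}=C_{\phi_1}^{(q,2n)}$. Consequently $\mathcal{C}_{(q,n,-1,\delta,1)}^{\perp}=\mathcal{C}_{(q,n,-1,2,\phi_1)}$ is a negacyclic BCH code of designed distance $2$ with respect to $\beta$, so $\mathcal{C}_{(q,n,-1,\delta,1)}$ is dually-BCH; together with the previous paragraph this gives the stated equivalence.

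The main obstacle is not the deduction above, which is short once its inputs are granted, but Lemma \ref{L3.6}, whose proof the paper omits. Establishing $M(\delta)\neq\emptyset$ amounts to showing that each distinguished element $a\in\{\frac{q^{m-1}+1}{q+1},\phi_3,\phi_2\}$ fails to lie in the top run $C_{I(\delta)+2}^{(q,2n)}\cup\cdots\cup C_{\phi_1}^{(q,2n)}$; following the template of Lemma \ref{L3.3}, this requires computing $[a\,q^{t}]_{2n}$ for every $t$ with $0\leq t\leq 2m-1$ and checking that each conjugate falls either at or below $\min\{I(\delta)\}$ or at or above $\max\{I(\delta)+2\}$ over the relevant $\delta$-subrange. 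I expect the parity-of-$t$ bookkeeping and the reduction modulo $2n$ (splitting $0<M_t<2n$ from $2n<M_t<4n$) to be the delicate steps, exactly as in the odd case.
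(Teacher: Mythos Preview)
Your proposal is correct and follows essentially the same approach as the paper: the paper's own proof of Theorem \ref{T3.5} is the single sentence ``The desired result follows directly from Lemma \ref{L3.6},'' and you have filled in precisely the details that this sentence suppresses, in the same spirit as the proofs of Theorems \ref{t3.4} and \ref{T3.2}. Your observation that $C_{\phi_1}^{(q,2n)}=\{\phi_1\}$ forces the BCH shape $T^{\perp}=C_J^{(q,2n)}\cup\cdots\cup C_{\phi_1}^{(q,2n)}$, your use of $I(\delta)$ to pin $J\geq I(\delta)+2$, and your appeal to Lemma \ref{L3.6} to contradict that containment are all exactly what the one-line proof is implicitly invoking; the ``if'' direction via $T^{\perp}=C_{\phi_1}^{(q,2n)}$ when $\frac{\phi_2+3}{2}\leq\delta<\frac{\phi_1+3}{2}$ is likewise the intended argument.
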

	\begin{proof}
		The desired result follows directly from Lemma \ref{L3.6}.
	\end{proof}
	\begin{Example}{\rm 
			Take $q=7$ and $m=4$. By Magma, for $2\leq \delta<\frac{\phi_1+3}{2}=602$,  $\mathcal{C}_{(q,n,-1,\delta,1)}$ is a negacyclic dually-BCH code if and only if $430 \leq \delta <602$, which coincides with Theorem \ref{T3.5}.}
	\end{Example}
	
	\section{Conclusion}
	The main contributions of this paper are as follows.
	\begin{itemize}
		\item[(1)] Some lower bounds on the minimum distances of the dual codes of narrow-sense cyclic BCH codes of length $q^m+1$ over $\mathbb{F}_q$ were developed, where $q $ is an odd prime power (see Theorem \ref{t4.1}). Some examples showed that these lower bounds are the true minimum distances of the dual codes.
		
		\item[(2)] Sufficient and necessary conditions for the even-like subcodes of narrow-sense cyclic BCH codes of length $q^m+1$ over $\mathbb{F}_q$ being dually-BCH were given, where $q$ is an odd prime power and $m$ is odd or $m\equiv 2~({\rm mod}~4)$ (see Theorems \ref{t4.2} and \ref{t4.3}).
		
		\item[(3)] Some lower bounds on the minimum distances of the dual codes of narrow-sense negacyclic BCH codes of length $\frac{q^m+1}{2}$ over $\mathbb{F}_q$ were established, where $q\equiv 3~({\rm mod}~4)$ (see Theorems \ref{t3.1}, \ref{t3.3}, \ref{T3.1} and \ref{T3.3}). Some examples showed that these lower bounds are very close to the true minimum distances of the dual codes.
		
		\item[(4)] The concept of negacyclic dually-BCH codes was proposed. Sufficient and necessary conditions for narrow-sense negacyclic BCH codes of length $\frac{q^m+1}{2}$ over $\mathbb{F}_q$ being dually-BCH were presented, where $q\equiv 3~({\rm mod}~4)$ (see Theorems \ref{t3.2}, \ref{t3.4}, \ref{T3.2}, \ref{T3.4} and \ref{T3.5}).

	\end{itemize}
	
	The case $q$ is an odd prime power and $m\equiv 0~({\rm mod}~4)$ in part (2) and the case $q\equiv 1~({\rm mod}~4)$ in parts (3) and (4) were not considered in this paper. The interested reader is invited to solve these problems.

\end{document}